\pgfplotsset{compat=1.5}
\newtheorem{theorem}{Theorem}[section]
\newtheorem{corollary}[theorem]{Corollary}
\newtheorem{lemma}[theorem]{Lemma}
\newtheorem{definition}[theorem]{Definition}
\newenvironment{proofof}[1]{\begin{trivlist} \item {\bf Proof
#1:~~}}
  {\qed\end{trivlist}}
\newcommand{\namedref}[2]{\hyperref[#2]{#1~\ref*{#2}}}
\newcommand{\thmlab}[1]{\label{thm:#1}}
\newcommand{\thmref}[1]{\namedref{Theorem}{thm:#1}}
\newcommand{\corlab}[1]{\label{cor:#1}}
\newcommand{\corref}[1]{\namedref{Corollary}{cor:#1}}
\newcommand{\seclab}[1]{\label{sec:#1}}
\newcommand{\secref}[1]{\namedref{Section}{sec:#1}}
\newcommand{\applab}[1]{\label{app:#1}}
\newcommand{\appref}[1]{\namedref{Appendix}{app:#1}}
\newcommand{\vsum}{\mathrm{sum}}
\newcommand{\prob}{\mathrm{prob}}
\def \AugInd    {\mdef{\mathsf{AugInd}}}
\renewcommand{\O}[1]{\ensuremath{\mathcal{O}\left(#1\right)}}
\newcommand{\eps}{\varepsilon}
\def \calC    {\mdef{\mathcal{C}}}
\newcommand{\mdef}[1]{{\ensuremath{#1}}\xspace}  
\DeclareMathOperator*{\polylog}{polylog}
\DeclareMathOperator*{\poly}{poly}
\DeclareMathOperator*{\supp}{supp}
\DeclareMathOperator*{\cost}{cost}
\DeclareMathOperator*{\dist}{dist}
\DeclareMathOperator*{\ring}{ring}
\newcommand{\ProblemName}[1]{\textsc{#1}}
\newcommand{\kMedian}{\ProblemName{$k$-Median}\xspace}
\newcommand{\kzC}{\ProblemName{$(k, z)$-Clustering}\xspace}
\newcommand{\ignore}[1]{}
\newif\ifnotes\notestrue 
\newcommand{\samson}[1]{\textcolor{blue}{{\bf (Samson:} {#1}{\bf ) }} \marginpar{\tiny\bf
             \begin{minipage}[t]{0.5in}
               \raggedright S:
            \end{minipage}}}
\newcommand{\david}[1]{\textcolor{purple}{{\bf (David:} {#1}{\bf ) }} \marginpar{\tiny\bf
             \begin{minipage}[t]{0.5in}
               \raggedright D:
            \end{minipage}}} 
\newcommand{\samson}[1]{}
\newcommand{\david}[1]{}
\renewcommand*{\@fnsymbol}[1]{\textcolor{mahogany}{\ensuremath{\ifcase#1\or *\or \dagger\or \ddagger\or
 \mathsection\or \triangledown\or \mathparagraph\or \|\or **\or \dagger\dagger
   \or \ddagger\ddagger \else\@ctrerr\fi}}}
\providecommand{\email}[1]{\href{mailto:#1}{\nolinkurl{#1}\xspace}}
\definecolor{mahogany}{rgb}{0.75, 0.25, 0.0}
\definecolor{darkblue}{rgb}{0.0, 0.0, 0.55}
\definecolor{darkpastelgreen}{rgb}{0.01, 0.75, 0.24}
\definecolor{darkgreen}{rgb}{0.0, 0.2, 0.13}
\definecolor{darkgoldenrod}{rgb}{0.72, 0.53, 0.04}
\definecolor{darkred}{rgb}{0.55, 0.0, 0.0}
\definecolor{forestgreenweb}{rgb}{0.13, 0.55, 0.13}
\definecolor{greencss}{rgb}{0.0, 0.5, 0.0}
\definecolor{bleudefrance}{rgb}{0.19, 0.55, 0.91}
\newcommand{\gencost}{\mathrm{cost}'}
\newcommand{\Cfar}{C^{\mathrm{far}}}
\newcommand{\diam}{\ensuremath{\textnormal{diam}}\xspace}
\title{Fair Clustering in the Sliding Window Model}
\author{
\hspace{0.5in}
Vincent Cohen-Addad\thanks{Google Research NYC. 
\texttt{cohenaddad@google.com}}
\and
Shaofeng H.-C. Jiang\thanks{Peking University. 
\texttt{shaofeng.jiang@pku.edu.cn}}
\and
Qiaoyuan Yang\thanks{Peking University. 
\texttt{qiaoyuanyang@stu.pku.edu.cn}}
\hspace{0.5in}
\and
Yubo Zhang\thanks{Peking University. 
\texttt{zhangyubo18@stu.pku.edu.cn}}
\and
Samson Zhou\thanks{Texas A\&M University.
\texttt{samsonzhou@gmail.com}}
}
\begin{document}

\maketitle

\allowdisplaybreaks

\begin{abstract}
We study streaming algorithms for proportionally fair clustering, a notion originally suggested by \cite{Chierichetti0LV17}, in the sliding window model. We show that although there exist efficient streaming algorithms in the insertion-only model, surprisingly no algorithm can achieve finite multiplicative ratio without violating the fairness constraint in the sliding window. Hence, the problem of fair clustering is a rare separation between the insertion-only streaming model and the sliding window model. On the other hand, we show that if the fairness constraint is relaxed by a multiplicative $(1+\varepsilon)$ factor, there exists a $(1 + \varepsilon)$-approximate sliding window algorithm that uses $\text{poly}(k\varepsilon^{-1}\log n)$ space. This achieves essentially the best parameters (up to degree in the polynomial) provided the aforementioned lower bound. We also implement a number of empirical evaluations on real datasets to complement our theoretical results. 
\end{abstract}

\section{Introduction}
\label{sec:intro}

Clustering is a fundamental technique used to identify meaningful patterns and structures within data. 
Typically, the objective of clustering involves grouping similar data points together into distinct clusters. 
Due to extensive research conducted over the years, clustering has become a well-established and deeply understood field. 
In the standard notion of clustering, the input is a set $P$ of $n$ points in a space equipped with metric $\dist$, a cluster parameter $k>0$, and an exponent $z>0$ that is a positive integer and the objective is to minimize the quantity $\min_{C, \Gamma: |C|=k}\sum_{p\in P}\min_{c\in C}\dist(p,c)^z$. 
The problem is called $(k,z)$-clustering when the metric space is $\mathbb{R}^d$ and $\dist$ is the Euclidean distance and in particular, the problem is called $k$-median for $z=1$ and $k$-means for $z=2$. 
Note that the center set $C$ implicitly partitions $P$ by grouping together the points $p\in P$ closest to each center $c\in C$. 

\textbf{Fair clustering.} When applied to user data, traditional clustering methods may 
produce biased outputs, leading to discriminatory outcomes that can perpetuate social inequalities in the downstream applications~\cite{DworkHPRZ12,ChhabraMM21}. 
For example, clustering algorithms used in hiring or lending tasks may inadvertently discriminate against certain groups based on factors such as race, gender, or socioeconomic status. 
This is particularly problematic in domains where clustering is used to make decisions with significant consequences, such as healthcare, education, and finance. 
Hence, there has been a large line of active work focusing on various notions of fairness for clustering; we defer an overview of this work to  \secref{sec:related}. 

In $(\alpha,\beta)$-fair clustering, each item $p\in P$ is associated with a subpopulation (often also called a group) $j\in[\ell]$ across $\ell$ possible subpopulations and additional constraints $0\le\alpha_j\le\beta_j\le 1$ for each $i\in[\ell]$. 
Then the goal is to minimize the clustering objective over all possible clusters where each cluster has at least $\alpha_j$ fraction and no more than $\beta_j$ fraction of items in group $L$. 
Thus, these assignment constraints are designed to capture fairness by ensuring that no subpopulation is under-represented or over-represented in each cluster. 

\textbf{The sliding window model.} 
The sliding window model~\cite{DatarGIM02}, which focuses on analyzing recent data points rather than the entire dataset, offers a unique perspective on this challenge. 
In many real-world applications, recent data is more relevant and informative than historical data~\cite{BabcockBDMW02,MankuM12,PapapetrouGD15,WeiLLSDW16}. 
For example, in social media analysis, understanding current trends and discussions requires focusing on the most recent posts. 
In these scenarios, the sliding window model can provide a more dynamic and responsive approach to clustering.  

The sliding window model is particularly relevant for applications in which computation must be restricted to data received after a certain time. 
Laws regarding data privacy, including the sweeping General Data Protection Regulation (GDPR), explicitly require that companies must not retain certain user information beyond a specified time. 
Consequently, Facebook retains user search data for $6$ months~\cite{FB-data}, the Apple retains user information for $3$ months~\cite{apple-data}, and Google stores browser information for up to $9$ months~\cite{google-data}. 
The sliding window model captures these retention policies with the appropriate setting of the window parameter $W$ and thus has been considered across a wide range of applications~\cite{LeeT06a,LeeT06b,BravermanO07,ChenNZ16,DatarM16,EpastoLVZ17,BravermanGLWZ18,BravermanWZ21,WoodruffZ21,JayaramWZ22,BlockiLMZ23}. 

However, the sliding window model also introduces new challenges for fairness. 
As data streams in continuously, the composition of the window can change rapidly, potentially leading to biased or discriminatory outcomes. 
Therefore, it is essential to develop clustering algorithms that are not only accurate but also fair in the context of the sliding window model. 
Formally, points arrive sequentially in the sliding window model and the dataset $P$ is implicitly defined as the last $n$ points of the stream. 
The goal is to perform fair clustering using space sublinear in the size $n$ of the dataset. 

\subsection{Our Results}
In this work, we study fair clustering in the sliding window model. 
In particular, our goal is to achieve $(1+\eps)$-multiplicative approximation to the optimal fair clustering for the dataset implicitly defined by the data stream in the sliding window model, while using the minimal amount of space (and then minimizing the update time as a secondary priority). 
Surprisingly, there is no existing work on fair clustering in the sliding window model. 

On the other hand, recent work has achieved $(1+\eps)$-approximation for the standard formulation of $(k,z)$-clustering in the sliding window model using $\frac{k}{\min(\eps^4,\eps^{2+z})}\,\polylog\frac{n\Delta}{\eps}$ words of space~\cite{WoodruffZZ23}, which was shown to be space-optimal up to factors polylogarithmic in the input size $n$ and the aspect ratio $\Delta$~\cite{Cohen-AddadLSS22,Cohen-AddadWZ23,Huang0024}. 
In particular, we can assume without loss of generality that each coordinate of each point can be represented in $\log\Delta$ bits of space, so that after normalization, all points lie within the grid $[\Delta]^d=\{1,2,\ldots,\Delta\}^d$. 
Similarly, results by \cite{BravermanCJKST022} can be adapted using the well-known merge-and-reduce framework to achieve a $(1+\eps)$-approximation streaming algorithm for fair clustering using $\frac{k}{\min(\eps^4,\eps^{2+z})}\,\polylog\frac{n\Delta}{\eps}$ words of space, which is also space-optimal up to polylogarithmic factors~\cite{Cohen-AddadLSS22,Huang0024}. 
Therefore, we ask:
\begin{quote}
Can we achieve similar space-optimal results for fair clustering in the sliding window model?
\end{quote}
Surprisingly, our first result is a strong impossibility result:
\begin{restatable}{theorem}{thmmainlb}
\thmlab{thm:main:lb}
Any algorithm for fair $(k,z)$-clustering in the sliding window model that either achieves any multiplicative approximation or additive $\frac{\Delta}{2}-1$ error, with probability at least $\frac{2}{3}$, must use $\Omega(n)$ space. 
\end{restatable}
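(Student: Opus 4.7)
The plan is to reduce from the augmented-index problem $\AugInd$: Alice holds $x \in \{0,1\}^n$, Bob holds an index $i \in [n]$ together with the suffix $x_{i+1},\ldots,x_n$, and Bob must output $x_i$ with probability at least $2/3$. The standard one-way communication lower bound for $\AugInd$ is $\Omega(n)$, and a short Fano/chain-rule argument on the random suffix of $x$ shows this bound survives restricting $i$ to the upper half of $[n]$.

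I will exhibit a stream construction whose OPT cleanly encodes $x_i$. Take two demographic groups $R,B$ with the tight fairness constraint $\alpha_R = \beta_R = \alpha_B = \beta_B = \tfrac12$, $k=2$ clusters, ground set $\{0,\Delta\} \subset \mathbb{R}$, and window size $W=2n$. Alice streams $n$ pairs; the $j$-th is an $R$-point at $0$ followed by a $B$-point at $x_j\cdot\Delta$. Using her suffix, Bob computes $a_1' := |\{j > i : x_j = 1\}|$ and then streams $i-1$ additional pairs: $a_1'$ of type $(R\text{ at }\Delta,\,B\text{ at }0)$, designed to ``cancel'' each $j>i$ with $x_j=1$, and the other $(i-1)-a_1'$ of type $(R\text{ at }0,\,B\text{ at }0)$. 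Restricting $i\in[\lceil(n+1)/2\rceil,n]$ makes $a_1'\leq n-i\leq i-1$, so these pair counts are nonnegative and feasible.

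Next I would analyze OPT on the final window, which is exactly Alice's pairs $i,\ldots,n$ together with all $i-1$ of Bob's pairs. Counting colored points at $0$ and $\Delta$ shows that when $x_i=0$ both positions are perfectly balanced in $R$ and $B$, so placing the two centers at $\{0,\Delta\}$ attains $\mathrm{OPT}=0$; when $x_i=1$, each position has a unit color mismatch, and the unique cheapest fair repair reassigns one $R$-point at $0$ into the $\Delta$-cluster for cost $\Delta^z$, giving $\mathrm{OPT}=\Delta^z$. Since the gap $\Delta^z\geq\Delta > 2(\tfrac{\Delta}{2}-1)$ for $z\geq 1$ and $\Delta\geq 2$, any algorithm that achieves either a finite multiplicative approximation (which must return $0$ when $\mathrm{OPT}=0$) or additive error at most $\tfrac{\Delta}{2}-1$ distinguishes the two cases and thereby reveals $x_i$. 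Having Alice simulate the algorithm on her portion of the stream and send its $s$-bit state to Bob, who completes the simulation, yields an $\AugInd$ protocol of total communication $s$, so $s=\Omega(n)$.

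The main obstacle I expect is the OPT analysis in the $x_i=1$ case: one must rule out that some $2$-clustering with arbitrary real-valued centers and any fair assignment achieves cost strictly below $\Delta^z$. I plan to handle this by a short case analysis using convexity of $t\mapsto t^z$ for $z\geq 1$: moving a center from $0$ to some $c\in(0,\Delta)$ strictly increases cost for every point still at $0$ by $c^z$ while at best reducing the cost of a moved point from $\Delta^z$ to $(\Delta-c)^z$, which never helps globally once one accounts for the fairness-forced number of distance-$\Delta$ swaps. The subsidiary claim that the $\AugInd$ lower bound survives the upper-half restriction on $i$ is standard: decomposing $I(x_{\lceil(n+1)/2\rceil},\ldots,x_n;M)$ via the reverse chain rule produces $\Theta(n)$ terms, each of which is $\Omega(1)$ by Fano's inequality applied to Bob's decoder.
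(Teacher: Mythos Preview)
Your reduction is essentially the same strategy as the paper's: encode a bit $x_i$ of an augmented-index instance as the dichotomy $\mathrm{OPT}=0$ versus $\mathrm{OPT}>0$ in a two-position, two-group fair clustering instance on $\{0,\Delta\}$. The paper handles the ``Bob might not have enough room to balance the stream'' issue differently: rather than restricting $i$ to the upper half, it passes through a balanced variant $\AugInd^n_{2n}$ (the input string has exactly $n$ ones out of $2n$), which is an immediate corollary of ordinary $\AugInd$. Both workarounds are valid; the paper's is slightly cleaner since no separate Fano argument on a restricted index set is needed.

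There is a genuine gap in your $\mathrm{OPT}$ analysis for the $x_i=1$ case. The claim $\mathrm{OPT}=\Delta^z$ is false for $z>1$, and the convexity sketch you give points the wrong direction. If a cluster has $m$ points at $0$ and one point at $\Delta$, moving its center from $0$ to some small $c>0$ changes the cost by $mc^z - \bigl(\Delta^z-(\Delta-c)^z\bigr)$; for $z>1$ the loss $mc^z$ is $o(c)$ while the gain is $\Theta(z\Delta^{z-1}c)$, so the move strictly helps and the optimal center is not at $0$. Concretely, for $z=2$ the cluster cost drops to $m\Delta^2/(m+1)<\Delta^2$, so $\mathrm{OPT}<\Delta^z$.

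The fix is that you only need $\mathrm{OPT}>\Delta-2$, and indeed $\mathrm{OPT}\geq \Delta^z/2^{z-1}\geq \Delta$ for $\Delta\geq 2$. The argument: any fair $2$-clustering must have at least one cluster containing points at both positions (the position-based partition is infeasible by your color count, and an all-in-one-cluster solution trivially mixes positions). A cluster with $h\geq 1$ points at $0$ and $g\geq 1$ points at $\Delta$ has cost $hc^z+g(\Delta-c)^z\geq \min(g,h)\bigl(c^z+(\Delta-c)^z\bigr)\geq 2(\Delta/2)^z$ by convexity of $t\mapsto t^z$. This is exactly the bound the paper (implicitly) uses; it only asserts $\mathrm{OPT}\geq\Delta$, not $\Delta^z$.
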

The proof of \thmref{thm:main:lb} utilizes tools from communication complexity and is fairly standard. 
Specifically, we show that fair clustering in the sliding window model can solve a version of Augmented Index, where one player has a binary string $x\in\{0,1\}^{2n}$ of length $2n$ and weight $n$, while the second player has an index $i$ and must guess the value of $x_i$, given the values of $x_1,\ldots,x_{i-1}$, as well as a message passed from the first player. 
The main intuition is that the two players can copy $x_1,\ldots,x_{i-1}$ to the end of the stream, as well as a guess $x_i=0$, so that the sliding window contains precisely the elements $x_{i+1},\ldots,x_{2n},x_1,x_2,\ldots,x_{i-1},0$, as well as $n$ additional points at the origin and $n$ additional points at the point $1$ on the real line, which the players subsequently insert.  
Now if the guess $x_i=0$ is correct, this window will contain precisely $2n$ points that zero and $2n$ points that are one. 
Otherwise, it will contain $2n-1$ points that are one and $2n+1$ points that are zero. 
Hence by setting the fairness constraint to require half of the points at each center to be from the last $2n$ points, and half of the points from the first $2n$ points, the fair clustering objective is zero if and only if the guess $x_i=0$ is correct. 
Otherwise, the objective is nonzero, which allows any finite multiplicative approximation to fair clustering to distinguish between these two cases. 
The lower bound then follows from known communication lower bounds of the Augmented Index problem. 
We defer the formal proof to \appref{app:lb}. 
We now discuss the main implications of \thmref{thm:main:lb}. 

Whereas our goal was to achieve a $(1+\eps)$-approximation for fair clustering in $\poly\left(k,\frac{1}{\eps},\log(n\Delta)\right)$ space, \thmref{thm:main:lb} states that \emph{any} multiplicative approximation requires linear space, even an arbitrarily large multiplicative approximation, e.g., polynomial or exponential in $n$. 

Due to the aforementioned results, \thmref{thm:main:lb} shows several separations. 
Firstly, it shows a separation between fair clustering and the standard clustering formulation in the sliding window model, due to the sliding window algorithm for $(k,z$)-clustering that achieves $(1+\eps)$-approximation using $\frac{k}{\min(\eps^4,\eps^{2+z})}\,\polylog\frac{n}{\eps}$ words of space~\cite{WoodruffZZ23}. 
Secondly, it shows a separation between the insertion-only streaming model and the sliding window model for fair clustering, due to the streaming algorithm for $(k,z$)-clustering that achieves $(1+\eps)$-approximation using $\frac{k}{\min(\eps^4,\eps^{2+z})}\,\polylog\frac{n}{\eps}$ words of space~\cite{BravermanCJKST022}. 
\thmref{thm:main:lb} therefore has an important informal message:
\begin{quote}
\centering
Fairness and recency are not compatible in sublinear space!
\end{quote}
The natural follow-up question is, what actually can be achieved toward some notion of fair clustering in the sliding window model?
To that end, we show that there exist sublinear-space algorithms for fair clustering in the sliding window model if we permit a small slackness on the fairness constraints:

\begin{theorem}
\label{thm:streaming_fair}
    There is a sliding-window streaming algorithm
    that returns a center set $C$ after every insertion operation,
    such that with high probability, there exists a $((1 - \varepsilon)\alpha, (1 + \varepsilon)\beta)$-fair $k$-clustering
    whose cost with center set $C$ is at most $1 + \varepsilon$ times the optimal $(\alpha, \beta)$-fair clustering on the sliding window.
    This algorithm uses space $L \cdot \poly(\varepsilon^{-1} kd \log \Delta)$ where $L \leq 2^\ell$ is the number of distinct combinations of groups that any point may belong to.
\end{theorem}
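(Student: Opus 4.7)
The plan is to maintain per-group $(k,z)$-clustering coresets in the sliding window, union them, and solve a weighted fair clustering problem offline on the combined coreset. First I would bucket the stream by the group-membership vector of each point: since every point has a fixed subset of $[\ell]$ as its membership, there are at most $L \leq 2^\ell$ distinct types, and letting $P_t$ denote the points of type $t$ inside the current window, the overall point set is $P = \bigcup_t P_t$. For each type $t$ I would maintain an $\eps$-coreset $S_t$ for standard $(k,z)$-clustering on $P_t$ in the sliding window, either by running the space-optimal sliding-window $(k,z)$-clustering machinery of \cite{WoodruffZZ23} separately on each sub-stream, or by placing the insertion-only fair coreset of \cite{BravermanCJKST022} inside the smooth-histogram / merge-and-reduce framework. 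Each such instance uses $\poly(\eps^{-1} k d \log \Delta)$ words of space with high probability, so that the total space is $L \cdot \poly(\eps^{-1} k d \log \Delta)$ as required.

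After every update I would output the center set $C$ obtained by approximately solving a weighted fair $(k,z)$-clustering problem on $S = \bigcup_t S_t$, where each weighted coreset point $s \in S_t$ inherits the group-type $t$. The key structural claim I would establish is that $S$, viewed as a group-labelled weighted point set, serves as a \emph{bicriteria} coreset for fair clustering: (i) any $(\alpha,\beta)$-fair integer assignment on $P$ lifts to a $((1-\eps)\alpha,(1+\eps)\beta)$-fair fractional assignment on $S$ of cost at most $(1+\eps)$ times the original, and (ii) conversely any $((1-\eps)\alpha,(1+\eps)\beta)$-fair fractional assignment on $S$ descends to a $((1-\eps)\alpha,(1+\eps)\beta)$-fair integer assignment on $P$ of cost at most $(1+\eps)$ times the fractional one. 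Given this claim, plugging in any $(1+\eps)$-approximate offline algorithm for fair $(k,z)$-clustering on $S$ produces the center set $C$ guaranteed by the theorem.

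The main obstacle is establishing this bidirectional transfer between weighted fractional assignments on $S$ and integer assignments on $P$, with simultaneous control of cost and of per-cluster group counts. I expect to handle it by a transportation-style argument: each weighted coreset point $s \in S_t$ fractionally represents a portion of $P_t$, so an assignment of the weight of $s$ to centers can be split among the points of $P_t$ it represents (and rounded by a flow-decomposition / LP rounding step). The per-cluster group count on $P$ then differs from the weighted per-cluster group count on $S$ only by the coreset mass discrepancy, which is within a $(1\pm\eps)$ multiplicative factor by the definition of an $\eps$-coreset applied separately to each type; this is exactly what the $(1\pm\eps)$ slackness in the fairness constraints absorbs. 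Combining this rounding argument with the standard per-group coreset cost-preservation (applied once per type and summed over the $L$ types) yields the claimed $(1+\eps)$ multiplicative cost bound, completing the proof.
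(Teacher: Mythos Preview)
Your high-level outline---partition the window by group-membership type, maintain a per-type coreset, take the union, and solve fair clustering on the union with the $(1\pm\varepsilon)$ fairness slack absorbing weight errors---matches the paper's structure. But the proposal treats as a black box precisely the step that is the paper's main technical contribution, and neither of the two black boxes you name supplies it.

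A \emph{standard} $(k,z)$-clustering coreset preserves only the nearest-center cost $\sum_p \min_c d(p,c)^z$; it says nothing about cost under an arbitrary assignment constraint $\Gamma$, which is exactly what fair clustering needs (points are generally not sent to their nearest center). So running the \cite{WoodruffZZ23} machinery per type does not justify your transportation step---you cannot split the mass of a coreset point among centers and expect the cost to track the original. What is required is an \emph{assignment-preserving} coreset (\Cref{def:coreset}), and proving that the online ring-sampling construction has this stronger guarantee is the content of \Cref{lem:ring-main}, not a consequence of prior work. More fundamentally, an assignment-preserving coreset by definition satisfies $w_S(S)=w_P(P)$ exactly, and \thmref{thm:main:lb} shows no sublinear sliding-window algorithm can maintain such an exact-weight object (else one would solve exact fair clustering). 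The paper's resolution is a \emph{relaxed} online assignment-preserving coreset (\Cref{def:online_coreset}): the maintained set $S$ coincides with a true coreset $S'$ as a point set but has weights only within $(1\pm\varepsilon)$ of those of $S'$; it is this controlled weight perturbation---not ``coreset mass discrepancy by the definition of an $\varepsilon$-coreset''---that the fairness slack absorbs. Your fallback of smooth histogram does not apply because clustering is not smooth, and merge-and-reduce on the insertion-only coreset of~\cite{BravermanCJKST022} yields a streaming but not a sliding-window algorithm; converting it to sliding windows via the reverse-time reduction of~\cite{WoodruffZZ23} requires precisely the new relaxed online object that the paper constructs in \Cref{lemma:online_coreset}.
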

We remark that the guarantees of our algorithm are slightly stronger than those stated in Theorem~\ref{thm:streaming_fair}. 
In particular, not only does our algorithm produce a $(1+\eps)$-approximation to the optimal fair clustering, but it also produces a $(1+\eps)$-coreset, i.e., a dataset that can be used to obtain a $(1+\eps)$-approximation to the cost of any query consisting of $k$ centers under the fairness constraints. 
To that end, we further remark that the space used by Theorem~\ref{thm:streaming_fair} matches the best known bounds for offline coreset constructions up to polylogarithmic factors~\cite{BravermanCJKST022}. 

\textbf{Experimental evaluations.}
Although the $(1 + \varepsilon)$ ratio which our \Cref{thm:streaming_fair} achieves is powerful,
it is mostly theoretical since the running time must be exponential because of APX-hardness.
However, our algorithm actually maintains a \emph{coreset} of the sliding window whose construction is very efficient. 
Hence, we plug in a practically efficient downstream approximation algorithm by~\cite{backurs2019scalable} to achieve better empirical performance.
We validate the performance of our algorithm, compared with three baselines, on five real datasets.
Our algorithm offers overall best tradeoff regarding , time/space and accuracy compared with all baselines.

\subsection{Related Work}
\seclab{sec:related}
In this section, we describe a number of related works, both in the context of fair clustering and sliding window algorithms for clustering. 

\textbf{Fairness.}
Fairness in algorithmic design has recently received significant attention~\cite{KleinbergMR17,HuangJV19,ChenXZC24,SongVWZ24}. 
\cite{Chierichetti0LV17} initiated the study of fairness in clustering, specifically in the context of \emph{disparate impact}, which demands that any ``protected'' subpopulation must receive similar representation in the decision-making process, in particular by the output of an algorithm. 
Although \cite{Chierichetti0LV17} initially considered two classes of populations, subsequent works generalized to multiple subpopulations by \cite{Rosner018}, as well as different clustering objectives~\cite{AhmadianE0M19,Bandyapadhyay0P19,Bercea0KKRS019,BeraCFN19,KleindessnerSAM19,AhmadianE0M20,AhmadianEK0MMPV20,EsmaeiliBT020,EsmaeiliBSD21,BohmFLMS21,SchwartzZ22,AhmadianN23}. 
A similar notion studied clustering where each cluster required a certain number of representatives from each subpopulation, rather than a certain fraction~\cite{AneggAKZ22,JiaSS22}. 

Other proposed notions of fairness include (1) social fairness, where the clustering cost is considered across subpopulations and then minimized~\cite{JonesNN20,GhadiriSV21,MakarychevV21}, (2) individual fairness, where each point should have a center within a reasonably close distance that is a function of the overall dataset~\cite{JungKL20,MahabadiV20,NegahbaniC21,VakilianY22}, (3) representative fairness, where the number of centers of each class is constrained~\cite{KleindessnerAM19,AngelidakisKSZ22}, and (4) a number of other notions~\cite{Micha020,AbbasiBV21,HotegniMV23,GuptaGKJ23}. 
However, as we focus on disparate impact in this work, these notions of fairness are somewhat orthogonal to the main subject of our study. 

\textbf{Clustering in data streams and the sliding window model.}
While a general framework for sliding window algorithms are histogram-based approaches~\cite{DatarGIM02,BravermanO07}, the clustering objective is not smooth and thus not suitable for these frameworks. 
Hence, there has been an active line of work studying $(k,z)$-clustering in the sliding window model. 
\cite{BabcockDMO03} first introduced an algorithm for $k$-median clustering in the sliding window model that used $\O{\frac{k}{\eps^4}m^{2\eps}\log^2 m}$ words of space and gave a $2^{\O{1/\eps}}$-multiplicative approximation algorithm, where $m$ is the size of the window and $\eps\in\left(0,\frac{1}{2}\right)$ is an input parameter. 
\cite{BravermanLLM15} subsequently gave a bicriteria algorithm for the $k$-median problem in the sliding window model that achieved an $\O{1}$-multiplicative approximation using $k^2\polylog(m)$ space, but at the cost of $2k$ centers. 
\cite{BravermanLLM16} achieved the first $\poly(k\log m)$ space algorithm for $k$-median and $k$-means on sliding windows, achieving an $\O{1}$-approximation using $\O{k^3\log^6 m}$ space. 
Afterwards, \cite{BorassiELVZ20} achieved a linear dependency in $k$, giving an $\O{1}$-approximation for $k$-clustering using $k\,\polylog(m,\Delta)$ space. 
\cite{EpastoMMZ22} introduced the first $(1+\eps)$-approximation algorithm for $(k,z)$-clustering using $\frac{(kd+d^C)}{\eps^3}\,\polylog\left(m,\Delta,\frac{1}{\eps}\right)$ words of space, for some constant $C\ge 7$, which was subsequently optimized by \cite{WoodruffZZ23} to $\frac{k}{\min(\eps^4,\eps^{2+z})}\,\polylog\frac{m\Delta}{\eps}$ words of space. 

It should noted that all of these results consider the standard formulation of $(k,z)$-clustering in the sliding window model. 
That is, no previous works considered fair clustering in the sliding window model. 
The most relevant works are offline $(1+\eps)$-coreset constructions by \cite{BravermanCJKST022} for fair clustering that sample $\frac{k}{\min(\eps^4,\eps^{2+z})}\,\polylog\frac{n\Delta}{\eps}$ points, and can be adapted to the insertion-only streaming model at the cost of multiplicative factors polylogarithmic in $n$, using a standard merge-and-reduce technique. 
As this coreset size matches the bound for $(k,z)$-clustering in the sliding window model, one might hope to achieve similar bounds for fair clustering in the sliding window model. 
Unfortunately, our results show that is not the case.

\section{Preliminaries}
\label{sec:prelim}

\textbf{Notations.}
We assume the input comes from $[\Delta]^d$ for some integer $\Delta \geq 1$,
and let $\calC \subseteq \mathbb{R}^d$ be the candidate center set.
In our analysis, many statements also hold for general $\mathbb{R}^d$, but we may still state them for $[\Delta]^d$ for consistency of presentation.
Throughout, when we talk about a point set in $\mathbb{R}^d$, we assume each point is associated with a unique time step.
Importantly, we use this time step to identify a point, which means two identical points in $\mathbb{R}^d$ with different time steps are considered different points (and set operations, such as intersection, are performed with respect to the time step). 
For a general function $f : U \to V$,
define $f(Y) := \sum_{y\in Y} f(y)$ (for $Y \subseteq U$ in the domain $U$). 

\begin{definition}[$(\alpha,\beta)$-fair clustering]
Given $P \subset \mathbb{R}^d$, $\ell$ groups $P_1, \ldots, P_\ell \subseteq P$ and $\alpha, \beta \in [0,1]^\ell$, a $k$-clustering $(C_1, \ldots, C_k)$, i.e., a $k$-partition of $P$, is called $(\alpha, \beta)$-fair if $\frac{|P_j \cap C_i|}{|C_i|} \in [\alpha_j, \beta_j]$    for every $i \in [k], j\in [\ell]$. 
$(\alpha, \beta)$-fair \kzC asks to find an $(\alpha, \beta)$-fair clustering $(C_1, \ldots, C_k)$ and center set $C \subset \mathbb{R}^d$ of $k$ points, such that $\sum_{i \in [k]}\sum_{x \in C_i} (\dist(x, c_i))^z$ is minimized.
\end{definition}

Notice that there are at most $2^\ell$ possible combinations of groups that a point $p \in P$ may belong to.
In practice, this $2^\ell$ upper bound is rarely achieved.
Hence, it is often useful to consider the actual number of distinct combinations of groups that a point $p \in P$ may belong to, denoted as $L$ throughout.
Namely,
$L := |\{ W \subseteq [\ell] : \exists p \in P, \text{ such that } \forall j \in W, p \in P_j  \} |$.

\textbf{Streaming model.}
The window size $m$ and the fairness constraints $\alpha, \beta \in \mathbb{R}^\ell$ are given in advance. 
The input point set $P \subseteq [\Delta]^d$ and the groups $P_1, \ldots, P_\ell \subseteq P$ of $(\alpha,\beta)$-fair clustering is presented as a stream of insertion operations, each consists of a point $p \in P$ and the group IDs that $p$ belongs to. 
The algorithm must return a center set $C \subset \mathbb{R}^d$ after each insertion, which is supposed to be an approximate solution to the dataset defined by the last $m$ operations.

A weighted point set $S \subseteq [\Delta]^d$ is a point set $S$ equipped with a weight function $w_S : [\Delta]^d \to \mathbb{R}_+$ such that $\supp(w_S) = S$. 
Notice that for a weighted point set $S$ and a (weighted or unweighted) point set $T$, set operation $S \cap T$ is performed only on the point set, and the weight should be defined separately. 
When the context is clear, we drop the subscript $S$ in the weight function $w_S$. 

We use a generic notion of assignment-preserving coresets as introduced in~\cite{BravermanCJKST022},
which we rephrase as follows.
We notice that an assignment-preserving coreset is not immediately a coreset that preserves the fair clustering objective,
but it is possible to obtain one by taking a union of several assignment-preserving coresets
on a partition of the input dataset (and the detailed argument can be found in \Cref{sec:streaming}, proof of \Cref{thm:streaming_fair}).

\begin{definition}[Assignment constraints]
    Consider some weighted point set $P \subseteq [\Delta]^d$ and a center set $C \subseteq \calC$.
    An assignment constraint with respect to $P$ and $C$ is a function $\Gamma : C \to \mathbb{R}_+$ such that $\sum_{p \in P} w_P(p) = \sum_{c \in C} \Gamma(c)$. 
    An assignment function with respect to $P$ and $C$ is a function $\sigma: P\times C\to \mathbb{R}_+$ such that $\forall p\in P, \sum_{c\in C} \sigma(p,c) = w_P(p)$.
    We call $\sigma$ is consistent with $\Gamma$, i.e. $\sigma \sim \Gamma$, if $\forall c\in C, \sum_{p\in P} \sigma(p,c) = \Gamma(c)$. 
\end{definition}

\begin{definition}[$(k,z)$-clustering with assignment constraints]
Given a weighted point set $P \subseteq [\Delta]^d$, a center set $C \subseteq \calC$ and an assignment constraint $\Gamma$, 
for every assignment function $\sigma$ with respect to $P,C$ and consistent with $\Gamma$, the cost of $\sigma$ is defined as $\cost^{\sigma}(P,C,\Gamma) = \sum_{p\in P} \sum_{c\in C} \sigma(p,c) d(p,c)^z$. The cost function of \kzC with assignment constraint is 
\[\cost(P,C,\Gamma) = \min_{\sigma: P\times C\to \mathbb{R}_+ ,\sigma \sim \Gamma} \cost^{\sigma}(P,C,\Gamma).\]
\end{definition}

As in \cite{BravermanCJKST022,HuangJL23}, our coreset is defined with respect to assignment constraints,
and we aim for coresets that can preserve all assignment constraints simultaneously. 

\begin{definition}[Assignment-preserving coreset]
\label{def:coreset}
    Given a weighted point set $P \subseteq [\Delta]^d$ and $0 < \varepsilon <1$,
    an assignment-preserving $\varepsilon$-coreset for \kzC of $P$
    is a weighted point set $S \subseteq P$ such that $w_S(S) = w_P(P)$, and that for every center set $C \subseteq \calC$ and assignment constraint $\Gamma$, 
        $\cost(S, C, \Gamma) \in (1 \pm \varepsilon) \cost(P, C, \Gamma)$.
\end{definition}

\section{Online Assignment-Preserving Coresets}
\label{sec:coreset}
As we mention, our framework is based on that of~\cite{WoodruffZZ23}, which reduces the sliding window algorithm to constructing an \emph{online coreset}. 
In their definition, given a point set $P = \{p_1, p_2, \ldots ,p_n\}$, listed in the increasing order of time steps, an online $\varepsilon$-coreset $S$ for \kzC is a weighted subset such that every prefix $S \cup \{p_1, p_2, \ldots , p_i\}$ is an $\varepsilon$-coreset for \kzC of prefix $\{p_1, p_2, \ldots , p_i\}$. 

However, this definition of online coreset does not easily generalize to our setting of assignment-preserving coresets. 
It is even trickier because of our strong lower bound in \thmref{thm:main:lb}. 
Hence, we need to use a slightly relaxed notion, defined in \Cref{def:online_coreset}.
Roughly speaking, the prefix property of our online assignment-preserving coresets can tolerate a $1\pm \varepsilon$ relative error in the weights.
This relaxed guarantee leads to an efficient coreset bound, as we state in \Cref{lemma:online_coreset}, which is the main claim of this section. 

\begin{definition}[Online assignment-preserving coreset]
\label{def:online_coreset}
Given a weighted point set $P = \{p_1, p_2, \ldots, p_n\}$ listed in the increasing order of time steps and $0 < \varepsilon < 1$,
a weighted point set $S$ is an online assignment-preserving $\varepsilon$-coreset of $P$ for \kzC, if $S \subseteq P$ and for every $t \in [n]$, there exists some weighted point set $S_t'$ such that the following holds:
\begin{itemize}[leftmargin=*,noitemsep,nolistsep]
\item 
Let $P_t := \{ p_1, \ldots, p_t \}$ and $w_{P_t} : p \mapsto w_P(p)$. 
Then $S_t'$ is an assignment-preserving $\varepsilon$-coreset for the weighted set $P_t$.
\item 
Let $S_t := S \cap P_t$ and $w_{S_t} : p \mapsto w_S(p)$. 
Then $S_t = S_t'$ and for every $p \in S_t$, $w_{S_t}(p) \in (1 \pm \epsilon) w_{S_t'}(p)$.
\end{itemize}
\end{definition}

\begin{lemma}
\label{lemma:online_coreset}
    There exists an algorithm that takes as input weighted set $P \subseteq [\Delta]^d$ (with unique time steps) of $n$ points, $0 < \eps < 1, z \geq 1$ and integer $k \geq 1$,
    computes weighted set $S$ with $|S| = \poly(2^z\varepsilon^{-1}kd\log(w(P)\Delta\varepsilon^{-1} )$,
    such that $S$ is online assignment-preserving $\varepsilon$-coreset for \kzC with probability $0.9$. 
    The algorithm uses $\poly(2^z\varepsilon^{-1}kd\log(w(P)\Delta\varepsilon^{-1}) )$ space.
\end{lemma}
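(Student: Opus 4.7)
The plan is to construct the online coreset $S$ via a single-pass online sensitivity sampling procedure, adapting the framework of \cite{WoodruffZZ23} (developed for standard $(k,z)$-clustering in sliding windows) to the assignment-preserving setting, using the sensitivity-based analysis of \cite{BravermanCJKST022}. The relaxation in \Cref{def:online_coreset} (weights allowed to drift by a $1\pm\varepsilon$ factor) is the key design choice that reconciles the fact that a weight fixed at a point's insertion time cannot be literally equal to the ``ideal'' weight prescribed by a fresh coreset of a later prefix.

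Concretely, for each arriving point $p_t$ I would maintain a constant-factor bicriteria approximation to $(k,z)$-clustering on $P_t$ online, in $\poly(kd\log(w(P)\Delta))$ space, and use it to compute an overestimate $\hat{\sigma}_t$ of the online sensitivity $\sup_{C \subseteq \calC,\, |C|=k} \frac{\dist(p_t,C)^z}{\cost(P_t,C)}$. The algorithm then samples $p_t$ independently with probability $q_t = \min\{1, T \cdot \hat{\sigma}_t\}$, where $T = \poly(2^z kd\log(w(P)\Delta/\varepsilon)) \cdot \varepsilon^{-2}$ is calibrated to match the concentration requirement of the assignment-preserving coreset analysis; sampled points are stored with weight $w_S(p_t) = w_P(p_t)/q_t$ and never modified again.

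Verifying the guarantees then reduces to three pieces. The size of $S$ follows from the online sensitivity sum bound $\sum_t \hat{\sigma}_t = \tO{k}$ (inherited from the online coreset literature once the bicriteria approximation is plugged in), yielding $\mathbb{E}[|S|] \leq T \cdot \tO{k} = \poly(2^z kd\log(w(P)\Delta/\varepsilon)) \cdot \varepsilon^{-1}$, with standard concentration to make this worst-case up to constants. For the assignment-preserving coreset property at any fixed prefix $P_t$, since online sensitivities monotonically dominate the true sensitivities in $P_t$, the sampling rates $q_s$ for $s \le t$ majorize a valid sensitivity-sampling distribution on $P_t$, and the chaining argument of \cite{BravermanCJKST022} (with VC-type dimension $\tO{kd}$) yields that $S_t$, under a suitably chosen reference weight, is an assignment-preserving $\varepsilon$-coreset of $P_t$ with high probability; a union bound over the $n$ prefixes contributes only a $\log n$ factor absorbed into $T$.

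The main obstacle is the second bullet of \Cref{def:online_coreset}: the stored weight $w_S(p_s) = w_P(p_s)/q_s$ must lie within $1\pm\varepsilon$ of the ``ideal'' weight that fresh sensitivity sampling on $P_t$ would assign. Because both the online sensitivity and the normalizer $\cost(P_t,\cdot)$ drift as $t$ grows, these quantities are not equal. I would address this by processing the stream in cost-doubling epochs as in \cite{WoodruffZZ23}: inside each epoch the normalizer changes by at most a constant factor, so by tightening the calibration of $T$ within each epoch I can enforce that $q_s$ (and hence $w_S(p_s)$) matches the ideal rate up to a $1\pm\varepsilon$ factor, while across epochs the $\tO{\log(w(P)\Delta)}$ overhead from re-estimating the bicriteria solution is absorbed into the final space bound. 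This is the technically delicate step and is precisely what the relaxation in \Cref{def:online_coreset} was designed to accommodate.
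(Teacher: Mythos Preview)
Your proposal takes a genuinely different route from the paper and has a real gap in the assignment-preserving step.

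The paper does \emph{not} use online sensitivity sampling. Instead, \Cref{alg:main} computes a bicriteria solution $Q$ for \emph{unconstrained} \kzC via a Meyerson sketch, decomposes each cluster $P_i$ into geometric rings $R_{i,j}=P_i\cap\ring(q_i,2^{j-1},2^j)$, and on each ring runs \textsc{RingCoreset}, which samples the $i$-th arriving point with probability $\min\{T\,w_P(p_i)/\vsum_i,1\}$ (a weighted online reservoir, not a sensitivity estimate). The analysis (\Cref{lem:ring-main}) gives, per ring, an \emph{additive} error $\varepsilon N r^z$ that is then charged to the bicriteria cost when the rings are summed. The reference set $S'_t$ required by \Cref{def:online_coreset} is obtained from $S_t$ by a single global rescaling $w_{S'}=\alpha^{-1}w_S$ with $\alpha=w(S)/w(P)$; the $(1\pm\varepsilon)$ weight ratio is just the concentration $\alpha\in 1\pm O(\varepsilon)$ (proof of \Cref{lem:error-output-real}).

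The gap in your approach is the claim that ``the chaining argument of \cite{BravermanCJKST022} \ldots\ yields that $S_t$ \ldots\ is an assignment-preserving $\varepsilon$-coreset''. The quantity you bound, $\sup_C \dist(p_t,C)^z/\cost(P_t,C)$, is the \emph{unconstrained} sensitivity; it does not control $p_t$'s contribution under an arbitrary assignment constraint $\Gamma$, where $p_t$ may be forced to a center much farther than its nearest one in $C$. Both \cite{BravermanCJKST022} and the present paper use ring decomposition precisely to sidestep this: inside a ring of radius $\Theta(r)$ every point is interchangeable with every other up to additive $O(r^z)$ in cost to any center, which is what makes uniform-type sampling sufficient and is why the error in \Cref{lem:ring-main} is stated as $\varepsilon N r^z$ rather than as a per-point multiplicative bound. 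A sensitivity route would need a sensitivity notion quantified over all $(C,\Gamma)$, which you have not defined or bounded.

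Your treatment of the second bullet of \Cref{def:online_coreset} is also problematic. Cost-doubling epochs only keep the normalizer within a constant factor, not within $1\pm\varepsilon$; and as the prefix grows, a point's sensitivity (and hence its ``ideal'' sampling rate) can drop by an unbounded factor, so the stored weight $w_P(p_s)/q_s$ cannot in general stay within $1\pm\varepsilon$ of the weight a fresh sensitivity sample on $P_t$ would assign. The paper avoids this entirely: since $S'$ is just a uniform rescaling of $S$, no per-point comparison of sampling rates across time is needed.
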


\textbf{Algorithm for \Cref{lemma:online_coreset}.} 
The algorithm for our online assignment-preserving coreset is listed in \Cref{alg:main}.
This construction
is based on a similar sampling-based framework as in~\cite{WoodruffZZ23},
which is also widely used in the coreset literature in general~\cite{Chen09,Cohen-AddadSS21,CL19,BravermanCJKST022}.
In this framework, one starts with computing a bi-criteria approximation  $\widehat{C}$ for (unconstrained) \kzC, denoted as $\widehat{C} := \{\widehat{c}_1, \ldots, \widehat{c}_t\}$.
We say $\widehat{C}$ is an $(\alpha, \beta)$ bi-criteria approximate solution for (unconstrained) \kzC if $\cost(P,\widehat{C}) = \sum_{p\in P} \dist(x,\widehat{C})^z \leq \alpha \cdot \mathrm{OPT}$ and $|\widehat{C}| \leq \beta k$.
For our purpose, we use the Meyerson sketch proposed by \cite{BorassiELVZ20} that outputs an $(\O{2^z}, \O{2^z \log \eta^{-1} \log \Delta})$ approximation center set $\widehat{C} \subseteq P$ with probability at least $1-\eta$. 
Next, take the natural clustering defined by $\widehat{C}$ (according to nearest-neighbor rule),
and decompose each cluster into rings centered at $\widehat{c}_i$.
The coreset is constructed by simply drawing $\poly(\epsilon^{-1}kd)$ uniform samples from each ring, re-weight, and take the union.

However, it is nontrivial to draw exactly uniform samples while still maintaining the online property, especially in the streaming setting.
Hence, in \cite{WoodruffZZ23} they turn to sample the $i$-th element in a ring with probability $\propto 1 / i$ (and it gets more complicated when running on a weighted dataset, which we need).
Our algorithm also has this part as a key subroutine, which we call \textsc{RingCoreset}.
The \textsc{RingCoreset} algorithm is presented in \Cref{alg:RingCoreset}, and its guarantee is summarized in the following \Cref{lem:ring-main}.

\begin{algorithm}[t]
    \caption{Online assignment-preserving coreset construction}
    \label{alg:main}

    \begin{algorithmic}
        \Procedure{OnlineCoreset}{$P, \varepsilon, \delta$}
            \State Compute an $(\alpha,\beta)$-approximate solution
            $Q = \{ q_1, q_2, \ldots , q_\ell \} \subseteq P$
            for $(k,z)$-clustering of $P$ 
            \Comment{using e.g., Meyerson sketch, where $\alpha = O(2^z)$, $\beta = O(2^z   \poly(\log \Delta))$}
            \State $S \gets \varnothing$, $\varepsilon' \gets \frac{\varepsilon}{2^z \alpha}$, $\delta' \gets \frac{\delta}{\ell (\lceil \lg (\Delta\sqrt{d})+1 \rceil)}$
            \For{$i = 1, 2, \ldots, \ell$}
                \State $P_i \gets \{ p\in P : \mathrm{NN}_Q(p) = q_i\}$
                \Comment{$\mathrm{NN}_Q(p)$ is defined as the nearest neighbor of $p$ in set $Q$}
                \For{$j=0,1,2\ldots , \lceil \lg (\Delta\sqrt{d}) \rceil$}
                    \State $R_{i,j} \gets P_i \cap \ring(q_i, 2^{j-1}, 2^j)$
                    \State $\textsc{RingCoreset}(R_{i,j}, \varepsilon', \delta')$
                \EndFor
            \EndFor 
            \State \textbf{return} $\bigcup_{1\leq i\leq \ell, 0\leq j \leq \lceil \lg (\Delta\sqrt{d}) \rceil} S_{i,j}$
        \EndProcedure
    \end{algorithmic}
\end{algorithm}
\begin{algorithm}[t]
    \caption{Coreset for a single ring}
    \label{alg:RingCoreset}

    \begin{algorithmic}
        \Procedure{RingCoreset}{$P, \varepsilon, \delta$}
            \State List $P$ in the increasing order of time steps as $p_1, p_2, \ldots, p_n$. 
            \State $\delta' \gets \frac{\delta}{(k/\varepsilon)^{O(kd)}}$, $T \gets \poly(2^zdk\varepsilon^{-1}\log (w(P)\Delta\varepsilon^{-1}\delta^{-1} ))$, $S \gets \varnothing$, $\vsum_0 \gets 0$
            \For{$i = 1, 2, \ldots, n$}
                \State $\vsum_i \gets \vsum_{i-1} + w_P(p_i)$
                \State $\prob_i \gets \min\left(T\cdot w_P(p_i)/\vsum_i, 1\right)$
                \State add $p_i$ to $S$ with probability $\prob_i$ and weight $w_S(p_i)\gets \prob_i^{-1} w_P(p_i)$
            \EndFor
            \State \textbf{return} $S$
        \EndProcedure
    \end{algorithmic}
\end{algorithm}

\begin{restatable}{lemma}{lemringmain}
\label{lem:ring-main}
    Consider $q\in [\Delta]^d$, $r>0$, $\varepsilon,\delta \in (0,1),k,z\geq 1$,
    and weighted set $P = (p_1, p_2, \ldots, p_n)\subseteq [\Delta]^d\cap  \ring(q,r/2,r)$ listed in the increasing order of time steps. 
    Let $N := w(P)$ and $S$ be the output of $\textsc{RingCoreset}(P,\varepsilon,\delta)$ (in \Cref{alg:RingCoreset}). We have $E[|S|] = \poly(2^zk\varepsilon^{-1}\log (N\Delta\varepsilon^{-1}\delta^{-1} ))$, and with probability at least $1-\delta$, there exists a weighted set $S'$ such that
    \begin{itemize}[leftmargin=*,noitemsep,nolistsep]
        \item $S' = S,$
        \item For every $p\in S$, $w_{S'}(p) \in (1 \pm \varepsilon) w_S(p),$
        \item For every center set $C\subseteq \mathbb{R}^d, |C| \leq k$ and every assignment constraint $\Gamma$ with respect to $P,C$, we have $\left|\cost(S',C,\Gamma) - \cost\left(P, C, \Gamma\right)\right|  \leq \frac{\varepsilon}{4} (\cost(P,C,\Gamma)+\cost(S',C,\Gamma)+ Nr^z)$.
    \end{itemize}
\end{restatable}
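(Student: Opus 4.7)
I would prove the lemma via importance-sampling Bernstein concentration, organized in three stages. The main subtlety, compared with a standard offline coreset analysis, is that the sampling probability $\prob_i$ uses the prefix sum $\vsum_i$ rather than the final weight $N$; this is exactly why the lemma permits a $(1 \pm \varepsilon)$ rescaling of weights through the auxiliary set $S'$.

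The expected size is $E[|S|] = \sum_i \prob_i \leq T \sum_i w_P(p_i)/\vsum_i$. Since $\vsum_i$ is monotone and $\vsum_i - \vsum_{i-1} = w_P(p_i)$, a telescoping argument (comparing to $\int_0^N dt/t$) bounds this sum by $O(T \log N)$, which gives the stated $\poly$ bound for the chosen $T$. For the construction of $S'$, I set $W := \sum_{p \in S} w_S(p)$. Unbiasedness gives $E[W] = N$, and a Bernstein bound using the per-term range $w_S(p_i) = \vsum_i/T \leq N/T$ together with variance sum $O(N^2/T)$ yields $\lambda := N/W \in (1 \pm \varepsilon)$ with probability at least $1 - \delta'$. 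I then define $w_{S'}(p) := \lambda w_S(p)$, which immediately gives the first two bullets and ensures $w_{S'}(S') = N = w_P(P)$, so that every admissible constraint $\Gamma$ admits a consistent assignment on $S'$.

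For the cost approximation, fix a center set $C$ of size at most $k$ and a constraint $\Gamma$, and let $\sigma$ be an optimal assignment for $(P, C, \Gamma)$. I define the unbiased lift $\widetilde{\sigma}(p_i, c) := \lambda \sigma(p_i, c)/\prob_i$ on sampled $p_i$. A per-center Bernstein bound gives $\widetilde{\sigma}(S', c) \in (1 \pm \varepsilon)\Gamma(c)$, and a repair step redistributes at most $O(\varepsilon N)$ units of mass to produce a feasible $\sigma' \sim \Gamma$. Because $P$ lies inside a ring of radius $r$ around $q$, the triangle inequality bounds the total repair cost by $O(\varepsilon N r^z)$. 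A separate Bernstein bound on $\cost^{\widetilde{\sigma}}(S', C, \Gamma)$ itself (a sum of independent random variables each of magnitude at most $(N/T) \dist(p_i, c)^z$) then yields $\cost(S', C, \Gamma) \leq \cost(P, C, \Gamma) + (\varepsilon/8)(\cost(P, C, \Gamma) + Nr^z)$, and the reverse inequality follows symmetrically starting from an optimal assignment for $S'$. Taking an $\varepsilon$-net over candidate centers of size $(k/\varepsilon)^{O(kd)}$ and a grid discretization of $\Gamma$ at resolution $\varepsilon N/k$, then union-bounding with amplified failure probability, handles all $(C, \Gamma)$ simultaneously.

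The principal obstacle is the repair step: the unbiased $\widetilde{\sigma}$ fails the equality constraint by small additive slack, and this slack must be redistributed among centers without blowing up the cost. The ring property---distances from $q$ lie in $[r/2, r]$---is exactly what enables a clean $O(r^z)$ per-unit repair bound and is the origin of the additive $Nr^z$ term in the lemma statement. A secondary difficulty is obtaining the symmetric two-sided inequality, which requires rerunning the lift--repair--concentrate argument with $\sigma'$ as the starting point and coupling the sampling randomness so that the same high-probability event controls both directions.
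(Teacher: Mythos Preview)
Your overall structure---lift an optimal assignment, repair to feasibility, concentrate, then net and union-bound---matches the paper's, but two steps as written do not go through.

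\emph{Far centers.} The claim that the repair cost is $O(\varepsilon N r^z)$ ``because $P$ lies inside a ring of radius $r$'' is incorrect: moving one unit of mass from $(p,c)$ to $(p,c')$ costs $|\dist(p,c')^z-\dist(p,c)^z|$, which for centers far from $q$ is unbounded, and the ring hypothesis only controls $\dist(p,q)$. The same issue breaks your Bernstein bound on $\cost^{\widetilde\sigma}$, whose per-term range $(N/T)\dist(p_i,c)^z$ is unbounded for far $c$. Your net of size $(k/\varepsilon)^{O(kd)}$ already presumes a bounded region, but you never reduce arbitrary $C\subseteq\mathbb{R}^d$ to one. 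The paper first moves every center $c$ with $\dist(q,c)>10zr/\varepsilon$ to $q$; via the generalized triangle inequality this shifts both $\cost(P,C,\Gamma)$ and $\cost(S',C,\Gamma)$ by the \emph{same} additive term $\sum_{c\in\Cfar}\Gamma(c)\,\dist(q,c)^z$ (up to a $(1\pm\varepsilon/4)$ factor and an $O(\varepsilon Nr^z)$ additive error), so it cancels in the difference. Only after this reduction are all point--center distances bounded by $R=O(zr/\varepsilon)$, and the concentration is then run with a much smaller inner accuracy $\varepsilon_0$ chosen so that $\varepsilon_0 R^z\le\varepsilon r^z$. This reduction is not optional; it is exactly what produces the $\frac{\varepsilon}{4}\cost(P,C,\Gamma)$ term on the right-hand side of the lemma.

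\emph{The reverse direction is not symmetric.} To show $\cost(P,C,\Gamma)\le\cost(S',C,\Gamma)+\mathrm{error}$ you propose to start from the optimal $\sigma'$ for $S'$ and ``lift back.'' But $\sigma'$ is a random function of the sample, so you cannot fix it and then apply concentration, and there is no inverse of the sampling map that produces a deterministic assignment on $P$. Netting over assignments on $P$ does not help either, since that polytope has dimension $\Theta(nk)$. The paper's route is different: it takes $E[\sigma^*]$ (the expectation, over the sampling, of the optimal partial assignment for $S$) as a fractional assignment on $P$, extends it to a feasible one at cost $R^z\cdot E[\max(0,N-w(S))]$, and separately shows that the random cost concentrates around its expectation via a bounded-differences martingale inequality (flipping one sampling indicator $X_i$ changes the optimal transport cost by at most $(N/T)R^z$). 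Some device of this kind is necessary; your ``coupling the sampling randomness'' remark does not supply one.
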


The proof of \Cref{lem:ring-main} is postponed to \Cref{sec:proof_lem_ring_main},
and we first finish the proof of \Cref{lemma:online_coreset} assuming \Cref{lem:ring-main} holds. In the following discussion, for every $P\subseteq [\Delta]^d,C\subseteq \mathbb{R}^d,|C|=k$ and assignment constraint $\Gamma$ with respect to $P,C$, we say $\sigma^*$ is the the optimal assignment function of $\cost(P,C,\Gamma)$ if 
$\sigma^*=\arg \min_{\sigma : P\times C \to \mathbb{R}_{\geq 0}, \sigma \sim \Gamma} \sum_{p\in P} \sum_{c\in C} \sigma(p,c) \dist(p,c)^z$.

\begin{proof}[Proof of \Cref{lemma:online_coreset}]
By the definition of \Cref{alg:main,alg:RingCoreset}, we have $\textsc{OnlineCoreset}(P) \cap \{p_1, p_2, \ldots ,p_i\} = \textsc{OnlineCoreset}(\{p_1, p_2, \ldots ,p_i\})$ when the randomness is fixed. Hence it is sufficient to show that for every $P$, there  exists some assignment-preserving coreset $S'$ of $P$ such that $S' = S$ and for every $p\in S', w_{S'}(p) \in (1 \pm \varepsilon) w_{S}(p)$.

By applying an union bound on the result of \Cref{lem:ring-main}, 
with probability at least $1-\delta$, for every $1\leq i\leq \ell, 0\leq j\leq \lceil \log (\Delta \sqrt{d}) \rceil$,
there exists a weighted point set $S'_{i,j}$ satisfying $S'_{i,j} = S_{i,j}$, $\forall p\in S'_{i,j}, w_{S'_{i,j}}(p) \in (1\pm \varepsilon) w_{S_{i,j}}(p)$, where $S_{i,j}$ is the output of $\textsc{OnlineCoreset}(R_{i,j})$ and let $\ell, \varepsilon', S_{i,j}, R_{i,j}$ follows their definition in \Cref{alg:main}, such that for every $1\leq i\leq \ell, 0\leq j \leq \lceil \lg (\Delta\sqrt{d}) \rceil$, center set $C$ and assignment constraint $\Gamma$ with respect to $P,C$, 
\begin{align*}
\big\vert\cost(S_{i,j}',C,\Gamma) &- \cost(P_{i,j},C,\Gamma)\big\vert  \\
\leq ~~& \frac{\varepsilon}{4} w(P_{i,j}) r_j^z+\frac{\varepsilon}{4} (\cost(S_{i,j}',C,\Gamma)+\cost(P_{i,j},C,\Gamma)) \\
\leq ~~& 2^{z-2}\varepsilon\sum_{p\in P_{i,j}} w_P(p) \dist(p,q_i)^z + \frac{\varepsilon}{4} (\cost(S_{i,j}',C,\Gamma)+\cost(P_{i,j},C,\Gamma))   
\end{align*}
 
Summing up together and we have
\begin{align*}
\big\vert\cost(S,C,\Gamma) &- \cost(P ,C,\Gamma)\big\vert \\
\leq~~ & \sum_{i=1}^\ell \sum_{j=1}^{\lceil \lg (\Delta\sqrt{d}) \rceil} \left|\cost(S_{i,j}',C,\Gamma_{i,j}) - \cost(P_{i,j},C,\Gamma_{i,j})\right| \\
\leq ~~& 2^{z-2}\varepsilon\sum_{i=1}^\ell \sum_{p\in P_i} w_P(p) \dist(p,q_i)^z+\frac{\varepsilon}{4}(\cost(S',C,\Gamma)+\cost(P,C,\Gamma))\\
\leq ~~& \frac{\varepsilon}{2}\cost(P,C,\Gamma) + \frac{\varepsilon}{4}\cost(S',C,\Gamma) \\
\leq ~~& \varepsilon\cdot \cost(P,C,\Gamma),
\end{align*}
where $\Gamma_{i,j}(c) = \sum_{p\in P_{i,j}} \sigma^*(p,c)$, $\sigma^*$ is the optimal assignment function for $\cost(P,C,\Gamma)$, and $Q(p)$ denotes the center $p$ assigned to in solution $Q$. 

The expectation size of the coreset can be bounded by
\begin{align*}
    E[|S|] &= \sum_{i=1}^\ell \sum_{j=0}^{\lceil \lg (\Delta\sqrt{d}) \rceil} |S_{i,j}| \leq \beta k \cdot \lg (\Delta\sqrt{d}) \cdot \poly(2^zdk\varepsilon^{-1}\log (w(P)\Delta\varepsilon^{-1}\delta^{-1} )) \\
    &= \poly(2^zdk\varepsilon^{-1}\log (w(P)\Delta\varepsilon^{-1}\delta^{-1} )).
\end{align*}

In summary, when we run $\textsc{OnlineCoreset}(P,\varepsilon, 0.01)$, it returns an online assignment-preserving $\varepsilon$-coreset $S$ such that $|S| = \poly(2^zdk\varepsilon^{-1}\log (w(P)\Delta\varepsilon^{-1}\delta^{-1} ))$ with probability $0.9$. 
\end{proof}

\begin{table}[t]
\centering
\caption{Specifications of datasets}
\begin{small}
\begin{tabular}{lllll}
\toprule
dataset  & size (approx.) & dimension ($d$) & sensitive attribute & window size\\
\midrule
Adult    & 50k   & 6  & gender & 500\\
Bank     & 45k   & 10 & marital & 500\\
Diabetes & 100k  & 8 & gender & 1000\\
Athlete  & 200k  & 3  & gender & 2000\\ 
Census   & 2500k  & 13 & gender & 5000\\
\bottomrule
\label{tab:dataset}
\end{tabular}
\end{small}
\vspace{-0.3in}
\end{table}

\begin{figure}[t]
    \centering
    \begin{subfigure}{\textwidth}
        \centering
    \includegraphics[width=\linewidth]{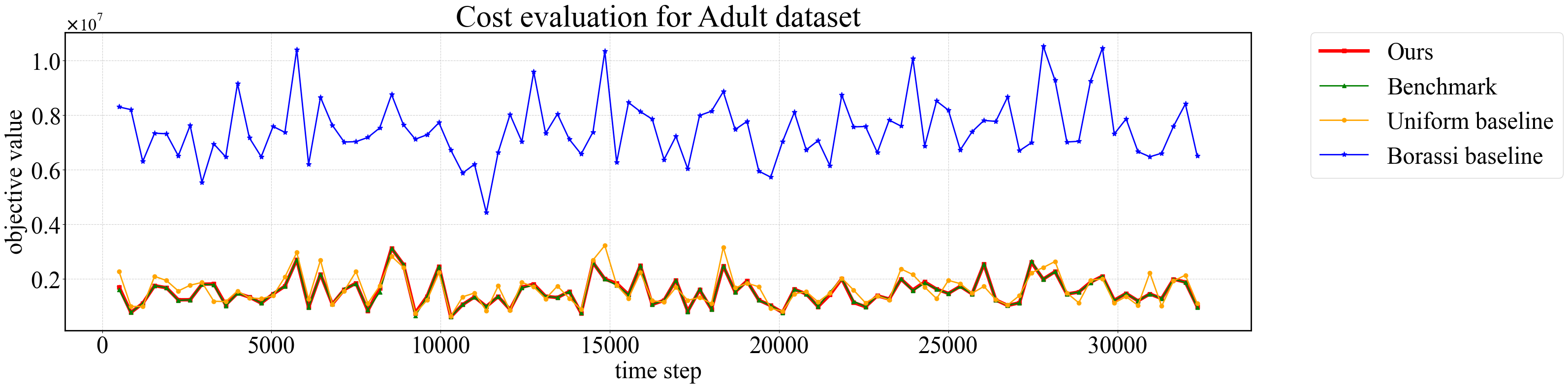}
    \end{subfigure}
    \begin{subfigure}{\textwidth}
        \centering
    \includegraphics[width=\linewidth]{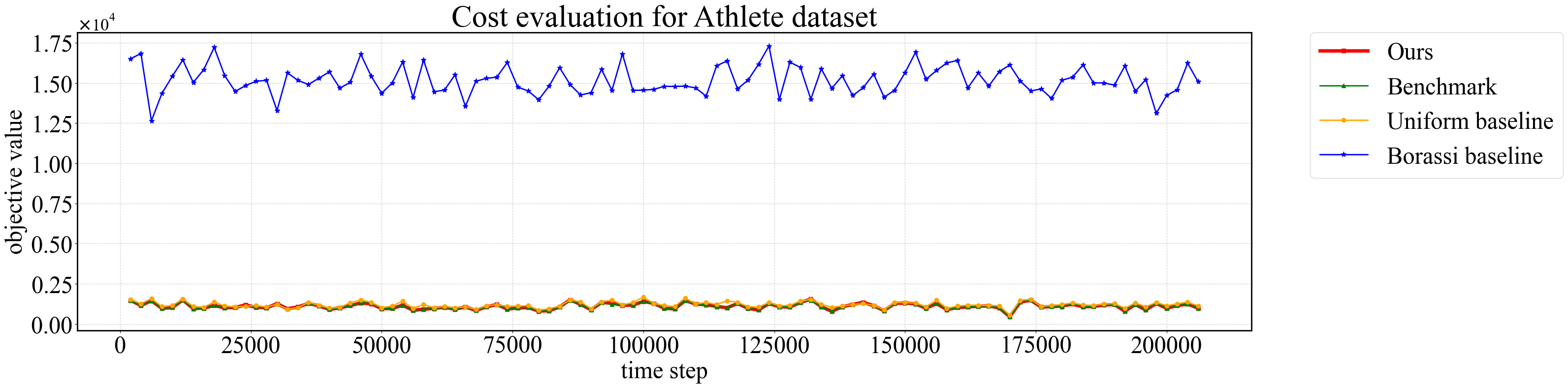}
    \end{subfigure}
    \begin{subfigure}{\textwidth}
        \centering
    \includegraphics[width=\linewidth]{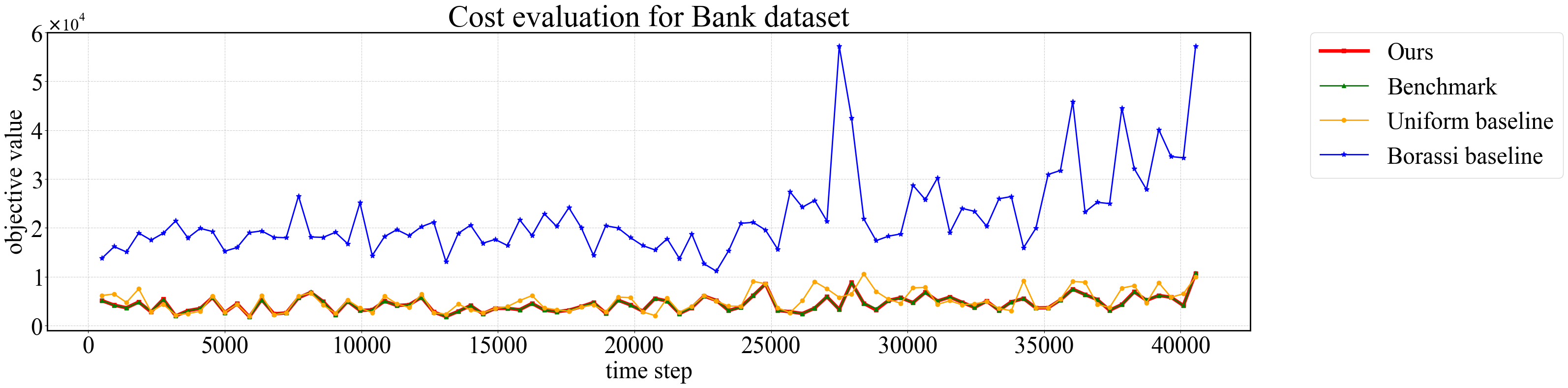}
    \end{subfigure}
    \begin{subfigure}{\textwidth}
        \centering
    \includegraphics[width=\linewidth]{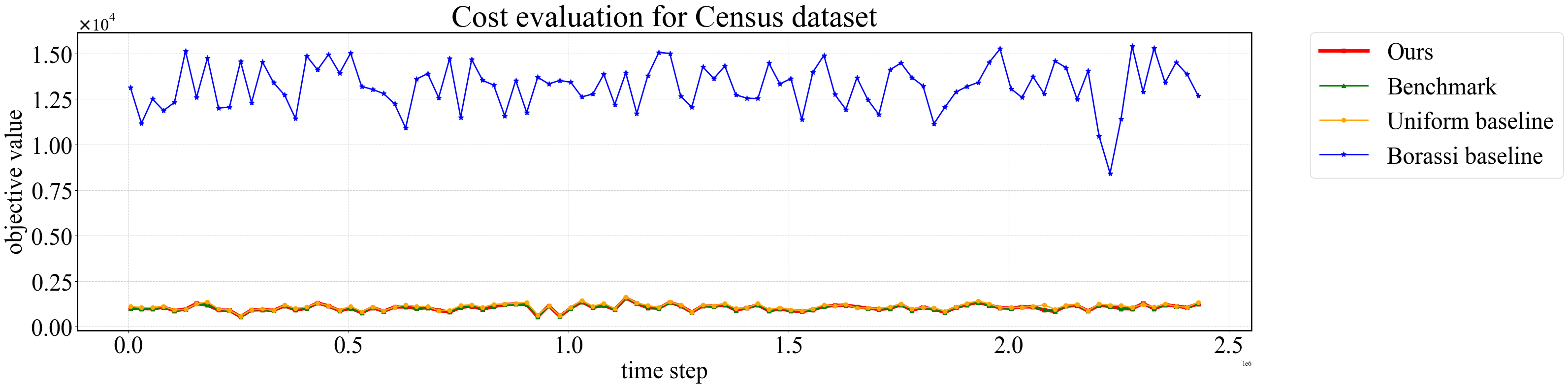}
    \end{subfigure}
    \begin{subfigure}{\textwidth}
        \centering
    \includegraphics[width=\linewidth]{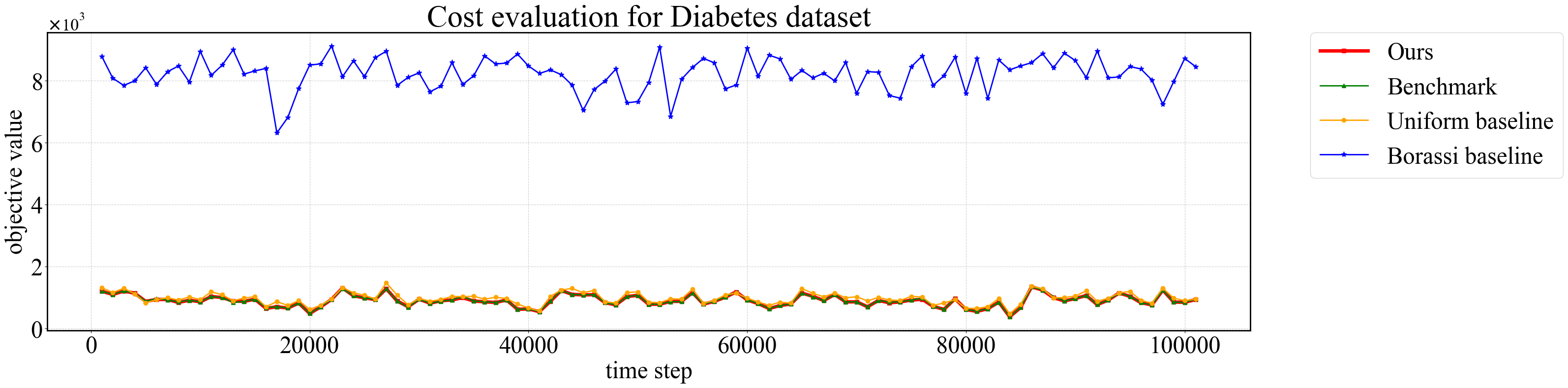}
    \end{subfigure}
    \caption{Fair \kMedian cost curves for all datasets.}
    \label{fig:more:exp}
\end{figure}

\begin{figure}[t]
\begin{subfigure}{\textwidth}
        \centering
    \includegraphics[width=\linewidth]{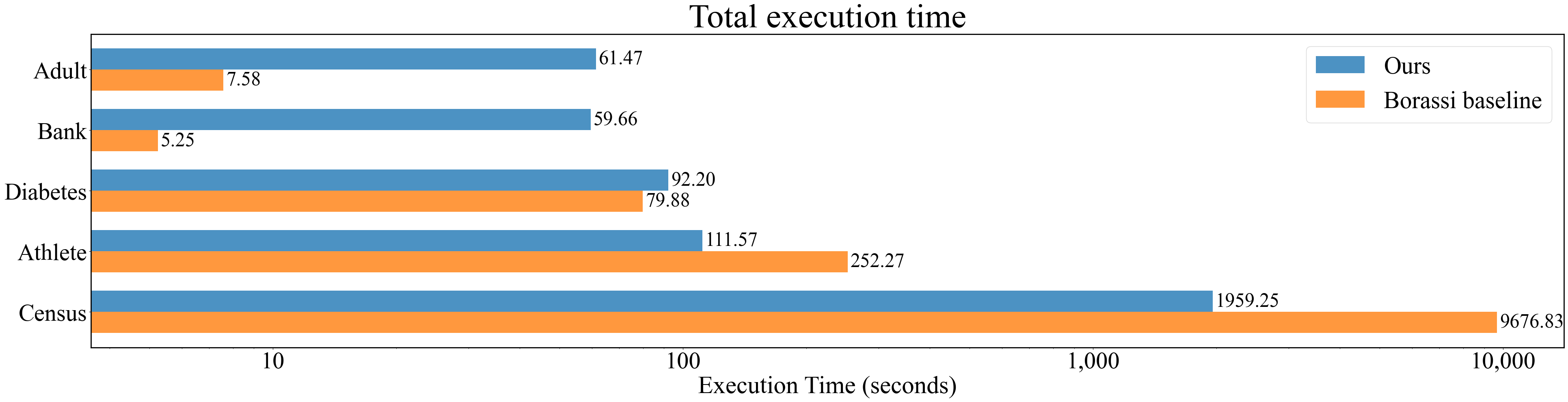}
    \end{subfigure}
    \caption{Running time for our algorithm and Borassi baseline across all datasets.}
    \label{fig:runtimes}
\end{figure}

\section{Streaming Algorithms}
\label{sec:streaming}

Our sliding window algorithm uses a framework introduced by \cite{BravermanDMMUWZ20,WoodruffZZ23}.
This framework reduces clustering on sliding windows to an online coreset problem, via a standard merge-and-reduce method, and the detailed algorithm is listed in \Cref{alg:reduction}.
This algorithm maintains a coreset for the sliding window, and its guarantee is summarized in \Cref{thm:streaming_coreset}.
The key idea is that every window can be viewed as a suffix of the input stream at some time step $t$, and if we build online coreset for all elements in the entire $1, \ldots, t$ time steps in the \emph{reverse} order of time steps, then the prefix of the online coreset (which is the key query that the coreset provides) precisely gives a coreset for the sliding window. 

\begin{algorithm} 
    \caption{Sliding window coreset algorithm based on online assignment-preserving coreset}
    \label{alg:reduction}

    \begin{algorithmic}
        \Procedure{MergeAndReduce}{$P$} 
            \State $t\gets 0$, $\ell \gets \lceil \lg m \rceil$
            \State initialize $\ell+1$ point sets $B_0, B_1, \ldots B_{\ell} \gets \varnothing$
            \While{input stream is not empty}
                \State $t \gets t+1$
                \State $p \gets$ the next element in input stream, $\mathrm{time}(p_i) \gets -i$
                
                \Comment{online coresets  is defined on the reverse stream}
                
                \State{let $j$ be the smallest index with $B_j=\varnothing$, set $j\gets m$ if such index does not exist}
                \State{$B_j\gets\textsc{OnlineCoreset}\left(\{p\}\cup B_0\cup\ldots\cup B_{j-1},\frac{\eps}{2\log m},\frac{\delta}{4m^2}\right)$}
                \State{clear the block $B_i$, i.e., $B_i\gets\varnothing$, for all $i\in\{0,1,\ldots,j-1\}$}
                \State the coreset of the current window (i.e. the $t-m+1, t-m+2, \ldots t$-th element in input stream) is $\{ p\in B_0 \cup B_1 \cup \ldots \cup B_\ell : \mathrm{time}(p) \in [-t, -t+m) \}$
            \EndWhile
        \EndProcedure
    \end{algorithmic}
\end{algorithm}

\begin{theorem}
\label{thm:streaming_coreset}
    There exists an algorithm that takes as input $P \subseteq [\Delta]^d$ presented as a point stream,
    $0 < \varepsilon <1, z \geq 1$, integer $k \geq 1$ and window size $m \geq 1$,
    computes a weighted subset $S$ of the sliding window after each point insertion,
    such that there exists an $\varepsilon$-coreset $S'$ of the sliding window for \kzC with assignment constraints,
    that has $S' = S$ and $\forall p \in S$, $w_S(p) \in (1 \pm \varepsilon) w_{S'}(p)$.
    The algorithm uses space $\poly(\varepsilon^{-1}kd\log(m\Delta))$,
    and succeeds with high probability.
\end{theorem}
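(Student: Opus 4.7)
The plan is to run \textsc{MergeAndReduce} (\algref{alg:reduction}) on the reversed stream, invoking the online assignment-preserving coreset construction from \lemref{lemma:online_coreset} at every merge step, and then read off a coreset for the current sliding window from the contents of the blocks $B_0, B_1, \ldots, B_\ell$. Because time steps are assigned negatively in reverse order, the last $m$ points of the original stream correspond to the first $m$ time steps in the reversed stream; hence querying \emph{prefixes} of the online coreset in the reversed order yields coresets of \emph{suffixes} of the original stream, which are exactly the sliding windows.

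First I would set the per-invocation accuracy and failure parameters to $\varepsilon' := \varepsilon/(2\log m)$ and $\delta' := \delta/(4m^2)$, and observe that along the whole execution only $O(m^2)$ calls to \textsc{OnlineCoreset} are made, so a union bound gives that every call meets the guarantee of \defref{def:online_coreset} with probability $1-\delta$. Next I would verify, by induction on the merge-and-reduce tree depth, that each block $B_j$ produced by the algorithm is an online assignment-preserving $(j\varepsilon')$-coreset of the corresponding substream. The induction step uses two facts: (i) coresets of coresets compose with an additive loss in $\varepsilon$, which was formalized for assignment-preserving coresets in~\cite{BravermanCJKST022}, and (ii) by \defref{def:online_coreset}, the stored weighted set differs from a \emph{true} assignment-preserving coreset only by a multiplicative $(1\pm\varepsilon')$ factor on weights, so when this set is fed to the next-level call, we can replace the weights by the ``true'' ones at the cost of an extra $(1\pm\varepsilon')$ distortion in all $\cost(\cdot,C,\Gamma)$ values. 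Since there are at most $\log m$ levels, the accumulated multiplicative error is at most $(1+\varepsilon')^{O(\log m)} \le 1+\varepsilon$.

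To conclude, at the current time step $t$ the sliding window equals the set of points with time stamp in $[-t,-t+m)$, and by construction (\algref{alg:reduction}) these are exactly a prefix of the reversed stream that has been processed so far. By the online property of the top-level combined set $\bigcup_j B_j$ together with the compositional argument above, there exists a weighted point set $S'$ supported on the same points as our stored $S$, with $w_S(p)\in(1\pm\varepsilon)w_{S'}(p)$ for all $p \in S$, that is an assignment-preserving $\varepsilon$-coreset for the window. The space bound follows from $O(\log m)$ blocks, each of size $\poly(\varepsilon^{-1}kd\log(m\Delta))$ by \lemref{lemma:online_coreset} (after rescaling $\varepsilon\mapsto\varepsilon'$ and $\delta\mapsto\delta'$, which only affects polylog factors).

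The main obstacle will be the inductive composition across the $\log m$ merge levels: the online coreset guarantee only promises a \emph{nearby} weighted set $S'$ that is an honest coreset, not $S$ itself. I expect to handle this by carrying the slackness as a separate multiplicative factor throughout the induction, ensuring both that the prefix (window) guarantee is preserved after each merge, and that errors from upstream slackness, fresh coreset error, and the composition of assignment-preserving coresets combine into a single $(1+\varepsilon)$ approximation at the root.
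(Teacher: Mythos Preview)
Your proposal is correct and follows essentially the same approach as the paper: run \textsc{MergeAndReduce} with the online assignment-preserving coreset of \lemref{lemma:online_coreset}, argue by induction on block level that $B_j$ is an online coreset with distortion $(1+\varepsilon')^j$ (you phrase it additively as $j\varepsilon'$, which is equivalent here), and then use the prefix property on the reversed time order to extract the window. The paper's write-up is terser and does not spell out the weight-slackness propagation you flag as the main obstacle; your plan to carry this $(1\pm\varepsilon')$ factor explicitly through the induction is exactly what is needed to make that step rigorous.
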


\begin{proof}
Consider \Cref{alg:reduction}. 
Observe that for each $i\in\{0,1,\ldots,\ell\}$, $B_i$ is an assignment-preserving online $((1+\frac{\eps}{2\log m})^i-1)$-coreset for a sequence of $2^{i}$ consecutive points and the sequences of all $B_i$ ($i\in\{0,1,\ldots,\ell\}$) are disjoint. 
Every time we set $B_i\gets\textsc{OnlineCoreSet}(\{p\}\cup B_0\cup\ldots\cup B_{j-1},\frac{\eps}{2\log m},\frac{\delta}{2m})$, $B_0, B_1, \ldots ,B_{i-1}$ must be non-empty. By induction, $|\{p\}\cup B_0\cup\ldots\cup B_{j-1}| = 1+\sum_{j=0}^{i-1}2^j=2^i$ and the distortion of $B_i$ is at most $\left(1+\frac{\eps}{2\log m}\right)^i$. 

Suppose the input stream is $\{p_i\}_{i\geq 1}$. 
For every time $t$, after we have inserted the newly arrived point $p_t$, $B_0 \cup B_1 \cup \ldots \cup B_\ell$ is an assignment-preserving online $\varepsilon$-coreset of $\{p_t, p_{t-1}, \ldots , p_{t-2^{\ell}+1}\}$ (sorted by time steps). By the definition of online coreset, $\{ p\in B_0 \cup B_1 \cup \ldots \cup B_\ell : \mathrm{time}(p) \in [-t, -t+m) \}$ is an assignment-preserving $\varepsilon$-coreset of $\{p_t, p_{t-1}, \ldots , p_{t-m+1}\}$, which is the current window. 
Since at most $2^\ell$ times of merge-and-reduce is involved in $B_0\cup\ldots\cup B_{j-1}$, the failure probability for each time $t$ is at most $\delta$. 
This finishes the proof of \Cref{thm:streaming_coreset}.
\end{proof}
\begin{proof}[Proof of \Cref{thm:streaming_fair}]
Our algorithm is a modification of the following standard steps of obtaining fair clustering solution from assignment-preserving coresets.
\cite{HuangJV19,BravermanCJKST022} shows that
if one partitions the dataset $P$ with respect to the combinations of groups that each point belongs to,
and constructs an assignment-preserving $\varepsilon$-coreset on each of them
and take the union of all these coresets, denoted as $S$,
then the value of the optimal $(\alpha, \beta)$-fair clustering on $S$ is a $(1 + \varepsilon)$-approximation to that of $P$.

\textbf{Our algorithm.}
Our algorithm follows similar steps.
We apply \Cref{thm:streaming_coreset} on the partition of $P$,
then take the union to obtain some set $S'$.
This $S'$ has size at most $L$ times that of the coreset size in \Cref{thm:streaming_coreset},
and this dominates the space complexity.
Finally, in the last step we find a $((1 - \varepsilon) \alpha, (1 + \varepsilon)\beta)$-fair clustering on $S'$ (which has a relaxed fairness constraint).

To see why this clustering has value at most $(1 + \varepsilon)$ times the optimal $(\alpha, \beta)$-fair clustering on $P$, consider the optimal $(\alpha, \beta)$-fair solution on $S$ (the set defined in the first paragraph), then this solution is $(1 + \varepsilon)$-approximate to that on $P$. 
Now, by the guarantee of \Cref{thm:streaming_coreset}, this same solution (with the same objective value) is a feasible $((1-\varepsilon)\alpha, (1 - \varepsilon)\beta)$-fair clustering with respect to $S'$. 
Notice that what our algorithm finds is no worse than this solution on $S'$, and hence we conclude the clustering has cost at most $(1 + \varepsilon)$ times the optimal solution. 
This finishes the proof.
\end{proof}

\section{Experiments}
\label{sec:exp}
We implement our coresets and evaluate their performance for solving fair \kMedian in sliding window. 
This requires us to employ a down-stream approximation algorithm
for fair \kMedian which is run on the coreset.
We choose to use Fairtree~\cite{backurs2019scalable}
in our experiments,
which achieves competitive performance in practice. 

\textbf{Baselines.}
We employ three baselines. 
Two of the baselines are based on coresets, called ``Benchmark'' and ``Uniform'', which construct the coreset in an alternatively way than ours but still apply Fairtree as the down-stream approximation algorithm.
Specifically, in ``Benchmark'', we take the trivial coreset consisting of all points in the sliding window, which serves as a benchmark for the accuracy of coresets; and in ``Uniform'', we construct the coreset by uniform sampling from the sliding window which is a natural heuristic. 
The last baseline is a previous sliding-window algorithm designed for clustering without fairness constraints~\cite{BorassiELVZ20}, and we call it ``Borassi''. 

\textbf{Datasets.}
We evaluate the algorithms on $5$ real datasets: 
Adult~\cite{adult_2}, Bank~\cite{bank_marketing_222}, Diabetes~\cite{diabetes_34}, Athlete~\cite{daily_and_sports_activities_256}, and
Census~\cite{us_census_data_(1990)_116},
which have also been used in various previous studies on (fair) clustering~\cite{bera2019fair, Chierichetti0LV17, schmidt2018fair, HuangJV19}. 
For each dataset, we extract numerical features to construct a vector in $\mathbb{R}^d$ for each record, and we select a binary sensitive attribute. 
We set the window size $\approx \frac{n}{100}$. 
We list the detailed parameters of the datasets in \Cref{tab:dataset}.

\textbf{Experiment setup.} 
We choose $k=10$ in all experiments. 
When implementing our coreset, we directly specify a target coreset size instead of using the worst-case bound as we established in previous sections. 
Due to variations in dataset sizes and corresponding window sizes, we assigned different coreset sizes for each dataset. 
We set this target size $100$ for both Adult and Bank, $200$ for Diabetes, $500$ for Athlete, and $1000$ for Census.
All the experiments are run on a MacBook Air 15.3 with an Apple M3 chip (8 cores, 2.22 GHz), 16GB RAM, and macOS 14.4.1 (23E224).

\textbf{Experiment results.} 
We run our algorithm and three baselines across all five datasets. 
We depict their cost curves in \Cref{fig:more:exp} over the time steps. 
These curves demonstrate that our algorithm achieves a comparable objective value while using much lower space than Benchmark which needs to store the whole window,
and it performs slightly better than the Uniform baseline, especially with smaller variance (which is as expected, since the worst-case variance of uniform sampling can be unbounded even without fairness constraints). 
Finally, compared with Borassi, our algorithm performs better in accuracy,
and as can be seen from \Cref{fig:runtimes},
we also achieve a better running time on larger datasets.

\section*{Acknowledgements}
Research of Shaofeng Jiang was supported in part by a national key R\&D program of China No. 2021YFA1000900 and a startup fund from Peking University. 
Samson Zhou is supported in part by NSF CCF-2335411. 
The work was conducted in part while Samson Zhou was visiting the Simons Institute for the Theory of Computing as part of the Sublinear Algorithms program.
We thank the anonymous reviewers for insightful comments. 

\def\shortbib{0}
\bibliographystyle{alpha}
\bibliography{references}

\appendix

\section{Lower Bounds}
\applab{app:lb}
In this section, we show that fair clustering cannot be performed in the sliding window model in sublinear space, up to any finite multiplicative error. 

We first define the one-way two-party augmented indexing communication problem. 
Formally, in the $\AugInd_n$ problem, Alice receives a vector $v\in\{0,1\}^n$ and Bob receives an index $i\in[n]$, as well as the values of $v_1,\ldots,v_{i-1}$. 
The goal is for Alice to send some message $\Pi$ to Bob, so that Bob may compute $v_i$ with probability at least $\frac{2}{3}$. 
We recall the following communication complexity lower bounds for the augmented indexing problem. 

\begin{theorem}
\cite{KremerNR99}
\thmlab{thm:aug:index}
Any protocol that solves $\AugInd_n$ with probability at least $\frac{2}{3}$ requires $\Omega(n)$ bits of communication. 
\end{theorem}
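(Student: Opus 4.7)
The plan is to establish the lower bound by a standard information-theoretic argument, working under the uniform distribution on the input $v \in \{0,1\}^n$ and invoking Yao's minimax principle to pass to worst-case bounds for randomized protocols. Let $\Pi$ denote Alice's message as a random variable. The central inequality to be proved is $I(v;\Pi) \geq \Omega(n)$; the desired $|\Pi| = \Omega(n)$ then follows from $I(v;\Pi) \leq H(\Pi) \leq \mathbb{E}[|\Pi|]$ since the worst-case communication is at least the expected communication over any input distribution.

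The mutual-information bound itself I would obtain from the chain rule
\[
I(v;\Pi) \;=\; \sum_{i=1}^n I(v_i;\Pi \mid v_1,\ldots,v_{i-1}),
\]
together with the claim that each summand is $\Omega(1)$. Under the uniform distribution on $v$ the coordinates are independent, so $H(v_i \mid v_{<i}) = 1$. Correctness of the protocol means that from $\Pi$ and $v_{<i}$ (and the fixed query index $i$, which is a constant in the conditional) Bob predicts $v_i$ with error at most $1/3$; applying Fano's inequality pointwise in $(\Pi, v_{<i})$ and then Jensen's inequality via concavity of the binary entropy $H_2$ gives $H(v_i \mid \Pi, v_{<i}) \leq H_2(1/3)$. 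Subtracting yields $I(v_i;\Pi \mid v_{<i}) \geq 1 - H_2(1/3) = \Omega(1)$, and summing over $i$ produces the $\Omega(n)$ bound.

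The main subtlety I anticipate is cleanly handling the protocol's randomness. The simplest route is to apply Yao's minimax principle at the outset: fix a setting of the public/private coins so that the resulting deterministic protocol still succeeds with probability at least $2/3$ over $v$ drawn uniformly, and then run the information-theoretic argument above on the deterministic transcript, where $\Pi$ is a deterministic function of $v$ alone. An alternative is to keep the randomness explicit and condition on the coins in every entropy expression, using that the conditional distribution of $v_i$ given $v_{<i}$ and the coins remains uniform because $v$ is drawn independently of the coins. Either way the heart of the argument is the one-line chain-rule calculation, and I do not expect a genuine obstacle beyond this standard bookkeeping; the proof is essentially the Kremer--Nisan--Rodeh encoding-style lower bound recast in information-theoretic language.
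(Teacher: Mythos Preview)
The paper does not prove this statement at all: it is simply cited as a known result from \cite{KremerNR99} and used as a black box to derive \corref{cor:balance:aug:ind}. There is therefore no ``paper's own proof'' to compare against.

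That said, your sketch is the standard and correct argument. One small point worth tightening: when you take the Yao route and fix coins against the uniform distribution on $v$ (or on $(v,i)$), you do not automatically get that the deterministic protocol has error at most $1/3$ \emph{for every} index $i$; you only get that the average error over $i$ is at most $1/3$. So the claim that each summand $I(v_i;\Pi\mid v_{<i})$ is individually $\geq 1-H_2(1/3)$ is not quite justified in that variant; rather, one shows $\sum_i H_2(\mathrm{err}_i)\leq n\,H_2(1/3)$ via the concavity of $H_2$ (the Jensen step you already mention) and then concludes the $\Omega(n)$ bound on the sum. Your ``alternative'' route---keeping the protocol randomness $R$ explicit and conditioning on it throughout, using $I(v;R)=0$ and $I(v;\Pi\mid R)\leq H(\Pi\mid R)\leq \mathbb{E}[|\Pi|]$---avoids this subtlety and really does give each term $I(v_i;\Pi,R\mid v_{<i})\geq 1-H_2(1/3)$ directly, since for each fixed $i$ the protocol succeeds with probability at least $2/3$ over $(v,R)$. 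Either way the conclusion is the same.
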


We now show that the augmented indexing communication problem is hard even if it is promised that half of the entries of Alice's input vector are nonzero. 
Formally, in the $\AugInd^n_{2n}$ problem, Alice receives a vector $v\in\{0,1\}^{2n}$, which has exactly $n$ nonzeros, and Bob receives an index $i\in[2n]$, as well as the values of $v_1,\ldots,v_{i-1}$. 
The goal is again for Alice to send some message $\Pi$ to Bob, so that Bob may compute $v_i$ with probability at least $\frac{2}{3}$. 

\begin{corollary}
\corlab{cor:balance:aug:ind}
Any protocol that solves $\AugInd^n_{2n}$ with probability at least $\frac{2}{3}$ requires $\Omega(n)$ bits of communication. 
\end{corollary}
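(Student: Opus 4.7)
\begin{proofof}{\corref{cor:balance:aug:ind}}
The plan is to reduce $\AugInd_n$ to $\AugInd^n_{2n}$, so that any communication-efficient protocol for the balanced variant yields one for the general problem, contradicting \thmref{thm:aug:index}.

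Suppose Alice holds $v \in \{0,1\}^n$ and Bob holds $i \in [n]$ together with $v_1, \ldots, v_{i-1}$, forming an instance of $\AugInd_n$. Alice locally constructs the vector $v' \in \{0,1\}^{2n}$ by appending the bitwise complement of $v$, i.e., $v' := (v_1, \ldots, v_n, 1 - v_1, \ldots, 1 - v_n)$. By construction, exactly $n$ coordinates of $v'$ are equal to $1$, so $v'$ is a valid input to $\AugInd^n_{2n}$. Bob uses the same index $i$ and observes that $v'_j = v_j$ for every $j < i$, so he already possesses the required prefix $v'_1, \ldots, v'_{i-1}$ without any additional communication.

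Alice and Bob now run the assumed protocol for $\AugInd^n_{2n}$ on the instance $(v', i)$; with probability at least $\tfrac{2}{3}$, Bob recovers $v'_i = v_i$, which is precisely the answer to the original $\AugInd_n$ instance. The total communication equals that of the $\AugInd^n_{2n}$ protocol, since the reduction is purely local. Hence a protocol for $\AugInd^n_{2n}$ using $o(n)$ bits would yield a protocol for $\AugInd_n$ using $o(n)$ bits, contradicting \thmref{thm:aug:index}. Therefore $\AugInd^n_{2n}$ requires $\Omega(n)$ bits of communication.

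No step here appears to be a serious obstacle: the reduction is parameter-free, preserves the success probability exactly, and requires no randomness on either side beyond what the protocol itself uses. The only sanity check worth emphasizing is that Bob genuinely has access to the augmented prefix of $v'$, which follows from the fact that the first $n$ coordinates of $v'$ coincide with $v$.
\end{proofof}
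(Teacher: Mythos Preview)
Your proof is correct and follows essentially the same reduction as the paper: pad Alice's length-$n$ vector to a balanced length-$2n$ vector without disturbing the first $n$ coordinates, so Bob's index and prefix carry over unchanged. The only cosmetic difference is that the paper appends $n-s$ ones followed by $s$ zeros (where $s$ is the Hamming weight of $v$), whereas you append the bitwise complement; both yield exactly $n$ ones and leave $v'_i = v_i$ for $i \in [n]$.
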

\begin{proof}
Suppose there exists a protocol $\Pi$ that solves $\AugInd^n_{2n}$ with probability at least $\frac{2}{3}$. 
Given an instance of $\AugInd_n$, let $s$ be the number of nonzero entries in Alice's input vector $v\in\{0,1\}^n$. 
Then Alice can create a vector $u\in\{0,1\}^{2n}$ with exactly $n$ nonzeros by appending to $v$ a total of $n-s$ coordinates that are $1$, followed by $s$ coordinates that are $0$. 
Since the number of nonzero entries in $v$ is $s$ and Alice appends $n-s$ nonzero entries afterwards, then it follows that $u$ has exactly $n$ nonzero coordinates. 
Alice and Bob can then run the protocol $\Pi$ for $\AugInd^n_{2n}$ on $u$ to find $u_i$, where $i\in[n]$ is the input to $\AugInd_n$. 
By construction, we have $u_i=v_i$, and thus the protocol solves $\AugInd_n$ with probability at least $\frac{2}{3}$, due to the correctness of $\Pi$. 
Hence, $\Pi$ requires $\Omega(n)$ bits of communication by \thmref{thm:aug:index}.
\end{proof}
We now prove our lower bound for fair clustering in the sliding window model. 
\thmmainlb*
\begin{proof}
Let $v\in\{0,1\}^{2n}$ with exactly $n$ nonzero entries and $i\in[2n]$ be an input to $\AugInd^n_{2n}$. 
Alice creates an instance of fair $(k,z)$-clustering on $\mathbb{R}$, with $k=2$.
There are two groups $C_1, C_2$. The fairness constraint is that every cluster should contain at least $0.5$-fraction of points from $C_2$. 
For each $i\in[2n]$, Alice adds a point $x_i$ at $v_i\cdot \Delta$, all points in sequence $x$ (including those points which are defined later) belong to $C_1$. 
There are exactly $n$ points at the origin and exactly $n$ points at $\Delta$. 
Alice then creates a stream $S$ by inserting the points $x_1,\ldots,x_{2n}$ in order. 

Let $\Pi$ be any algorithm for fair $(k,z)$-clustering in the sliding window model, with window size $4n$.  
Alice then runs $\Pi$ on $S$ and passes the state of the algorithm to Bob, who has $v_1,\ldots,v_{i-1}$ and thus also $x_1,\ldots,x_{i-1}$. 
Bob sets $x_{2n+j}=x_j$ for $j\in[i-1]$ and inserts the points into the stream. 
Then Bob sets $x_{2n+i}=0$ and inserts the point into the stream. 
Finally, Bob inserts $n$ points that are located at the origin and belong to $C_2$, another $n$ points that are located at $\Delta$ and belong to $C_2$. 
By construction, the active points in the window are precisely the $2n$ points from $C_1$, i.e. $x_{i+1},\ldots,x_{2n+i}$ and $2n$ points from $C_2$. 

Since all points are located at $\{0, \Delta\}$, the optimal center set is $\{0, \Delta\}$. 
The number of points from $C_1$ that are located at the origin equals to 
$|\{j\in [i+1, 2n+i] : x_j = 0\}| = n + I(x_i = 1)$. 
\begin{itemize}
    \item If $x_i = 0$, then both location $0$ and $\Delta$ contains $n$ points from $C_1$ and $n$ points from $C_2$. The optimal solution is to assign every point $p$ to the center at the same location as $p$, which requires zero cost. 
    \item If $x_i = 1$, then location $0$ contains $n+1$ points from $C_1$ and $n$ points from $C_2$. In the optimal solution, either one of the points from $C_1$ and located at $0$ is assigned to center $\Delta$, or one of the points from $C_2$ and located at $\Delta$ is assigned to center $0$. For either case, the cost of optimal solution is no less than $\Delta$. 
\end{itemize}
Thus if $\Pi$ achieves either any multiplicative approximation or additive error $\frac{\Delta}{2}-1$ with probability at least $\frac{2}{3}$, then $\Pi$ can distinguish between these cases, thereby solving $\AugInd^n_{2n}$. 
Hence by \corref{cor:balance:aug:ind}, $\Pi$ uses $\Omega(n)$ bits of space. 
\end{proof}

We can slightly modify the proof of Theorem 1.1 to obtain a lower bound of arbitrary additive violation $\delta$. 
\begin{theorem}
Any algorithm that, with probability at least $2/3$, either achieves any multiplicative approximation or additive $\Delta/2-1$ error, even using $\delta$ additive violation in the fairness constraints must use $\Omega(n/\delta)$ space. 
\end{theorem}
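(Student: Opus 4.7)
The plan is to rerun the reduction from $\AugInd^{n'}_{2n'}$ used in the proof of \thmref{thm:main:lb}, but to inflate every point of the construction into $s := 2\delta+1$ identical copies so that the $\delta$ additive slack in the fairness constraint is no longer enough to absorb the single-bit imbalance that encodes $v_i$. I would take $n' := \Theta(n/\delta)$ so that the total stream still fits within $n$ insertions, and invoke \corref{cor:balance:aug:ind} to obtain $\Omega(n') = \Omega(n/\delta)$ communication, hence the claimed space lower bound.

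Concretely, Alice and Bob follow the same protocol as in the proof of \thmref{thm:main:lb}, except that for every bit $v_j$ they insert $s$ copies (instead of one) of a $C_1$-point at location $v_j \Delta$; Bob likewise uses $s$ copies for the replay of $v_1,\ldots,v_{i-1}$ and for the $v_i = 0$ guess, and finally inserts $n's$ copies of $C_2$ at $0$ and $n's$ copies of $C_2$ at $\Delta$. The fairness constraint is $\alpha_2 = \beta_2 = 1/2$, relaxed by an additive slack of $\delta$ on the count of $C_2$-points in each cluster.

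The case $v_i = 0$ is unchanged: each of the two clusters contains exactly $n's$ points of $C_1$ and $n's$ of $C_2$, so the natural assignment satisfies strict fairness with cost $0$. In the case $v_i = 1$, the left cluster has $(n'+1)s$ points of $C_1$ and $n's$ of $C_2$, for a total of $(2n'+1)s$ points, so strict fairness would require a $C_2$-count of $(2n'+1)s/2$, leaving a deficit of $s/2 = \delta + 1/2$. Since this strictly exceeds the slack $\delta$, at least one point must be reassigned across the distance-$\Delta$ gap, giving cost at least $\Delta^z$; the symmetric analysis applies to the right cluster and to reassignments of either $C_1$ or $C_2$ points. Hence any finite multiplicative approximation or additive error below $\Delta/2 - 1$ distinguishes the two cases, exactly as in the proof of \thmref{thm:main:lb}.

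The main obstacle is ruling out clever assignments that try to spread the $\delta$ slack between the two clusters: one must verify that any single integer reassignment shifts each cluster's $C_2$-deficit by exactly one, so the starting deficit $s/2 = \delta + 1/2$ cannot be driven to within $\delta$ by a zero-cost move, and thus the choice $s = 2\delta + 1$ is tight in forcing at least one paid reassignment of cost $\Delta^z$. With this verification, plugging the resulting distinguisher into \corref{cor:balance:aug:ind} with $n' = \Theta(n/\delta)$ yields the $\Omega(n/\delta)$ space bound.
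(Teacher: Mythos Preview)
Your proposal is correct and follows essentially the same approach as the paper: both proofs inflate every point of the \thmref{thm:main:lb} construction by $\Theta(\delta)$ identical copies so that the single-bit imbalance survives the additive slack, then invoke \corref{cor:balance:aug:ind} on an instance of size $\Theta(n/\delta)$. The only difference is the replication factor---the paper uses $\delta+1$ while you use $2\delta+1$---but this is immaterial to the asymptotic bound, and your more explicit verification that the deficit $s/2 = \delta + 1/2$ exceeds the slack is in fact a cleaner justification than the paper's ``Similar to the proof of Theorem~1.1'' hand-wave.
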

\begin{proof}
Using the construction in the proof of \thmref{thm:main:lb}, we can obtain a point sequence $A$ of length $\lfloor\frac{n}{\delta+1}\rfloor$, such that any algorithm for fair $(k,z)$-clustering in the sliding window that either achieves any multiplicative approximation or additive $\Delta/2-1$ error on instance $A$ with probability at least $2/3$, must use $\Omega(n/\delta)$ space. 

Next we construct sequence $B$ by repeating each point in $A$ for $\delta+1$ times, i.e. $B=(A_1, A_1, \ldots , A_1, A_2, \ldots , A_2, A_3, \ldots , A_{\lfloor\frac{n}{\delta+1}\rfloor})$. In the proof of Theorem 1.1, we reduce $\mathrm{AugInd}_{2n}^n$ to fair clustering in the sliding window by running the fair clustering algorithm on $A$ to distinguish whether window $(x_{i+1}, x_{i+2}, \ldots , x_{2n+i-1}, 0)$ has $n$ zeros or $n+ 1$ zeros ($x$ is defined in the proof).  Now suppose we have a fair clustering in the sliding window algorithm $\mathcal{A}$ which outputs a solution with at most $\delta$ additive violation and at most $\Delta/2-1$ additive error (to the optimal solution with no violation). Similar to the proof of Theorem 1.1, we can also distinguish whether window $(x_{i+1}, x_{i+2}, \ldots , x_{2n+i-1}, 0)$ has $n$ zeros or $n+ 1$ zeros by running $\mathcal{A}$ on $B$, which completes the reduction from $\mathrm{AugInd}_{2\lfloor\frac{n}{\delta+1}\rfloor}^{\lfloor\frac{n}{\delta+1}\rfloor}$ to fair clustering in the sliding window with additive violation. This implies $\mathcal{A}$ must use $\Omega(n/\delta)$ space. 
\end{proof}

\section{Meyerson Sketch}
We provide a brief overview of the Meyerson sketch~\cite{meyerson2001online} and its key properties relevant to our work. 
The Meyerson sketch achieves a bicriteria $(C_1,C_2)$-approximation for $(k,z)$-clustering on a data stream consisting of points $x_1,\ldots,x_n\in [\Delta]^d$, where $C_1=2^{z+7}$ and $C_2=\O{2^{2z}\log n\log \Delta}$, i.e., it provides a $C_1$-approximation while using at most $C_2k$ centers. 
An important feature of the Meyerson sketch that we shall utilize is that upon the arrival of each point $x_i$, the algorithm permanently assigns $x_i$ to one of the $C_2 k$ centers, even if there is subsequently a center closer to $x_i$ that is opened. 
Moreover, the clustering cost at the end of the stream is determined based on the center to which $x_i$ was assigned at time $i$.

To simplify the discussion, we describe the Meyerson sketch for the case where $z=1$, noting that the intuition extends naturally to other values of $z$. 
The Meyerson sketch operates using a guess-and-double strategy, where it begins by estimating the optimal clustering cost. 
Based on this estimated cost, it randomly converts each point $x_i$ into a center, with probability proportional to the distance of the point $x_i$ from the existing centers at time $i$.  
If the algorithm opens too many centers, it concludes that the estimated optimal clustering cost was too low and doubles the value of the guess.

\section{Proof of \texorpdfstring{\Cref{lem:ring-main}}{}}
\label{sec:proof_lem_ring_main}

\lemringmain*

To prove \Cref{lem:ring-main},
we start with showing the following \Cref{lem:ring-fix-constraint} which has the online coreset error guarantee only for one center set $C$ and assignment constraint $\Gamma$, and only holds when all centers are close to $q$.
The condition of \Cref{lem:ring-fix-constraint} is guaranteed by \Cref{lem:our-union-bound}, which reduces infinitely many pairs of $(C,\Gamma)$ with unbounded $\diam(C)$ to finite pairs with $\diam(C) = \O{z/\varepsilon}\cdot \diam(P)$. 
We will conclude \Cref{lem:ring-main} by combining \Cref{lem:ring-fix-constraint} and \Cref{lem:our-union-bound}.

\newcommand{\Rfar}{R_{\mathrm{far}}}
\newcommand{\calCq}{\mathcal{C}_q}

Define $\Rfar := 10zr/\varepsilon$ and $\Cfar:=\{c\in C:d(q,c)>\Rfar\}$. \Cref{lem:ring-fix-constraint} assumes the point set $P$ lies in $B(q,\Rfar)$. 

\begin{restatable}{lemma}{lemringfc}
\label{lem:ring-fix-constraint}
    For every $r>0, \varepsilon,\delta\in (0,1), d,k,z\geq 1,q\in [\Delta]^d$, weighted set $P\subseteq [\Delta]^d \cap \ring(q,r/2,r)$, center set $C \subseteq B(q, \Rfar), |C| \leq k$,
    and assignment constraint $\Gamma$ with respect to $P,C$, let $N = w(P)$ and $S$ be the output of $\textsc{RingCoreset}(P)$. We have $E[|S|] = \poly(2^zdk\varepsilon^{-1}\log (N\Delta\varepsilon^{-1}\delta^{-1} ))$, and with probability at least $1-\delta'$ (let $\delta'$ follows the definition in \Cref{alg:RingCoreset}), there exists a weighted set $S'$ with $\left|\cost(S',C,\Gamma) - \cost\left(P, C, \Gamma\right)\right| \leq \varepsilon N r^z$, such that $S' = S$ and for every $p\in S$, $w_{S'}(p) \in (1 \pm \varepsilon) w_S(p)$. 
\end{restatable}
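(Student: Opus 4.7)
The plan is to split the proof into bounding $E[|S|]$ and constructing a reweighted set $S'$ satisfying the three bulleted properties. For the size, linearity and $\prob_i \le T w_P(p_i)/\vsum_i$ give $E[|S|] \le T\sum_i w_P(p_i)/\vsum_i$, a harmonic-type sum bounded by $O(T \log w(P))$ (using $w_P(p_i)/\vsum_i \le \ln(\vsum_i/\vsum_{i-1})$ and telescoping, with the first term handled separately since $\prob_1 = 1$). To define $S'$, I take the same point set as $S$ but rescale the weights to $w_{S'}(p) := w_S(p) \cdot N/w_S(S)$, which enforces $w_{S'}(S') = N = \Gamma(C)$ so that the assignment LP on $S'$ is feasible. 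Since $w_S(p_i) = \vsum_i/T$ whenever $\prob_i < 1$, a direct computation gives $E[w_S(S)] = N$ with variance $O(N^2/T)$ and maximum summand $O(N/T)$; Bernstein's inequality then gives $|w_S(S) - N| \le \varepsilon N$ with probability at least $1 - \delta'/2$, yielding $w_{S'}(p) \in (1 \pm 2\varepsilon) w_S(p)$, which becomes $(1 \pm \varepsilon)$ after rescaling $\varepsilon$.

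For the cost guarantee, the plan is to exploit the ring structure via the decomposition $d(p,c)^z = d(q,c)^z + \phi_c(p)$, where $\phi_c(p) := d(p,c)^z - d(q,c)^z$. For any feasible $\sigma$ satisfying $\sum_p \sigma(p,c) = \Gamma(c)$, the cost equals $\sum_c \Gamma(c) d(q,c)^z + \sum_{p,c} \sigma(p,c)\phi_c(p)$, whose first ``base'' term is identical on $P$ and on $S'$ (both preserve $\Gamma$ and have total weight $N$). Consequently, $|\cost(S',C,\Gamma) - \cost(P,C,\Gamma)|$ equals the absolute gap between the minimum residuals $\min_\sigma \sum_{p,c} \sigma(p,c)\phi_c(p)$ over feasible $\sigma$ on $P$ versus on $S'$. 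The mean-value bound $|\phi_c(p)| \le zr(d(q,c)+r)^{z-1}$ combined with $d(q,c) \le \Rfar = 10zr/\varepsilon$ controls the per-summand magnitude, which is what drives the concentration step.

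To bound the residual gap in one direction, I would take $\sigma^*$ optimal for $P$ and build a candidate $\sigma'$ for $S'$ via the ansatz $\sigma'(p,c) := \sigma^*(p,c)\cdot w_{S'}(p)/w_P(p)$ for $p \in S$. This preserves the row sums $\sum_c \sigma'(p,c) = w_{S'}(p)$ automatically, and each column sum $\sum_{p \in S}\sigma'(p,c)$ equals $\Gamma(c)$ in expectation; a per-center Bernstein bound, union-bounded over the at-most-$k$ centers, shows the column sums concentrate tightly around $\Gamma(c)$, and a small correcting flow restores exact feasibility at a cost overhead of $O(\varepsilon N r^z)$. The central concentration is Bernstein's inequality applied to the importance-sampling estimator $\sum_{p \in S}(w_S(p)/w_P(p)) \sum_c \sigma^*(p,c)\phi_c(p)$ of the true residual, which with the choice $T = \poly(2^z, k, d, \varepsilon^{-1}, \log(N\Delta/(\varepsilon\delta')))$ yields deviation at most $\varepsilon Nr^z/2$ with failure probability $\delta'/2$. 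The reverse direction $\cost(P,C,\Gamma) \le \cost(S',C,\Gamma) + \varepsilon Nr^z$ follows symmetrically, either by LP duality (extending dual variables from $S'$ to $P$) or by the analogous construction from an optimal $\sigma$ for $S'$. The main obstacle is the joint feasibility-plus-cost control of $\sigma'$: the row constraints come for free from the ansatz, but enforcing exact column sums $\Gamma(c)$ requires simultaneous concentration over all $c \in C$ and careful bounding of the cost of the correcting flow, which is what ultimately fixes the exponent in the polynomial defining $T$.
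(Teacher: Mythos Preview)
Your size bound, the rescaled $S'$, and the forward direction (build $\sigma'$ on $S'$ from the optimal $\sigma^*$ on $P$, concentrate column sums per center, and repair with a correcting flow) are correct and match the paper's approach. The residual decomposition via $\phi_c(p)=d(p,c)^z-d(q,c)^z$ is a clean touch the paper does not use; it instead bounds $d(p,c)^z\le R^z$ with $R=(20z/\varepsilon+2)r$ and works at target error $\varepsilon_0 N R^z$ with $\varepsilon_0=(\varepsilon/22z)^{z+1}$, which is equivalent up to polynomial factors in $T$.

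The genuine gap is the reverse inequality $\cost(P,C,\Gamma)\le\cost(S',C,\Gamma)+\varepsilon Nr^z$. Neither route you sketch works as written. The ``analogous construction from an optimal $\sigma$ for $S'$'' fails because that optimal assignment is a \emph{random} function of $S$: its values are correlated with the sampling, so Bernstein does not apply, and there is no canonical way to spread an assignment supported on the small random set $S$ back over all of $P$. Your LP-duality suggestion is stated in the wrong direction: extending $S'$'s optimal dual to $P$ certifies a \emph{lower} bound on $\cost(P)$, not the upper bound you need. (The duality that actually helps runs the other way---take the deterministic optimal dual $(u^*,v^*)$ for the $P$-problem, which is automatically feasible for $S'\subseteq P$, and concentrate $\sum_{p\in S}w_S(p)u^*_p$; but then you must bound $|u^*_p|$, e.g.\ by $O(kR^z)$ via the spanning-tree structure of basic transportation solutions, which is nontrivial and not mentioned.) The paper explicitly flags this asymmetry and handles it by a different mechanism: it introduces a relaxed cost $\gencost(S,C,\Gamma)$ admitting partial assignments when $w(S)\ne N$, proves $\cost(P,C,\Gamma)\le E[\gencost(S,C,\Gamma)]+\varepsilon Nr^z/4$ by using the \emph{expectation} of the random optimal partial assignment on $S$ (which is deterministic and nearly feasible for $P$), and then shows $\gencost(S,C,\Gamma)$ concentrates around its mean via a martingale bounded-difference inequality of Chung and Lu.
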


Pick any $\frac{\varepsilon}{12z} r$-net of ball $B(q, \Rfar)$, denoted by $\calCq$. 
In \Cref{lem:our-union-bound}, we reduce arbitrary center sets to subsets of $\calCq$. 
For assignment constraints, we discretize every value $\Gamma(c)$ into $H := \{ i\cdot \frac{N}{t} : i=0,1,\ldots ,t \}$, where $t := \lceil k^2(10 z/\varepsilon +1)^z \rceil$. 

Define
$$\mathcal{F} = \left\{(C,\Gamma) : C\subseteq \calCq, |C|\leq k, \forall c\in C, \Gamma(c) \in H, \Gamma(C) = w(P)\right\}.$$
We have $|\mathcal{F}| \leq \O{1}\cdot k^{\O{kd}}(z/\varepsilon)^{\O{kzd}}$. 

We introduce the following generalized triangle inequality for the proof of \Cref{lem:our-union-bound}. 

\begin{lemma}\cite{BravermanCJKST022}
\label{lem:generalized-tri-eq}
    For every $d,z\geq 1, \delta \in (0,1],a,b,c \in \mathbb{R}^d$, the following inequality holds, 
    $$
    d(a,b)^z \leq (1+t)^{z-1}d(a,c)^z+(1+t^{-1})^{z-1}d(b,c)^z.
    $$
\end{lemma}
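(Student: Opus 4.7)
The plan is to reduce this lemma to a purely scalar inequality and then settle the scalar inequality by convexity. First, the metric triangle inequality in $\mathbb{R}^d$ gives $d(a,b) \leq d(a,c) + d(b,c)$, and since $z \geq 1$ and both sides are nonnegative, we may raise to the $z$-th power to obtain $d(a,b)^z \leq (d(a,c) + d(b,c))^z$. It therefore suffices to prove the scalar statement: for all $x, y \geq 0$ and $t > 0$,
\[
(x+y)^z \;\leq\; (1+t)^{z-1} x^z + (1+t^{-1})^{z-1} y^z,
\]
with $x = d(a,c)$ and $y = d(b,c)$. This removes all dependence on $d$ and the ambient metric.

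To prove the scalar inequality, I would appeal to convexity of $u \mapsto u^z$ on $[0, \infty)$, which holds precisely because $z \geq 1$. Choose weights $\lambda = \frac{1}{1+t}$ and $1 - \lambda = \frac{t}{1+t}$, which sum to $1$. Applying Jensen's inequality to the two points $x/\lambda$ and $y/(1-\lambda)$ with weights $\lambda$ and $1-\lambda$, the left-hand side becomes $(x+y)^z$, while the right-hand side evaluates to $\lambda^{1-z} x^z + (1-\lambda)^{1-z} y^z$. Substituting the chosen values gives $\lambda^{1-z} = (1+t)^{z-1}$ and $(1-\lambda)^{1-z} = \left(\frac{1+t}{t}\right)^{z-1} = (1 + t^{-1})^{z-1}$, which are exactly the coefficients in the target inequality.

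Combining the two steps finishes the proof: the triangle inequality reduces the metric statement to a scalar statement, and the scalar statement follows from a single application of Jensen's inequality with the weights $\lambda = 1/(1+t)$ and $1-\lambda = t/(1+t)$. The only real ``obstacle'' is spotting the correct weights; once one recognizes that the factor $(1+t)^{z-1}$ on the $x^z$-term forces $\lambda^{1-z} = (1+t)^{z-1}$, hence $\lambda = 1/(1+t)$, the factor $(1+t^{-1})^{z-1}$ on the $y^z$-term appears automatically from $(1-\lambda)^{1-z}$. Degenerate cases ($z = 1$, or $x = 0$, or $y = 0$) are immediate from inspection and require no separate argument.
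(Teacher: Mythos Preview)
Your proof is correct. The paper does not actually prove this lemma; it merely cites it from \cite{BravermanCJKST022} (note also the typo in the statement: the parameter is called $\delta$ in the hypothesis but $t$ in the inequality). The argument you give---triangle inequality followed by Jensen applied to $u\mapsto u^z$ with weights $\lambda=1/(1+t)$ and $1-\lambda=t/(1+t)$---is precisely the standard proof of this generalized triangle inequality, so there is nothing to compare.
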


\begin{lemma}\label{lem:our-union-bound}
    For every weighted set $P \subseteq \ring(q,r/2,r)$, $k$-center set $C\subset \mathbb{R}^d$ and assignment constraint $\Gamma$ with respect to $P,C$, let $N:=w(P)$, there  exists $k$-point center set and assignment constraint $(C',\Gamma') \in \mathcal{F}$ such that 
    $$\cost(P,C,\Gamma) \in \cost(P,C',\Gamma')\pm \frac{\varepsilon}{10} Nr^z\pm \frac{\varepsilon}{4}\cost(P,C,\Gamma) + \sum_{c\in \Cfar} \Gamma(c)d(c,q)^z.$$
\end{lemma}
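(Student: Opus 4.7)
The plan is to construct $(C', \Gamma') \in \mathcal{F}$ explicitly from the optimal assignment $\sigma^*$ witnessing $\cost(P,C,\Gamma)$, then verify the two-sided error bound via an explicit coupling in each direction. First I would split $C = C_c \sqcup \Cfar$ with $C_c := C \cap B(q, \Rfar)$, snap each $c \in C_c$ to a nearest net point $\pi(c) \in \calCq$ (so $d(c, \pi(c)) \leq \varepsilon r/(12z)$), collapse duplicates to form $C' := \pi(C_c)$, and fix an arbitrary $c^* \in C'$. Define an intermediate constraint $\Gamma_1 : C' \to \mathbb{R}_+$ by pushing $\Gamma(c)$ onto $\pi(c)$ for each $c \in C_c$ and piling the entire $\Gamma(\Cfar)$ onto $c^*$, so that $\Gamma_1(C') = N$. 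Then I would round $\Gamma_1$ to the grid $H$ to obtain $\Gamma' \in H^{C'}$, adjusting one coordinate to restore $\Gamma'(C') = N$. If $C_c = \emptyset$, take $C' = \{c^{**}\}$ for an arbitrary net point $c^{**}$ with $\Gamma'(c^{**}) = N$.

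Next I would control the cost discrepancy through four error pieces. First, since $d(c, q) > \Rfar = 10zr/\varepsilon$ and $d(p, q) \leq r$, we have $d(p, c) \in (1 \pm \varepsilon/(10z)) d(c, q)$ for every $c \in \Cfar$, so after raising to the $z$-th power and applying $(1 \pm \varepsilon/(10z))^z \leq e^{\pm \varepsilon/10}$, $\big|\sum_{p, c \in \Cfar} \sigma^*(p, c) d(p, c)^z - F\big| \leq O(\varepsilon) F$ where $F := \sum_{c \in \Cfar} \Gamma(c) d(c, q)^z$. Second, the snapping step is controlled by the generalized triangle inequality (\Cref{lem:generalized-tri-eq}) with $\tau = \Theta(\varepsilon/z)$: $(1+\tau)^{z-1} \leq 1 + \varepsilon/4$ gives the multiplicative $(\varepsilon/4)$ contribution, and $(1+1/\tau)^{z-1}(\varepsilon r/(12z))^z$ collapses to $O(\varepsilon r^z/z)$ per unit of mass via the cancellation $(7/12)^{z-1} \leq 1$, producing an $O(\varepsilon N r^z)$ additive error overall. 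Third, redirecting the $\Gamma(\Cfar)$ mass onto $c^*$ costs at most $(1 + \varepsilon/5) F$ since $d(p, c^*) \leq r + \Rfar$ and $(1 + r/\Rfar)^z \leq e^{\varepsilon/10}$; combined with piece one, this identifies the $F$ term in the statement. Fourth, discretizing $\Gamma_1 \to \Gamma'$ alters at most $k$ entries by $\leq N/t$ each, costing at most $k(N/t)(11z/\varepsilon)^z r^z \leq Nr^z/k$ given $t = \lceil k^2(10z/\varepsilon+1)^z\rceil$, absorbed by the $\varepsilon Nr^z/10$ slack.

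Summing pieces one through four on the explicit assignment $\sigma'$ mirroring the construction bounds $\cost(P, C', \Gamma') - \cost(P, C, \Gamma) \leq \varepsilon Nr^z/10 + (\varepsilon/4) \cost(P, C, \Gamma) + F$. For the reverse direction I would take an optimal $\sigma''$ for $(C', \Gamma')$ and \emph{unsnap} it into an assignment $\sigma$ for $(C, \Gamma)$: for each $c' \in C' \setminus \{c^*\}$ distribute $\sigma''(p, c')$ among $\{c \in C_c : \pi(c) = c'\}$ proportionally to $\Gamma(c)/\Gamma_1(c')$, and for $c^*$ first split proportionally between the $C_c$-preimage block and the $\Cfar$ block, then within each block distribute proportionally to $\Gamma$. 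Marginals match $\Gamma$ by construction, and running the four error pieces in reverse yields $\cost(P, C, \Gamma) - \cost(P, C', \Gamma') \leq \varepsilon Nr^z/10 + (\varepsilon/4) \cost(P, C, \Gamma) + F$, completing the two-sided bound. The hard part will be piece two: ensuring the additive snapping error is $O(\varepsilon Nr^z/z)$ uniformly in $z$ requires the tight coupling between the net scale $\eta = \varepsilon/(12z)$ and the parameter $\tau = \Theta(\varepsilon/z)$, which is exactly what forces the specific choices of $\Rfar$ and the net resolution at the beginning. Beyond that, the unsnapping in the reverse direction needs care to ensure the proportional splitting yields a valid coupling whose cost decomposes cleanly into the same four pieces.
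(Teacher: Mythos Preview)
Your two final inequalities establish only
\[
|\cost(P,C,\Gamma)-\cost(P,C',\Gamma')|\;\leq\;F+\tfrac{\varepsilon}{10}Nr^z+\tfrac{\varepsilon}{4}\cost(P,C,\Gamma),
\]
where $F:=\sum_{c\in\Cfar}\Gamma(c)\,d(c,q)^z$. But the lemma asks for the stronger statement
\[
\big|\cost(P,C,\Gamma)-\cost(P,C',\Gamma')-F\big|\;\leq\;\tfrac{\varepsilon}{10}Nr^z+\tfrac{\varepsilon}{4}\cost(P,C,\Gamma),
\]
i.e., $F$ is an \emph{exact shift}, not part of the error. This distinction is the whole point of the lemma: in the proof of \Cref{lem:ring-main} the lemma is applied once to $P$ and once to $S$ with the \emph{same} $(C',\Gamma')$, and the two copies of $F$ (which depend only on $C,\Gamma,q$ and can be arbitrarily large) must cancel. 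Your weaker inequality leaves a $2F$ slack after the cancellation and is useless downstream.

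The misreading propagates into the construction. You send the $\Gamma(\Cfar)$ mass to an \emph{arbitrary} $c^*\in C'\subseteq\calCq$. If $c^*$ sits near the boundary of $B(q,\Rfar)$ and all far centers sit just outside, then for every $p\in P$ we have $d(p,c^*)\geq \Rfar-r$, so the redirected cost is at least $(1-\varepsilon/10)\,\Gamma(\Cfar)\,\Rfar^z$, which is $\Theta(F)$ rather than negligible. Consequently $\cost(P,C',\Gamma')$ does not drop by $F$, and the required direction $\cost(P,C',\Gamma')\leq\cost(P,C,\Gamma)-F+\text{err}$ fails outright. The paper avoids this by sending far mass to $q$: since $d(p,q)\leq r\leq(\varepsilon/(10z))\,d(c,q)$ for every $p\in P$ and $c\in\Cfar$, one gets the pointwise two-sided bound $d(p,c)^z-d(p,q)^z\in(1\pm\varepsilon/5)\,d(c,q)^z$, and summing yields that the cost difference on the far block equals $(1\pm\varepsilon/5)F$; this is exactly the ``exact shift'' structure the lemma needs. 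Your pieces one and three together only give the one-sided bound $[\text{far to }c^*]\leq(1+\varepsilon/5)F$, which cannot produce the two-sided shift.
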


\newcommand{\tildeGamma}{\tilde{\Gamma}}
\newcommand{\tildesigma}{\tilde{\sigma}}

\begin{proof}
    First, we move every center $c$ outside $B(q,\Rfar)$ (i.e. $c\in \Cfar$) to $q$, and move every center $c$ inside $B(q,\Rfar)$ (i.e. $c\in C\backslash \Cfar$) to its nearest neighbor in $\calCq$. 
    Suppose $c$ is moving to $\calCq(c)$, for every center $c$. Let $\tilde{C}:=\calCq(C)=\{\calCq(c):c\in C\}$. 
    Let $\tildeGamma$ be the assignment constraint after movements, i.e. $\tildeGamma(c)=\sum_{c'\in C :  \calCq(c')=c}\Gamma(c')$. 

    For every assignment function $\sigma$ w.r.t $P,C$ and assignment function $\tildesigma$ w.r.t $P,\tilde{C}$, $\sigma \sim \Gamma, \tildesigma \sim \tildeGamma$, 
    we call $(\sigma, \tildesigma)$ is corresponding if
    $\tildesigma(x,q)=\sum_{c'\in \Cfar}\sigma(x,c')$ and
    $\tildesigma(x,c)=\sigma(x,c)$ for every $x\in P, c\in C \backslash \Cfar$. In order to bound the difference between $\cost(P,C,\Gamma)$ and $\cost(P,\tilde{C}, \tilde{\Gamma})$, we consider the cost difference of every corresponding pair $(\sigma, \tilde{\sigma})$, i.e. 
    $$\sum_{x\in P}\left(\sum_{c\in C}\sigma(x,c)d(x,c)^z-\sum_{c\in \tilde{C}}\tildesigma(x,c)d(x,c)^z\right)
    =\sum_{c\in C}\sum_{x\in P}\sigma(x,c)(d(x,c)^z-d(x,\calCq(c))^z).$$
    
    Define $\varepsilon':=(1+\varepsilon/6)^{1/(z-1)}-1=\Theta(\varepsilon/z)$.
    For every center $c\in \Cfar$ and point $x\in P$, $c$ is moved to $q$. By \Cref{lem:generalized-tri-eq}, we have
    \begin{align*}
        \sigma(x,c)(d(x,c)^z - d(x,q)^z) &\leq \sigma(x,c)((1+\varepsilon')^{z-1}d(c,q)^z+(1+\varepsilon'^{-1})^{z-1}d(x,q)^z) \\
        &\leq (1+\varepsilon')^{z-1}\sigma(x,c)\left(d(c,q)^z+\varepsilon'^{-z+1}\left(\frac{\varepsilon}{10z}d(c,q)\right)^z \right) \\
        &\leq (1+\varepsilon/6)\sigma(x,c)\left(1+\frac{\varepsilon}{10z}\right)d(c,q)^z \\
        &\leq(1+\varepsilon/5)\sigma(x,c)d(c,q)^z
    \end{align*}
    From the opposite direction, $\sigma(x,c)(d(x,c)^z - d(x,q)^z)$ is lower bounded by
    \begin{align*}
        \sigma(x,c)(d(x,c)^z - d(x,q)^z) &\geq \sigma(x,c)((d(c,q)-d(x,q))^z-d(x,q)^z) \\
        &\geq \sigma(x,c)\left((d(c,q)-(\varepsilon/10z) d(c,q))^z-\left((\varepsilon/10z) d(c,q)\right)^z\right) \\
        &\geq ((1-(\varepsilon/10z))^z-(\varepsilon/10z)^z)\sigma(x,c)d(c,q)^z \\
        &\geq (1-\varepsilon/5)\sigma(x,c)d(c,q)^z
    \end{align*}

    For every center $c\in C\backslash \Cfar$ and point $x\in P$,
    when $d(x,c')>d(x,c)$,
    moving $c$ to $c'=\calCq(c)$ increases the cost by 
    \begin{align*}
        \sigma(x,c)(d(x,c')^z - d(x,c)^z) &\leq \sigma(x,c)(((1+\varepsilon')^{z-1}-1)d(x,c)^z+(1+\varepsilon'^{-1})^{z-1}d(c,c')^z) \\
        &\leq \sigma(x,c)\left(\varepsilon d(x,c)^z/6+(1+\varepsilon/6)\varepsilon'^{-z+1} (\varepsilon r/12z)^z\right)\\
        &\leq \sigma(x,c)\left( \frac{\varepsilon}{6}d(x,c)^z+ \frac{\varepsilon}{12}r^z\right).
    \end{align*}

    When $d(x,c)>d(x,c')$, by the same argument we have,
    \begin{align*}
        \sigma(x,c)(d(x,c)^z - d(x,c')^z) &\leq \sigma(x,c)(((1+\varepsilon')^{z-1}-1)d(x,c')^z+(1+\varepsilon'^{-1})^{z-1}d(c,c')^z) \\
        &\leq \sigma(x,c)\left( \frac{\varepsilon}{6}d(x,c)^z+ \frac{\varepsilon}{12}r^z \right).
    \end{align*}

    In summary, 
    \begin{align*}
        \sum_{c\in C}\sum_{x\in P}&\sigma(x,c)(d(x,c)^z-d(x,\calCq(c))^z)  \\ 
        \in ~~ & \left(1\pm \frac{\varepsilon}{6}\right)\sum_{c\in \Cfar} \sum_{x\in P} \sigma(x,c)d(c,q)^z \pm \frac{\varepsilon}{4} \sum_{c\in C\backslash \Cfar}\sum_{x\in P}\sigma(x,c)d(x,c)^z \pm \frac{\varepsilon}{12} Nr^z  \\
        \in ~~ & \sum_{c\in \Cfar} \Gamma(c) d(c,q)^z 
        \pm \frac{\varepsilon}{6} \sum_{c\in \Cfar}\sum_{x\in P}\sigma(x,c)(d(x,c)\pm d(x,q))^z \\
        &\pm \frac{\varepsilon}{4} \sum_{c\in C\backslash \Cfar}\sum_{x\in P}\sigma(x,c)d(x,c)^z \pm \frac{\varepsilon}{12} Nr^z \\
        \in ~~ & \sum_{c\in \Cfar} \Gamma(c) d(c,q)^z 
        \pm \frac{\varepsilon}{6} \sum_{c\in \Cfar}\sum_{x\in P}(1\pm (\varepsilon/10z))\sigma(x,c)d(x,c)^z \\
        &\pm \frac{\varepsilon}{4} \sum_{c\in C\backslash \Cfar}\sum_{x\in P}\sigma(x,c)d(x,c)^z \pm \frac{\varepsilon}{12} Nr^z \\
        \in ~~ & \sum_{c\in \Cfar} \Gamma(c) d(c,q)^z \pm \frac{\varepsilon}{4} \cost(P,C,\Gamma) \pm \frac{\varepsilon}{12} Nr^z.
    \end{align*}

    Since the bound holds for every corresponding $\sigma, \sigma'$, let $\sigma^*$ be the optimal assignment function of $\cost(P,C,\Gamma)$ and $\tildesigma$ be some corresponding assignment function of $\sigma^*$, we have
    \begin{align}
    \cost(P,C,\Gamma)&=\mathrm{cost}^{\sigma^*}(P,C,\Gamma) \nonumber \\
    &\geq \mathrm{cost}^{\tilde{\sigma}}(P,\tilde{C},\tildeGamma)+ \sum_{c\in \Cfar} \Gamma(c) d(c,q)^z - \frac{\varepsilon}{4} \cost(P,C,\Gamma) - \frac{\varepsilon}{12} Nr^z \nonumber \\   
    &\geq \cost(P,\tilde{C},\tildeGamma)+ \sum_{c\in \Cfar} \Gamma(c) d(c,q)^z - \frac{\varepsilon}{4} \cost(P,C,\Gamma) - \frac{\varepsilon}{12} Nr^z. \label{eq:union-bound-temp1}
    \end{align}

    On the opposite side, let $\tildesigma^*$ be the optimal assignment function of $\cost(P,\tilde{C},\tildeGamma)$ and $\sigma$ be some corresponding assignment function of $\tildesigma^*$, we have
    \begin{align}
    \cost(P,\tilde{C},\tildeGamma)&=\mathrm{cost}^{\tildesigma^*}(P,\tilde{C},\tildeGamma) \nonumber \\
    &\geq \mathrm{cost}^{\sigma}(P,C,\Gamma)- \sum_{c\in \Cfar} \Gamma(c) d(c,q)^z - \frac{\varepsilon}{4} \cost(P,C,\Gamma) - \frac{\varepsilon}{12} Nr^z \nonumber \\   
    &\geq \cost(P,C,\Gamma)- \sum_{c\in \Cfar} \Gamma(c) d(c,q)^z - \frac{\varepsilon}{4} \cost(P,C,\Gamma) - \frac{\varepsilon}{12} Nr^z.  \label{eq:union-bound-temp2}
    \end{align}

    Combining \eqref{eq:union-bound-temp1} and \eqref{eq:union-bound-temp2}, we have
    \begin{equation}
        \left|\cost(P,C,\Gamma)-\cost(P,\tilde{C},\tildeGamma)-\sum_{c\in \Cfar} \Gamma(c) d(c,q)^z\right|\leq \frac{\varepsilon}{4} \cost(P,C,\Gamma) + \frac{\varepsilon}{12} Nr^z.\label{eq:union-bound-part1}
    \end{equation}

    Next, we find $\tilde{\Gamma}' : \tilde{C}\to H$ such that $\tilde{\Gamma}'(\tilde{C})=N$ and for every $c\in \tilde{C}$, $|\tilde{\Gamma}(c)-\tilde{\Gamma}'(c)|<1/t$, which always exists. On the other hand, there also exists assignment function $\sigma' \sim \Gamma'$ such that for every $c\in \tilde{C}$, $\sum_{x\in P}|\sigma(x,c)-\sigma'(x,c)| <1/t$. The difference of cost between $\sigma$ and $\sigma'$ is bounded by
    \begin{align}
    |\cost(P,\tilde{C},\tilde{\Gamma}) - \cost(P,\tilde{C},\tilde{\Gamma}')| &\leq
        \left|\sum_{c\in \tilde{C}}\sum_{x\in P}(\sigma(x,c)-\sigma'(x,c))d(x,c)^z\right| \nonumber \\
        &\leq \left|\sum_{c\in \tilde{C}}\sum_{x\in P}(\sigma(x,c)-\sigma'(x,c))(\diam(\calCq)+2r)^z\right| \nonumber \\
        &\leq (10z/\varepsilon+1)^z\frac{k}{t}r^z \nonumber \\
        &\leq \frac{\varepsilon}{8} r^z. \label{eq:union-bound-part2}
    \end{align}

    Now we are able to give a error bound between $\cost(P,C,\Gamma)$ and $\cost(P,\tilde{C},\tilde{\Gamma}')$ by summing up \eqref{eq:union-bound-part1} and \eqref{eq:union-bound-part2},
    \begin{align*}
        \bigg\vert\cost&(P,C,\Gamma)-\cost(P,\tilde{C},\tilde{\Gamma}')-\sum_{c\in \Cfar} \Gamma(c)d(c,q)^z\bigg\vert \\
        \leq~~ & \left|\cost(P,C,\Gamma) - \cost(P,\tilde{C},\tilde{\Gamma}) - \sum_{c\in \Cfar} \Gamma(c)d(c,q)^z\right| + |\cost(P,\tilde{C},\tilde{\Gamma}) - \cost(P,\tilde{C},\tilde{\Gamma}')| \\
        \leq~~ & \left(\frac{\varepsilon}{12}N+\frac{\varepsilon}{8}\right)r^z+ \frac{\varepsilon}{4}\cost(P,C,\Gamma)  \\
        \leq~~ & \frac{\varepsilon}{10}Nr^z+ \frac{\varepsilon}{4}\cost(P,C,\Gamma).
    \end{align*}
    We conclude the proof by picking $C'=\tilde{C},\Gamma'=\tilde{\Gamma'}$.

\end{proof}

\begin{proof}[Proof of \Cref{lem:ring-main}]
    By union bound, with probability at least $1-\delta$, the statement of \Cref{lem:ring-fix-constraint} holds for every $\Gamma \in \mathcal{F}$. 

    For every $(C,\Gamma)$,
    let $(C',\Gamma')\in \mathcal{F}$ be the center set and assignment constraint stated in \Cref{lem:our-union-bound} with respect to $(C,\Gamma)$. 
    We have
    \begin{align*}
    |\cost(S,C,\Gamma) -& \cost(P,C,\Gamma)| \\
    \leq~~ & |\cost(S,C',\Gamma') - \cost(P,C',\Gamma')|+|(\cost(S,C,\Gamma)-\cost(S,C',\Gamma')) \\
    &- (\cost(P,C,\Gamma)-\cost(P,C',\Gamma'))|\\
    \leq~~ & |\cost(S,C',\Gamma') - \cost(P,C',\Gamma')|+\frac{\varepsilon}{5} Nr^z +\frac{\varepsilon}{4} (\cost(S,C,\Gamma) + \cost(P,C,\Gamma)) \\
    \leq~~ & \frac{\varepsilon}{4} (Nr^z+\cost(S,C,\Gamma) + \cost(P,C,\Gamma)).
    \end{align*}
    
    Since the size of $S$ is independent of $C$ and $\Gamma$, the upper bound of $E[|S|]$ is the same as the result in \Cref{lem:ring-fix-constraint}. This finishes the proof of \Cref{lem:ring-main}.
\end{proof}

\subsection{Proof of \texorpdfstring{\Cref{lem:ring-fix-constraint}}{}}
\label{sec:proof_ring_fix_constraint}

\lemringfc*

Some parts of the proof of this lemma are inspired by \cite{CL19}, but compared to \cite{CL19}, our proof needs to handle additional difficulties, such as dealing with weighted input points and non-uniform sampling probabilities. 

As $w(S)$ may not equal to $w(P)$, $\cost(S,C,\Gamma)$ is not well-defined since $\Gamma$ is not an assignment constraint with respect to $S$. Before discussing the cost of $(S,C,\Gamma)$, we need to generalize the definition of the cost function to the case that $w(S) \neq \Gamma(C)$. 
Then we prove \Cref{lem:ring-fix-constraint} by showing that both $|\cost(P,C,\Gamma) - \gencost(S,C,\Gamma)|$ and $|\cost(S',C,\Gamma) - \gencost(S,C,\Gamma)|$ are small. 

\begin{definition}
    For weighted point set $P\subseteq [\Delta]^d$ and a center set $C\subseteq \mathbb{R}^d$, a partial assignment function with respect to $P$ and $C$ is a function $\sigma: P\times C \to \mathbb{R}_{\geq 0}$. 
    For a partial assignment function $\sigma$ with respect to $P,C$ and an assignment constraint $\Gamma$ with respect to $P',C$, where $P'$ is some arbitrary weighted point set ($P'$ is only used in $\Gamma(C) = w(P')$), 
    we say $\sigma$ is partially consistent with $\Gamma$ if the following property holds, denoted by $\sigma \sim' \Gamma$. 
    \begin{itemize}
        \item For every $p\in P$, $\sum_{c\in C} \sigma(p,c) \leq w_P(p)$. 
        \item For every $c\in C$, $\sum_{p\in P} \sigma(p,c) \leq \Gamma(c)$. 
        \item If $w(P) \geq \Gamma(C)$, then $\sigma$ satisfies that for every $c\in C$, $\sum_{p\in P} \sigma(p,c) = \Gamma(c)$. 
        \item If $w(P) \leq \Gamma(C)$, then $\sigma$ satisfies that for every $p\in P$, $\sum_{c\in C} \sigma(p,c) = w_P(p)$. 
    \end{itemize}
\end{definition}

A partial assignment function can be viewed as a maximum flow of the following flow network. 
\begin{itemize}
    \item For every $p\in P$, there is an edge from source node to $p$ with capacity $w_P(p)$. 
    \item For every $c\in C$, there is an edge from $c$ to sink node with capacity $\Gamma(c)$. 
    \item For every $p\in P, c\in C$, there is an edge from $p$ to $c$ with infinite capacity. 
\end{itemize}

Now we can define the cost of $S$ with respect to the constraint of $\Gamma$. 
$$\gencost(S,C,\Gamma) = \min_{\sigma: S\times C \to \mathbb{R}_{\geq 0}, \sigma \sim' \Gamma} \sum_{p\in P}\sum_{c\in C} \sigma(p,c) \dist(p,c)^z.$$

\begin{restatable}{lemma}{lemerroutori}
\label{lem:error-output-ori}
    With probability at least $1-\delta/2$, $|\gencost(S, C, \Gamma) - \cost(P,C,\Gamma)| \leq \varepsilon N r^z/2$. 
\end{restatable}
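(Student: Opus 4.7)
The plan is to prove the two-sided bound $|\gencost(S,C,\Gamma)-\cost(P,C,\Gamma)|\le\varepsilon Nr^z/2$ by establishing each direction with failure probability at most $\delta/4$, then combining via a union bound. The main tool is Bernstein's inequality applied to the importance-sampling estimator underlying \textsc{RingCoreset}. The crucial geometric observation is that every distance $d(p,c)$ with $p\in P\subseteq\ring(q,r/2,r)$ and $c\in B(q,\Rfar)$ is at most $r+\Rfar=O(zr/\varepsilon)$, capping any pairwise cost contribution by $(O(zr/\varepsilon))^z$.

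For the upper bound direction, I fix the optimal assignment $\sigma^*$ for $(P,C,\Gamma)$ and define a candidate $\tilde\sigma$ on $S$ by $\tilde\sigma(p_i,c):=\sigma^*(p_i,c)/\prob_i$ for $p_i\in S$. Setting $Z_i:=X_i\cdot\sum_c\sigma^*(p_i,c)d(p_i,c)^z/\prob_i$ with $X_i=\mathbf{1}[p_i\in S]$ gives $\mathbb{E}[\sum_iZ_i]=\cost(P,C,\Gamma)$. Forced samples ($\prob_i=1$) contribute deterministically; random samples satisfy $w_P(p_i)/\prob_i=\vsum_i/T\le N/T$, whence $|Z_i|\le N(O(zr/\varepsilon))^z/T$ and $\sum_i\Var(Z_i)\le(N/T)(O(zr/\varepsilon))^z\cdot\cost(P,C,\Gamma)$. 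Bernstein's inequality with the algorithm's choice $T=\poly(2^zdk\varepsilon^{-1}\log(N\Delta\varepsilon^{-1}\delta^{-1}))$ yields $|\sum_iZ_i-\cost(P,C,\Gamma)|\le\varepsilon Nr^z/4$ with probability $1-\delta/4$. I will then reduce $\tilde\sigma$ to a valid partial assignment by scaling down over-filled centers (which only decreases cost), producing a witness that certifies $\gencost(S,C,\Gamma)\le\cost(P,C,\Gamma)+\varepsilon Nr^z/4$.

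For the lower bound direction, I take the optimal partial assignment $\sigma'$ for $(S,C,\Gamma)$ and extend it to a valid assignment $\sigma$ for $(P,C,\Gamma)$ by routing the residual mass through a bipartite flow network: the center deficits $\Gamma(c)-\sum_{p\in S}\sigma'(p,c)$ and the point deficits $w_P(p)-\sum_c\sigma'(p,c)$ on unsampled points must be matched, with feasibility following from max-flow/min-cut since all unsampled points can reach all centers. A second Bernstein-type bound on $|w(S)-w(P)|$ and on the per-center importance-weighted masses $\sum_{p\in S}\sigma^*(p,c)/\prob_i$ controls the total residual mass up to an $\varepsilon N/(O(zr/\varepsilon))^z$ factor with probability $1-\delta/4$, and the distance cap $(O(zr/\varepsilon))^z$ then converts this into an $\varepsilon Nr^z/4$ cost increment.

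The main obstacle will be exactly this reverse direction, which requires (i) a combinatorial feasibility argument ensuring that a valid extension always exists for every realization of the random sample, and (ii) a joint concentration on the weight discrepancy and the per-center mass deficits so that the cumulative missing mass is small with high probability. Both difficulties are mitigated by the fact that the ring structure together with the confinement to $B(q,\Rfar)$ yields a uniform per-unit repair-cost bound of $(O(zr/\varepsilon))^z$, so the additive error reduces to controlling total mass discrepancies, for which Bernstein-type bounds on sums of bounded importance-weighted indicators suffice.
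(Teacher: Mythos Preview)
Your upper-bound direction is essentially the paper's approach (its Lemma for $\gencost(S,C,\Gamma)\le\cost(P,C,\Gamma)+\varepsilon Nr^z/2$): take the optimal $\sigma^*$ on $P$, form the importance-weighted version on $S$, concentrate, then repair. One small correction: merely ``scaling down over-filled centers'' does not produce a partial assignment consistent with $\Gamma$ --- it destroys the point-side equalities and leaves under-filled centers still under-filled. The paper instead \emph{reroutes} surplus mass from over-filled centers to under-filled centers while keeping each point's total intact, bounding the repair cost by $R^z$ times the total surplus; this is the correct fix.

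The real gap is the reverse direction. Your plan is to take the optimal partial assignment $\sigma'$ for $(S,C,\Gamma)$ and extend it to an assignment on $P$ by adding residual mass at unsampled points. But for a sampled point $p_i$ one has $w_S(p_i)=w_P(p_i)/\prob_i$, which may vastly exceed $w_P(p_i)$; thus $\sum_c\sigma'(p_i,c)$ can already exceed $w_P(p_i)$ and no ``extension by adding residual'' exists. If you instead rescale $\sigma'(p_i,\cdot)$ by $\prob_i$ at sampled points, the resulting cost is $\sum_{p_i\in S}\prob_i\sum_c\sigma'(p_i,c)\,d(p_i,c)^z$, which is not $\gencost(S,C,\Gamma)$ and cannot be related to it by a Bernstein bound: $\sigma'$ is the argmin of an LP whose feasible polytope depends on the \emph{entire} random sample $S$, so $\sigma'(p_i,c)$ is not a function of $X_i$ alone and the summands are not independent. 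Your ``second Bernstein-type bound on the per-center importance-weighted masses $\sum_{p\in S}\sigma^*(p,c)/\prob_i$'' silently switches back to the fixed $\sigma^*$ on $P$, which tells you nothing about the random $\sigma'$.

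The paper circumvents this by decoupling. First it compares $\cost(P,C,\Gamma)$ to the \emph{expectation} $E[\gencost(S,C,\Gamma)]$: since $\sum_c E[\sigma'(p,c)]\le E[w_S(p)\cdot\mathbf{1}[p\in S]]=w_P(p)$ and $\sum_p E[\sigma'(p,c)]\le\Gamma(c)$, the deterministic quantities $E[\sigma'(p,c)]$ form a sub-assignment on $P$, and filling in the residual mass $N-E[\min(w(S),N)]$ costs at most $R^z$ times a quantity that is shown to be $\le\varepsilon_0 N/4$ via concentration. Second, it proves that $\gencost(S,C,\Gamma)$ concentrates around its mean via a martingale (bounded-differences) inequality, using that flipping $X_i$ changes $\gencost$ by at most $R^z\cdot\vsum_i/T$. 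This two-step route --- compare to expectation, then concentrate --- is the idea your proposal is missing.
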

\begin{proof}
    The proof is postponed to \Cref{sec:proof_lem_error_output_ori}.
\end{proof}

\begin{restatable}{lemma}{lemerroutreal}
\label{lem:error-output-real}
    With probability at least $1-\delta/2$, there exists a weighted point set $S'$ with 
    $|\gencost(S,C,\Gamma) - \cost(S', C, \Gamma)| \leq \varepsilon N r^z/2$,
    such that $S' = S$ and for every $p\in S$, $w_{S'}(p) \in (1 \pm \varepsilon) w_S(p)$. 
\end{restatable}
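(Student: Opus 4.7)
The plan is to (i) establish tight concentration of $w(S)$ around $N := w(P)$, (ii) define $S'$ by a uniform rescaling of the weights of $S$ so that $w(S') = N$, and (iii) bound the cost gap between $\gencost(S, C, \Gamma)$ and $\cost(S', C, \Gamma)$ via a flow-redistribution argument that exploits the ring structure $P \subseteq \ring(q, r/2, r)$.

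For step (i), I would apply Bernstein's inequality to the independent random variables $X_i := w_S(p_i) \cdot \bI[p_i \in S]$. By the sampling rule in \Cref{alg:RingCoreset}, each $X_i$ has $\mathbb{E}[X_i] = w_P(p_i)$, is bounded by $\vsum_i / T \leq N/T$ whenever $\prob_i < 1$ (and is deterministic otherwise), and has variance at most $w_P(p_i) \cdot N/T$. Summing yields $\Var(w(S)) \leq N^2/T$, so Bernstein gives $\Pr\bigl[|w(S) - N| > \varepsilon_1 N\bigr] \leq \delta/2$ whenever $T = \Omega(\varepsilon_1^{-2} \log \delta^{-1})$. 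The choice of $T$ in \Cref{alg:RingCoreset} is large enough to make $\varepsilon_1$ a sufficiently small polynomial in $\varepsilon/z$, with the required dependence dictated by the rebalancing analysis below.

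For step (ii), set $\lambda := N / w(S)$; under the concentration event $\lambda \in (1 \pm O(\varepsilon_1)) \subseteq (1 \pm \varepsilon)$. Define $w_{S'}(p) := \lambda w_S(p)$ for every $p \in S$, so that $S' = S$ as point sets, $w_{S'}(p) \in (1 \pm \varepsilon) w_S(p)$, and $w(S') = N = \Gamma(C)$, making $\cost(S', C, \Gamma)$ well-defined as a full-assignment cost. For step (iii), WLOG suppose $w(S) \geq N$ (the reverse case is symmetric); let $\sigma^*$ be an optimal partial assignment for $\gencost(S, C, \Gamma)$, so $\sum_c \sigma^*(p, c) \leq w_S(p)$ and $\sum_p \sigma^*(p, c) = \Gamma(c)$. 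Consider the scaled flow $\sigma^\dagger(p, c) := \lambda \sigma^*(p, c)$: its row sums exactly match $w_{S'}(p)$, while its column sums $\lambda \Gamma(c)$ deviate from $\Gamma(c)$ by a total $\ell_1$ mass of at most $|w(S) - N| \leq \varepsilon_1 N$. I would rebalance $\sigma^\dagger$ into a feasible assignment $\sigma'$ for $\cost(S', C, \Gamma)$ by transporting flow along paths $(p, c) \to (p, c')$, each unit of which changes the cost by $d(p, c')^z - d(p, c)^z$. Applying the generalized triangle inequality (\Cref{lem:generalized-tri-eq}) with parameter $t = \Theta(\varepsilon / z)$ and using that every $p \in S$ lies in $\ring(q, r/2, r)$ while every $c \in C$ lies in $B(q, \Rfar)$, this per-unit transport cost is bounded by $O(r^z (z / \varepsilon)^{z-1})$. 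Combined with the total transport volume $O(\varepsilon_1 N)$, the total cost change is within the budget $\varepsilon N r^z / 2$ provided $\varepsilon_1$ is chosen polynomially small in $\varepsilon / z$. The reverse bound $\cost(S', C, \Gamma) \geq \gencost(S, C, \Gamma) - \varepsilon N r^z / 2$ follows by the mirror argument starting from an optimal full assignment for $\cost(S', C, \Gamma)$.

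The main obstacle is the per-unit rebalancing cost. A naive bound $\max_{p, c, c'} |d(p, c')^z - d(p, c)^z| = O(\Rfar^z)$ is too loose by a factor of $(\Rfar / r)^z = (z / \varepsilon)^z$, and would force $\varepsilon_1$ to be exponentially small in $z$. The ring assumption is what saves the argument: because every $p \in S$ lies in a thin annulus around $q$, the variation of $d(p, c)^z$ across points $p \in \ring(q, r/2, r)$ is one order lower in $\Rfar / r$, yielding the necessary extra $\varepsilon / z$ factor. Matching this rebalancing bound against Bernstein's concentration parameter $\varepsilon_1$ precisely determines the polynomial dependence of $T$ chosen by the algorithm.
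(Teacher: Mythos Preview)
Your overall strategy---concentrate $w(S)$, rescale uniformly to define $S'$, then repair the assignment by flow redistribution---is exactly what the paper does. However, the rebalancing analysis contains two genuine errors.

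First, a minor one: in the case $w(S) \geq N$ that you analyze, the scaled flow $\sigma^\dagger = \lambda \sigma^*$ does \emph{not} have row sums equal to $w_{S'}(p)$. Since $\sigma^*$ only satisfies $\sum_c \sigma^*(p,c) \leq w_S(p)$ (with strict inequality for some $p$ whenever $w(S) > N$), the scaled row sums are $\lambda \sum_c \sigma^*(p,c) \leq \lambda w_S(p) = w_{S'}(p)$, not equality. The paper avoids this by \emph{not} scaling when $w(S)\geq N$: it keeps $\sigma^*$ as is (column sums already equal $\Gamma(c)$) and only rebalances rows.

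Second, and this is the real gap: your sharper per-unit bound $O(r^z(z/\varepsilon)^{z-1})$ does not hold for the transport direction you describe. Moving flow along $(p,c)\to(p,c')$ keeps $p$ fixed and varies the center; the cost change $|d(p,c')^z - d(p,c)^z|$ can be as large as $R^z \approx (z/\varepsilon)^z r^z$ (take $c$ near $p$ and $c'$ at distance $\Rfar$), and the ring constraint on $p$ is irrelevant since $p$ does not move. Your ``one order lower'' intuition is valid only for transport $(p,c)\to(p',c)$ with $c$ fixed and $p,p'$ in the ring, where indeed $|d(p,c)^z - d(p',c)^z| \leq z R^{z-1}\cdot 2r$. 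That is the row-rebalancing direction, not the column one---and in the symmetric case $w(S)<N$, column rebalancing is unavoidable.

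The paper's proof does not attempt any such refinement. It uses the naive bound $R^z$ per unit of transported mass throughout, with total transported mass $O(|w(S)-N|)\leq O(\varepsilon_0 N)$ for $\varepsilon_0 := (\varepsilon/22z)^{z+1}$. Since $\varepsilon_0 R^z \leq \varepsilon r^z$ by construction, the total rebalancing cost is within $\varepsilon N r^z/2$. The required concentration parameter is then this $\varepsilon_0$ (so yes, with exponential-in-$z$ dependence), and the paper simply absorbs this into the stated $T = \poly(2^z d k \varepsilon^{-1}\cdots)$. Your instinct that the naive bound is ``too loose'' is understandable, but the paper's proof embraces it rather than trying to circumvent it.
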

\begin{proof}
    The proof is postponed to \Cref{sec:proof_lem_error_output_real}.
\end{proof}

\begin{lemma}\label{lem:sum_prob}
    Let $T,\prob$ follow the definition in \Cref{alg:RingCoreset},
    $\sum_{i=1}^n \prob_i \leq T\ln N + 1$. 
\end{lemma}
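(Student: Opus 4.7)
The plan is to bound $\sum_{i=1}^n \prob_i$ by separating the first term from the rest and handling the remaining sum with a standard integral/telescoping argument. First, since $\prob_i$ is a probability we trivially have $\prob_1 \leq 1$; this accounts for the additive $+1$ in the stated bound. The remaining work is to show that
\[
\sum_{i=2}^n \prob_i \;\leq\; T\ln N.
\]

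The key observation is that, regardless of which branch of the $\min$ is active, we always have $\prob_i \leq T\cdot w_P(p_i)/\vsum_i$: if the minimum is achieved by $T\cdot w_P(p_i)/\vsum_i$ this is an equality, and otherwise $\prob_i=1 \leq T\cdot w_P(p_i)/\vsum_i$ since by definition that ratio is $\geq 1$. Using $w_P(p_i)=\vsum_i-\vsum_{i-1}$ and the monotonicity of $1/x$ on $[\vsum_{i-1},\vsum_i]$, this yields for every $i\geq 2$ the estimate
\[
\frac{w_P(p_i)}{\vsum_i} \;=\; \frac{\vsum_i-\vsum_{i-1}}{\vsum_i} \;\leq\; \int_{\vsum_{i-1}}^{\vsum_i} \frac{dx}{x} \;=\; \ln\!\left(\frac{\vsum_i}{\vsum_{i-1}}\right).
\]
Summing over $i=2,\dots,n$ telescopes to $T\ln(\vsum_n/\vsum_1)=T\ln(N/w_P(p_1))$, and combining with $\prob_1\leq 1$ gives $\sum_{i=1}^n \prob_i \leq 1 + T\ln(N/w_P(p_1))$.

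The final step is to argue $w_P(p_1)\geq 1$, so that $\ln(N/w_P(p_1))\leq \ln N$. This follows from how \textsc{RingCoreset} is invoked inside the merge-and-reduce framework: the stream starts with unit-weight points, and whenever a point is retained in a coreset it is reweighted by $1/\prob \geq 1$, so by induction every weight ever passed to \textsc{RingCoreset} is at least $1$. I would state this as a mild input assumption (or handle it inline) and conclude $\sum_{i=1}^n \prob_i \leq T\ln N + 1$. I do not anticipate any real obstacle here; the only subtlety is correctly handling the two branches of the $\min$ uniformly via the $\prob_i\leq T w_P(p_i)/\vsum_i$ bound, and noting that this avoids any need to track the case-structure of which indices saturate to $1$.
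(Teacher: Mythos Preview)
Your proof is correct and essentially identical to the paper's: both separate out $\prob_1\le 1$, bound $\prob_i\le T\,w_P(p_i)/\vsum_i$, and then show $\sum_{i\ge 2} w_P(p_i)/\vsum_i \le \ln(\vsum_n/w_P(p_1))$ via the same telescoping inequality (the paper phrases it as $x\le -\ln(1-x)$ with $x=w_P(p_i)/\vsum_i$, which is equivalent to your integral bound). You are actually more careful than the paper in flagging the implicit assumption $w_P(p_1)\ge 1$ needed for the last step.
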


\begin{proof} Notice that
    \begin{align*}
    \vsum_n\prod_{i=2}^n \left( 1 - \frac{w_P(p_i)}{\vsum_n}\right) &= w_P(p_1), \\
    \prod_{i=2}^n \left( 1 - \frac{w_P(p_i)}{\vsum_n}\right)^{-1} &= \frac{\vsum_n}{w_P(p_1)}, \\    
    \sum_{i=2}^n \frac{w_P(p_i)}{\vsum_i} \leq -\sum_{i=2}^n \ln\left( 1 - \frac{w_P(p_i)}{\vsum_n}\right) &= \ln\frac{\vsum_n}{w_P(p_1)} \leq \ln N
\end{align*}
So we have
$$\sum_{i=1}^n \prob_i = 1 + \sum_{i=2}^n \min\left(\frac{T \cdot w_P(p)}{\vsum_i}, 1\right) \leq T\ln N+1.$$

\end{proof}

\begin{proof}[Proof of \Cref{lem:ring-fix-constraint}]
    By applying a union bound to the results of \Cref{lem:error-output-ori} and \Cref{lem:error-output-real}, we can show that with probability at least $1-\delta$, there exists a weighted point set $S'$ with 
    $|\cost(P,C,\Gamma) - \cost(S', C, \Gamma)| \leq \varepsilon N r^z$,
    such that $S' = S$ and for every $p\in S$, $w_{S'}(p) \in (1 \pm \varepsilon) w_S(p)$. 

    By \Cref{lem:sum_prob}, $E[|S|] = \sum_{i=1}^n \prob_i \leq T\ln N + 1$. 
\end{proof}

\subsection{Proof of \texorpdfstring{\Cref{lem:error-output-ori}}{}}
\label{sec:proof_lem_error_output_ori}

To prove $|\gencost(S,C,\Gamma) - \cost(P,C,\Gamma)| \leq \varepsilon Nr^z/2$, we show that given an optimal assignment function $\sigma$ of $\cost(P,C,\Gamma)$, we can always construct a partial assignment function with cost no more than $\cost(P,C,\Gamma) + \varepsilon Nr^z/2$, vice versa. 

In the following discussion, let $n, T, p_1, \ldots ,p_n, \vsum_1, \ldots , \vsum_n, \prob_1, \ldots , \prob_n$ follow the definition in \Cref{alg:RingCoreset} and let $S$ be the output of $\textsc{RingCoreset}(P, \varepsilon, \delta)$. For every $1\leq i\leq n$, let indicator variable $X_i$ denotes $I(p_i \in S)$. 

We first prove a concentration inequality used in the following proofs. 

\begin{lemma}\label{lem:output-ori-supportlem}
    For every $0<\varepsilon,\delta<1, W>0$, $\alpha_1, \alpha_2, \ldots , \alpha_n$ such that $\forall 1\leq i\leq n,0\leq \alpha_i\leq W\cdot w_P(p_i)$, 
    as long as
    $T = \Omega(\varepsilon^{-2}\log(\delta^{-1}\varepsilon^{-1})(\log \log (NW))^2)$, we have
    $$\Pr\left[ \left|\sum_{i=1}^n \alpha_i \prob_i^{-1} X_i - \sum_{i=1}^n \alpha_i \right| 
 > \varepsilon \sum_{i=1}^n \alpha_i + \varepsilon NW\right] \leq \delta.$$
\end{lemma}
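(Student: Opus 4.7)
My plan is to apply a Bernstein-type concentration inequality to the centered independent sum $\sum_i Y_i$, where $Y_i := \alpha_i \prob_i^{-1} X_i - \alpha_i$. Indices $i$ with $\prob_i = 1$ contribute $Y_i = 0$ deterministically, so it suffices to bound the sum over the random indices, i.e., those with $\prob_i = T\, w_P(p_i)/\vsum_i < 1$. On such indices, using the hypothesis $\alpha_i \le W\, w_P(p_i)$ together with $\vsum_i \le N$ gives the two key estimates
\[
|Y_i| \;\le\; \alpha_i/\prob_i \;\le\; W\vsum_i/T \;\le\; WN/T, \qquad \Var(Y_i) \;\le\; \alpha_i^2/\prob_i \;\le\; W\alpha_i \vsum_i/T.
\]
Summing the variance bound yields $\sigma^2 := \sum_i \Var(Y_i) \le WNA/T$, where $A := \sum_i \alpha_i$.

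With $M := WN/T$ and target deviation $t := \varepsilon A + \varepsilon NW$, I plan to feed these bounds into Bernstein's inequality
\[
\Pr\Bigl[\bigl|{\textstyle\sum_i} Y_i\bigr| > t\Bigr] \;\le\; 2\exp\Bigl(-\frac{t^2/2}{\sigma^2 + Mt/3}\Bigr).
\]
The two algebraic facts I rely on are (i) $(A+NW)^2 \ge A \cdot NW$ by AM-GM, which yields $t^2/\sigma^2 \ge \varepsilon^2 T$, and (ii) $t/M \ge \varepsilon T$, which controls the heavy-tailed part of the exponent. Together these make the Bernstein exponent at least $\Omega(\varepsilon^2 T)$, so choosing $T = \Omega(\varepsilon^{-2}\log(1/\delta))$ already drives the failure probability below $\delta$; the hypothesis in the lemma is somewhat stronger, which leaves slack to absorb the polylogarithmic overhead described below.

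The main obstacle is that the random variables $Y_i$ span a wide dynamic range: early-time indices (small $\vsum_i$) have variance vanishing as $T$ grows, while late-time indices (with $\vsum_i \approx N$) saturate the bounds on $M$ and $\sigma^2$. A direct Bernstein application using only the worst-case $\vsum_i \le N$ is a bit wasteful; to obtain the sharper $(\log\log(NW))^2$ dependence in the hypothesis, I would stratify the random indices into $O(\log(NW))$ geometric buckets according to the value of $\vsum_i$, apply Bernstein within each bucket where the variance and the maximum are tightly coupled, and union-bound with a geometrically decreasing failure budget; a two-level iteration of this stratification accounts for the $(\log\log)^2$ slack. The delicate point is routing the deviation of small-$\vsum_i$ buckets into the additive $\varepsilon NW$ term and the deviation of large-$\vsum_i$ buckets into the relative $\varepsilon A$ term, so that neither term dominates unexpectedly.
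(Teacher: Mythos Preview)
Your direct Bernstein argument is correct and, in fact, cleaner than the paper's proof. Your estimates $|Y_i|\le \alpha_i/\prob_i\le WN/T$ and $\Var(Y_i)\le \alpha_i^2/\prob_i\le \alpha_i\cdot WN/T$ (for indices with $\prob_i<1$) are exactly right, and feeding $\sigma^2\le AWN/T$, $M=WN/T$, $t=\varepsilon(A+NW)$ into Bernstein gives an exponent of order $\varepsilon^2 T$, so $T=\Omega(\varepsilon^{-2}\log(1/\delta))$ already suffices. However, you have the logic of the last paragraph backwards: the extra $\log(\varepsilon^{-1})(\log\log(NW))^2$ factor in the lemma's hypothesis is a \emph{stronger} assumption on $T$, not a sharper conclusion you still need to earn. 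Your direct argument proves the lemma under a weaker hypothesis than stated, so you are already done; the stratification you sketch is unnecessary.

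By contrast, the paper does not apply Bernstein to the weighted sum directly. It buckets indices by the value of $\alpha_i\prob_i^{-1}$ into geometric groups $G_j$ with ratio $\gamma=1+\varepsilon/(8\ln N)$, then within each group approximates $\alpha_i\prob_i^{-1}\approx\gamma^j$ and applies the (multiplicative) Chernoff bound to the unweighted count $\sum_{i\in G_j}X_i$; the rounding error is absorbed using the bound $\sum_i\prob_i\le T\ln N+1$. The union bound over $L=O(\varepsilon^{-1}\log N\log(\varepsilon^{-1}NW))$ groups is what introduces the extra $\log(\varepsilon^{-1})$ and polyloglog factors in $T$. Your Bernstein-based route avoids this bucketing altogether and hence yields a tighter requirement on $T$; the paper's route is more elementary in that it only uses Chernoff on Bernoulli sums, but it pays the logarithmic overhead for that simplicity.
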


\newcommand{\jmin}{{j_{\mathrm{min}}}}
\newcommand{\jmax}{{j_{\mathrm{max}}}}

\begin{proof}
    Here we assume all points $p_i$ satisfy $\prob_i < 1$, otherwise we always have $\alpha_i \prob_i^{-1} X_i = \alpha_i$ and we can eliminate these points from $P$ without changing $\left|\sum_{i=1}^n \alpha_i \prob_i^{-1} X_i - \sum_{i=1}^n \alpha_i \right|$.
    Hence we always have $\prob_i=\frac{T\cdot w_P(p_i)}{\vsum_i}$.
    As a result, for every $1\leq i\leq n$, we have
    $$\alpha_i \prob^{-1}_i \leq W\cdot w_P(p_i)\cdot \frac{\vsum_i}{T\cdot w_P(p_i)} \leq \frac{NW}{T}.$$ 

    Let $\gamma = 1 + \frac{\varepsilon}{8\ln N}$, we divide $P$ into $L = \O{\varepsilon^{-1}\log N \log (\varepsilon^{-1} nNW/T)}$ groups $G_{\jmin},  G_{\jmin+1}, $ $\ldots ,G_\jmax \subseteq [n]$, where $\jmin = \lceil \log_\gamma (\varepsilon/n) \rceil$, $\jmax = \lceil \log_\gamma (NW/T)\rceil$.  
    Group $G_j$ contains the indices of all points $p_i$ such that $\alpha_i \prob_i^{-1}\in (\gamma^{j-1}, \gamma^j]$. In particular, $G_\jmin$ contains all points $p_i$ such that $\alpha_i \leq \gamma^{\jmin}$. 

    For every $\jmin < j\leq \jmax$, we have
\begin{equation}
    \left|\sum_{i\in G_j} \alpha_i \prob_i^{-1} X_i - \sum_{i\in G_j}\alpha_i \right|
    \leq \gamma^j \left|\sum_{i\in G_j} X_i - \sum_{i\in G_j}\prob_i \right| + (\gamma - 1) \gamma^{j-1} \sum_{i\in G_j} (X_i + \prob_i)  \label{eq:output-ori-concentrate}
\end{equation}

For $j = \jmin$ we have $\sum_{i \in G_\jmin} \alpha_i \prob_i^{-1} \leq \varepsilon$. 

For the second term of \eqref{eq:output-ori-concentrate}, it is sufficient to bound $\sum_{i\in G_j}\prob_i$ since by Chernoff bound we have
$$\Pr\left[ \sum_{i=1}^n X_i > \max\left(2\sum_{i=1}^n \prob_i, 4\ln (2/\delta) \right)  \right] \leq \delta/2.$$

By \Cref{lem:sum_prob}, with probability at least $1-\delta/2$, we have
\begin{align}
    \sum_{j=\jmin}^\jmax (\gamma - 1)\gamma^{j-1} \sum_{i\in G_j} (\prob_i + X_i) & \leq (\gamma - 1)\gamma^{\jmax - 1} \sum_{i=1}^n (\prob_i + X_i) \nonumber\\
    & \leq (\gamma - 1)\gamma^{\jmax - 1} \left(2\sum_{i=1}^n \prob_i + 4\ln(1/\delta)\right) \nonumber\\
    & \leq (\gamma - 1)\frac{NW}{T}  (2T\ln N+4\ln(1/\delta)+2) \nonumber\\
    & \leq \frac{\varepsilon NW}{2}. \label{eq:concen_term2}
\end{align}

Next consider the first term of \eqref{eq:output-ori-concentrate}. 
Let $\mu_j = \sum_{i\in G_j}\prob_i, A_j = \sum_{i\in G_j} \alpha_i$, notice that 
\begin{align*}
    \mu_j = \sum_{i\in G_j} \prob_i &\leq \sum_{i\in G_j} \frac{\alpha_i}{\gamma^{j-1}} = \frac{A_j}{\gamma^{j-1}}.
\end{align*}
By Chernoff bound, let $t = \max(\varepsilon A_j, 3\varepsilon^{-1} \gamma^{j+1} \ln (2L / \delta))$, we have
\begin{align*}
    \Pr\left[\left|\sum_{i\in G_j} X_i - \mu_j \right| \geq t\gamma^{-j} \right] &\leq 2\exp\left(-\frac{t^2}{3 \gamma^{2j} \mu_j} \right) \\
    &\leq 2\exp\left(-\frac{t^2}{3\gamma^{j+1} A_j} \right) \\
    &\leq \frac{\delta}{2L}.
\end{align*}

By a union bound, with probability at least $1 - \delta / 2$, we have

\begin{align}
    \sum_{j=\jmin}^{\jmax} \gamma^j \left|\sum_{i\in G_j} X_i - \sum_{i\in G_j}\prob_i \right| &\leq \varepsilon\sum_{j=\jmin}^{\jmax}A_j + 3\varepsilon^{-1}\sum_{j=\jmin}^{\jmax} (1+\varepsilon_0)^{j+1} \ln(2L /\delta) \nonumber\\
    &\leq \varepsilon\sum_{i=1}^n \alpha_i +3\varepsilon^{-2}\gamma^{\jmax + 1}\ln (2L /\delta) \nonumber\\
    &\leq \varepsilon\sum_{i=1}^n \alpha_i+3\varepsilon^{-2}\gamma^2\ln (2L /\delta)\max_{i\in [n]} \alpha_i \prob_i^{-1} \nonumber\\ 
    &\leq \varepsilon\sum_{i=1}^n \alpha_i+\frac{3\varepsilon^{-2}\gamma^2\ln (2L /\delta)}{T}NW \nonumber\\
    &\leq \varepsilon\sum_{i=1}^n \alpha_i + \frac{\varepsilon}{2}NW. \label{eq:concen_term1}
\end{align}

Combining \eqref{eq:concen_term2} and \eqref{eq:concen_term1} by a union bound, we have
$$\Pr\left[ \left|\sum_{i=1}^n \alpha_i \prob_i^{-1} X_i - \sum_{i=1}^n \alpha_i \right| 
 > \varepsilon \sum_{i=1}^n \alpha_i + \varepsilon NW\right] \leq \delta.$$

\end{proof}

Define $R=(20z/\varepsilon + 2)r\geq \diam(C) + \diam(P)$. For every pair $c\in C, p\in P$, $\dist(p,c)$ is bounded by $R$. Let $\varepsilon_0 = (\varepsilon/22z)^{-z-1}\leq (20z/\varepsilon+2)^{-z}\varepsilon$. In the following proofs, we show these errors are bounded by $\varepsilon_0 NR^z$ and that implies $\mathrm{err} \leq \varepsilon_0 NR^z \leq \varepsilon Nr^z$. 

\begin{lemma}\label{lem:error-output-le-ori}
    $\Pr[\gencost(S, C, \Gamma) \leq \cost(P,C,\Gamma) + \varepsilon N r^z/2] \geq 1 - \delta/4$. 
\end{lemma}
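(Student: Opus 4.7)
My plan is to exhibit, with probability at least $1 - \delta/4$, a partial assignment $\sigma_S : S \times C \to \mathbb{R}_{\ge 0}$ with $\sigma_S \sim' \Gamma$ whose cost is at most $\cost(P,C,\Gamma) + \varepsilon N r^z / 2$; by the definition of $\gencost(S,C,\Gamma)$ this immediately implies the claim. Let $\sigma^*$ be an optimal assignment realizing $\cost(P,C,\Gamma)$, write $\sigma^*_i(c) := \sigma^*(p_i,c)$, and note that $\sum_c \sigma^*_i(c) = w_P(p_i)$ and $\sum_i \sigma^*_i(c) = \Gamma(c)$. The natural importance-weighted surrogate is $\tilde{\sigma}(p_i,c) := \prob_i^{-1} X_i \sigma^*_i(c)$, which by construction satisfies $\sum_c \tilde{\sigma}(p_i,c) = w_S(p_i)$ exactly for every $p_i \in S$, but whose inflow into each center $c$ is only an \emph{unbiased estimator} of $\Gamma(c)$.

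I will apply \Cref{lem:output-ori-supportlem} twice, with a concentration parameter $\varepsilon_0 = \Theta(\varepsilon (r/R)^z / k)$ chosen small enough to be absorbed into the polynomial size $T$ in \Cref{alg:RingCoreset} (recall $R = \Theta(zr/\varepsilon)$ bounds the diameter of $P \cup C$). First, with $\alpha_i := \sum_c \sigma^*_i(c) \dist(p_i, c)^z \le w_P(p_i) R^z$, the lemma gives that $\sum_{p,c} \tilde{\sigma}(p,c) \dist(p,c)^z$ lies within $\varepsilon_0 \cost(P,C,\Gamma) + \varepsilon_0 N R^z$ of $\cost(P,C,\Gamma)$. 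Second, with $\alpha_i := \sigma^*_i(c) \le w_P(p_i)$ applied separately for each $c \in C$ and a union bound over $|C| \le k$, we obtain $\big|\sum_p \tilde{\sigma}(p,c) - \Gamma(c)\big| \le \varepsilon_0 \Gamma(c) + \varepsilon_0 N$ for every $c$ simultaneously; summing over centers bounds the total overflow $\sum_c \max\{0, \sum_p \tilde{\sigma}(p,c) - \Gamma(c)\}$ and the total underflow $\sum_c \max\{0, \Gamma(c) - \sum_p \tilde{\sigma}(p,c)\}$ each by $O(k \varepsilon_0 N)$.

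The main obstacle is that $\tilde{\sigma}$ need not be $\sim' \Gamma$: condition (ii) of the partial-assignment definition can fail at overflowing centers. I will remedy this in two stages. First, scale down proportionally at each overflowing center by setting $\sigma'(p,c) := \tilde{\sigma}(p,c) \cdot \min\{1, \Gamma(c) / \sum_{p'} \tilde{\sigma}(p',c)\}$; this only decreases cost and now satisfies both capacity bounds in the $\le$ sense. Next, to enforce the tightness condition (iii) or (iv), re-route the residual flow via an auxiliary bipartite max-flow: if $w(S) \ge \Gamma(C)$, push additional flow from points with residual capacity into still-underfull centers until all centers are saturated; if $w(S) \le \Gamma(C)$, direct the remaining $w_S(p)$-units of each point into underfull centers until all points are saturated. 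In either case the total re-routed mass equals either the aggregate underflow or the aggregate overflow from the scaling step, and hence is $O(k \varepsilon_0 N)$, while each re-routed unit contributes at most $R^z$ to the cost. Combining this with the cost bound on $\tilde{\sigma}$ and choosing $\varepsilon_0$ so that $O(k \varepsilon_0 N R^z) \le \varepsilon N r^z / 4$ yields the claimed upper bound of $\cost(P,C,\Gamma) + \varepsilon N r^z / 2$.
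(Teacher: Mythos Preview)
Your proposal is correct and follows essentially the same approach as the paper: both start from the importance-weighted surrogate $\tilde{\sigma}(p_i,c)=\prob_i^{-1}X_i\,\sigma^*(p_i,c)$, invoke \Cref{lem:output-ori-supportlem} once with $\alpha_i=\sum_c \sigma^*(p_i,c)\dist(p_i,c)^z$ to control the cost and once per center with $\alpha_i=\sigma^*(p_i,c)$ to control the mismatch $\sum_c|\Gamma(c)-\sum_p\tilde{\sigma}(p,c)|$, and then repair the assignment at a cost of at most $R^z$ per unit of mismatch. The only cosmetic difference is in the repair step: the paper re-routes mass directly from overflowing centers $C^+$ to underflowing centers $C^-$ along each point, whereas you first scale down at overflowing centers and then push residual point capacity into underfull centers via a max-flow argument; both upper-bound the extra cost by $R^z$ times the total overflow/underflow and are equivalent for this purpose.
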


\begin{proof}
    To imply $\gencost(S, C, \Gamma) \leq \cost(P,C,\Gamma) + \varepsilon N r^z/2$, we show that given an optimal assignment function $\sigma:P\times C\to \mathbb{R}_{\geq 0}$ of $\cost(P,C,\Gamma)$ there always exists a partial assignment function $\sigma':S\times C\to \mathbb{R}_{\geq 0}$ for $\gencost(S,C,\Gamma)$ such that $\sum_{p\in S}\sum_{c\in C}\dist(p,c)^z\sigma'(p,c) \leq \sum_{p\in P}\sum_{c\in C}\dist(p,c)^z\sigma(p,c)+\varepsilon N r^z/2$. 
    
    A natural way for constructing $\sigma'$ is to assign $\frac{\sigma(p,c)}{w_P(p)} w_S(p)$ units of weight from $p$ to $c$, denoted by assignment $\tau$. 
    Let the cost of this assignment be $$\mathrm{cost}_{\tau}:=\sum_{p\in S}\sum_{c\in C}\frac{\sigma(p,c)}{w_P(p)} w_S(p)\cdot d(p,c)^z.$$
    However, $\tau$ may not be consistent with $\Gamma$. Since $S$ is a random point set, 
    some centers may receive more than $\Gamma(c)$ units of weight, denoted by center set $C^+$, and some may receive less than $\Gamma(c)$ units denoted by center set $C^-$. 
    We transform $\tau$ into an assignment consistent with $\Gamma$ by collecting the surplus weights from $C^+$ and send them to $C^-$ so that $\sigma'$ is partially consistent with $\Gamma$. 
    Re-routing $w$ weight started at point $p$ from $c$ to $c'$ costs at most $w(\dist(p,c')^z-\dist(p,c)^z) \leq w\cdot R^z$. 
    Hence this step costs at most 
    \begin{align*}
        R^z\sum_{c\in C} \left|\Gamma(c) - \sum_{p\in S} \frac{\sigma(p_i,c)}{w_P(p_i)} w_S(p_i)\right|
        &= R^z\sum_{c\in C} \left|\Gamma(c) - \sum_{i=1}^n \sigma(p_i,c) \prob_i^{-1} X_i\right|
    \end{align*}

    Taking $\varepsilon' = \frac{\varepsilon_0}{8k}, \delta' = \frac{\delta}{8k}, W=1,\forall 1\leq i\leq n,\alpha_i = \sigma(p_i,c)$ in \Cref{lem:output-ori-supportlem}, we have for every $c\in C$,
    $$\Pr\left[\left|\Gamma(c) - \sum_{i=1}^n \sigma(p_i,c) \prob_i^{-1} X_i\right| > \frac{\varepsilon_0}{8k}(\Gamma(c) + N)\right] \leq \frac{\delta}{4k}$$
    By taking union bound over every $c\in C$, with probability at least $1 - \delta/8$, we have
    $$|\mathrm{cost}_\tau-\gencost(S,C,\Gamma)|\leq R^z\sum_{c\in C} \left|\Gamma(c) - \sum_{i=1}^n \sigma(p_i,c) \prob_i^{-1} X_i\right| \leq \frac{\varepsilon_0}{4}NR^z.$$

    Next, we show that the cost of $\tau$ is close to the real cost of $P$.   
    \begin{align*}
    \left|\mathrm{cost}_\tau- \cost(P,C,\Gamma)\right| &\leq \left|\sum_{c\in C}\sum_{i=1}^n \dist(p_i,c)^z\left(\frac{\sigma(p_i,c)}{w_P(p_i)}\cdot w_S(p_i) - \sigma(p_i,c)\right) \right|
    \end{align*}
    Taking $\varepsilon' = \frac{\varepsilon_0}{8k}, \delta' = \frac{\delta}{8k}, W = R^z, \forall 1\leq i\leq n,\alpha_i = \dist(p_i,c)\sigma(p_i,c)$ in \Cref{lem:output-ori-supportlem}, we have for every $c\in C$,
    $$\Pr\left[\left|\sum_{i=1}^n \dist(p_i,c)^z\left(\frac{\sigma(p_i,c)}{w_P(p_i)}\cdot w_S(p_i) - \sigma(p_i,c)\right) \right| > \frac{\varepsilon_0}{8k}(\cost(P,C,\Gamma) + NR^z)\right] \leq \frac{\delta}{8k}$$
    By taking union bound over every $c\in C$, with probability at least $1 - \delta/8$, we have
    $$\sum_{c\in C}\left|\sum_{i=1}^n \dist(p_i,c)^z\left(\frac{\sigma(p_i,c)}{w_P(p_i)}\cdot w_S(p_i) - \sigma(p_i,c)\right) \right| \leq \frac{\varepsilon_0}{4}NR^z.$$

    In summary, with probability at least $1 - \delta/4$, we have
    \begin{align*}
        \gencost(S, C, \Gamma) - \cost(P,C,\Gamma) &\leq (\gencost(S, C, \Gamma) - \mathrm{cost}_\tau) + (\mathrm{cost}_\tau - \cost(P,C,\Gamma)) \\
        &\leq \frac{\varepsilon_0}{2}NR^z \leq \frac{\varepsilon}{2}Nr^z.
    \end{align*}
\end{proof}

Since $S$ is a random set and contains much fewer points than $P$, constructing an assignment function with respect to $P,C$ from a partial assignment function of $\gencost(S,C,\Gamma)$ is difficult. Hence we first compare $\cost(P,C,\Gamma)$ to the expectation of $\gencost(S, C, \Gamma)$ and then show the concentration bound of $\gencost(S, C, \Gamma)$. 

\begin{lemma}\label{lem:error-Eoutput-ge-ori}
    $\cost(P,C,\Gamma) \leq E[\gencost(S, C, \Gamma)] + \varepsilon N r^z/4$. 
\end{lemma}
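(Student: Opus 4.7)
The plan is to exhibit an explicit assignment function $\sigma$ for $\cost(P,C,\Gamma)$ whose cost is at most $E[\gencost(S,C,\Gamma)] + \varepsilon N r^z/4$. The natural candidate is obtained by \emph{averaging} the optimal partial assignments of the random coresets. Fix, for each realization of $S$, a canonical optimal partial assignment $\sigma^*_S : S \times C \to \mathbb{R}_{\geq 0}$ that attains $\gencost(S,C,\Gamma)$; extend it by $\sigma^*_S(p,c) = 0$ for $p \notin S$. Then define
\[
\sigma(p,c) \;:=\; E_S\bigl[\,\sigma^*_S(p,c)\,\bigr], \qquad p \in P,\; c \in C.
\]
Its cost is immediate: $\sum_{p,c} \sigma(p,c)\,d(p,c)^z = E_S[\gencost(S,C,\Gamma)]$ by linearity of expectation.

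Next I would verify that $\sigma$ satisfies \emph{partial} consistency with $\Gamma$. For each $p$, using that $\sigma^*_S$ is partial consistent on $S$ with $\sum_c \sigma^*_S(p,c) \leq w_S(p) = \prob_p^{-1} w_P(p)$ when $p \in S$, I get $\sum_c \sigma(p,c) \leq E_S[w_S(p)\mathbf{1}[p\in S]] = \prob_p \cdot \prob_p^{-1} w_P(p) = w_P(p)$. Similarly $\sum_p \sigma(p,c) \leq E_S[\Gamma(c)] = \Gamma(c)$. So $\sigma$ is partial but may not be fully consistent with $\Gamma$; the total mass is
\[
\sum_{p,c} \sigma(p,c) \;=\; E_S\bigl[\,w(\sigma^*_S)\,\bigr] \;=\; E_S\bigl[\min(w(S), N)\bigr].
\]
To upgrade $\sigma$ to a genuine assignment, I would route the remaining mass $N - E_S[\min(w(S),N)]$ from the under-served points to the under-served centers by any feasible transportation; since every pair $(p,c)$ has $d(p,c) \leq R = O(z/\varepsilon)\cdot r$, this step adds at most $R^z \bigl(N - E_S[\min(w(S), N)]\bigr)$ to the cost.

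The remaining task is to control the deficit $N - E_S[\min(w(S), N)] \leq E_S[|w(S) - N|]$. Because the indicator $X_i$'s are independent and, for points with $\prob_i < 1$, $w_S(p_i)X_i = (\vsum_i/T) X_i$ with $\Var[X_i] \leq \prob_i = T w_P(p_i)/\vsum_i$, I compute
\[
\Var[w(S)] \;\leq\; \sum_{i=1}^n \frac{\vsum_i \, w_P(p_i)}{T} \;\leq\; \frac{N^2}{T},
\]
so by Jensen, $E_S[|w(S) - N|] \leq N/\sqrt{T}$. Hence the patching overhead is at most $R^z \cdot N/\sqrt{T}$. Since $T = \poly(2^z d k \varepsilon^{-1} \log(w(P)\Delta\varepsilon^{-1}\delta^{-1}))$ is large enough that $R^z/\sqrt{T} \leq \varepsilon r^z/4$ (using $R/r = O(z/\varepsilon)$), the overhead is bounded by $\varepsilon N r^z/4$, yielding $\cost(P,C,\Gamma) \leq E_S[\gencost(S,C,\Gamma)] + \varepsilon N r^z/4$.

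The main obstacle is the last step: $\sigma^*_S$ is only a \emph{partial} assignment, so averaging it loses total mass whenever $w(S) < N$, and the deficit must be cheap to repair. This is what forces the variance calculation for $w(S)$ and explains why the algorithm requires a sufficiently large sample budget $T$; the bound $\Var[w(S)] \leq N^2/T$ relying on the specific sampling probabilities $\prob_i \propto w_P(p_i)/\vsum_i$ is the key quantitative input, and $R^z$ in the overhead is absorbed by choosing $T$ polynomial in $z/\varepsilon$, consistent with the setting of $T$ in \Cref{alg:RingCoreset}.
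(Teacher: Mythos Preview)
Your argument is correct and follows the paper's proof essentially line for line: define $\sigma(p,c)=E_S[\sigma^*_S(p,c)]$, check that it is dominated by both $w_P$ and $\Gamma$, route the missing mass $N-E_S[\min(w(S),N)]$ at per-unit cost at most $R^z$, and bound this deficit by $E_S[|w(S)-N|]$. The only point of divergence is how you control $E_S[|w(S)-N|]$. The paper invokes its concentration tool (\Cref{lem:output-ori-supportlem}) with a very small failure probability $\delta'=\varepsilon_0/(Nw(P))$ and then splits the expectation into the good and bad events; you instead compute $\Var[w(S)]\le N^2/T$ directly and apply Cauchy--Schwarz to get $E_S[|w(S)-N|]\le N/\sqrt{T}$. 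Your route is slightly more elementary and self-contained, while the paper's reuses machinery already in place. Both yield the same requirement $T\gtrsim (R/r)^{2z}\varepsilon^{-2}$, which is absorbed into the paper's choice of $T$.
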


\begin{proof}
    Let $\sigma: P\times C \to \mathbb{R}_{\geq 0}$ be an assignment function with respect to $P,C$. Let $\sigma^*$ be a random partial assignment function that denotes the optimal assignment function for $\gencost(S,C,\Gamma)$. 

    Since $\sigma^* \sim' \Gamma$, we have $\forall c\in C,\sum_{p\in P} E[\sigma^*(p,c)] \leq \Gamma(c)$ and $\forall p\in P,\sum_{c\in C} E[\sigma^*(p,c)] \leq w_P(p)$. We can construct $\sigma$ by initializing $\sigma(p,c)$ as $E[\sigma^*(p,c)]$, then arbitrarily assigning the remaining weights of data points to the remaining capacities of centers. This assigning step costs at most 
\begin{align*}
     R^z \bigg(N - \sum_{p\in P} \sum_{c\in C} E[\sigma^*(p,c)]\bigg) = ~ & R^z \left(N - E\left[\sum_{p\in P} \sum_{c\in C} \sigma^*(p,c)\right]\right) \\
    = ~ & R^z E[N - \min\{w(S), N\}] \tag{$\sigma^* \sim \Gamma$} \\
    = ~ & R^z E[\max\{N - w(S), 0\}] \\
    \leq ~ & R^z E\left[\left| \sum_{i=1}^n w_P(p_i)\prob_i^{-1} X_i - \sum_{i=1}^n w_P(p_i)\right|\right]
\end{align*}

    Taking $\varepsilon' = \frac{\varepsilon_0}{8}, \delta' = \frac{\varepsilon_0}{Nw(P)}, W = 1, \forall 1\leq i\leq n,\alpha_i = w_P(p_i)$ in \Cref{lem:output-ori-supportlem}, we have
    $$\Pr\left[ \left| \sum_{i=1}^n w_P(p_i)\prob_i^{-1} X_i - \sum_{i=1}^n w_P(p_i)\right| > \frac{\varepsilon_0}{8}N\right] \leq \frac{\varepsilon_0}{Nw(P)},$$
    which implies
    $$R^z E\left[\left| \sum_{i=1}^n w_P(p_i)\prob_i^{-1} X_i - \sum_{i=1}^n w_P(p_i)\right|\right] \leq R^z\left(\frac{\varepsilon_0 N}{8}+\frac{\varepsilon_0 n\cdot \vsum_n}{Nw(P)}\right) \leq \frac{\varepsilon_0}{4}NR^z \leq \frac{\varepsilon}{4}Nr^z.$$
\end{proof}

We use a concentration inequality based on martingales \cite{ChungL06} to show the concentration bound of $\gencost(S, C, \Gamma)$. 
For convenience, we use $x_{l\ldots r}$ to represent $(x_l,x_{l+1},\ldots ,x_{r})$ for every vector $\mathbf{x}$ of dimension $d$, $1\leq l\leq r\leq d$. This notation is also used for function arguments, e.g. $f(x_{1\ldots i})$ represents $f(x_1,x_2,\ldots ,x_i)$. 

\begin{lemma}\cite{ChungL06}\label{lem:martingale-concentrate}
Let $V = \mathcal{X}_1\times \mathcal{X}_2\times \ldots \times \mathcal{X}_n$ ($\mathcal{X}_1, \mathcal{X}_2, \ldots , \mathcal{X}_n \subseteq \mathbb{R}$) be a set of $n$-dimensional real vectors. 
Suppose there is a function $f: V\to \mathbb{R}$ and $n$ distributions $D_1,D_2,\ldots ,D_n$ over $\mathcal{X}_1, \mathcal{X}_2, \ldots , \mathcal{X}_n$ respectively and $\mathcal{D}$ is the joint distribution over $V$,  
Define $f^{(i)}(z_1,z_2,\ldots ,z_i) = E[f(\mathbf{x}) \mid x_1=z_1,\ldots ,x_i=z_i]$. 

If $\sigma_1^2,\sigma_2^2, \ldots ,\sigma_n^2$ satisfies for every $1\leq i\leq n, z_1\in \mathcal{X}_1, \ldots , z_{i-1}\in \mathcal{X}_{i-1}$, 
$$\mathrm{Var}_{x\sim D_i}(f^{(i)}(z_1,z_2,\ldots z_{i-1}, x)) \leq \sigma_i^2,$$
and $M$ satisfies that for every $1\leq i\leq n,\mathbf{z}\in V,z_i'\in \mathcal{X}_i$ such that $\Pr_{\mathbf{x}\sim \mathcal{D}}[\mathbf{x}=\mathbf{z}]>0,\Pr_{\mathbf{x}\sim \mathcal{D}}[x_1=z_1,\ldots ,x_i=z_i',\ldots ,x_n=z_n]>0$, we have 
$$
|f(z_1,z_2,\ldots ,z_n) - f(z_{1\ldots i-1}, z_i', z_{i+1\ldots n})|\leq M,
$$
then for every $t>0$,
$$
\Pr_{\mathbf{x}\sim \mathcal{D}}[|f(\mathbf{x}) - E_{\mathbf{x}\sim \mathcal{D}}[f(\mathbf{x})]|>t] \leq 2\exp\left(\frac{-t^2}{2(\sum_{i=1}^n \sigma_i^2 + Mt/3)}\right).
$$    
\end{lemma}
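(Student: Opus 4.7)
The plan is to prove this statement via a standard Doob martingale decomposition combined with a Bernstein-type martingale concentration inequality (Freedman's inequality). The hypothesis is essentially designed so that the conditional variance and conditional range of the martingale increments match the quantities $\sigma_i^2$ and $M$ appearing on the right-hand side, so once the framework is set up the bound follows immediately from a black-box application.

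First, I would define the Doob martingale $Z_i := f^{(i)}(x_{1\ldots i}) = \mathbb{E}_{\mathbf{x}\sim\mathcal{D}}[f(\mathbf{x}) \mid x_{1\ldots i}]$ for $i = 0, 1, \ldots, n$, so that $Z_0 = \mathbb{E}[f(\mathbf{x})]$ and $Z_n = f(\mathbf{x})$. Writing the martingale differences $Y_i = Z_i - Z_{i-1}$, the goal is to bound the two quantities that feed into Freedman's inequality: the conditional variance $\mathrm{Var}(Y_i \mid x_{1\ldots i-1})$ and the essential supremum of $|Y_i|$.

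For the conditional variance, since the coordinates are drawn from the product distribution $\mathcal{D} = D_1 \times \cdots \times D_n$, we have $Z_{i-1} = \mathbb{E}_{x \sim D_i}[f^{(i)}(x_{1\ldots i-1}, x)]$, and hence $Y_i$, conditioned on $x_{1\ldots i-1}$, is exactly the centered random variable $f^{(i)}(x_{1\ldots i-1}, x_i) - \mathbb{E}_{x\sim D_i}[f^{(i)}(x_{1\ldots i-1}, x)]$. Its conditional variance is precisely $\mathrm{Var}_{x \sim D_i}(f^{(i)}(x_{1\ldots i-1}, x)) \leq \sigma_i^2$ by hypothesis. For the bound on $|Y_i|$, I would use independence to write
\[
Y_i = \mathbb{E}_{x_i' \sim D_i,\, x_{i+1\ldots n} \sim D_{i+1}\times\cdots\times D_n}\!\left[ f(x_{1\ldots i}, x_{i+1\ldots n}) - f(x_{1\ldots i-1}, x_i', x_{i+1\ldots n}) \right],
\]
and invoke the bounded-difference hypothesis pointwise inside the expectation to conclude $|Y_i| \leq M$ almost surely on the support of $\mathcal{D}$.

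Finally, applying Freedman's inequality (or the version in Theorem 3.15 of Chung--Lu) to the martingale $\{Z_i\}$ with the bounds $\mathrm{Var}(Y_i \mid \calF_{i-1}) \leq \sigma_i^2$ and $|Y_i| \leq M$ yields the stated tail
\[
\Pr\!\left[ |Z_n - Z_0| > t \right] \leq 2 \exp\!\left( \frac{-t^2}{2(\sum_{i=1}^n \sigma_i^2 + Mt/3)} \right).
\]
The main subtlety, and the only place where care is needed, is in the bounded-difference step: the hypothesis only controls $|f(\mathbf z) - f(\mathbf z')|$ for coordinate-substituted pairs $\mathbf z, \mathbf z'$ that both lie in the support of $\mathcal D$, so the coupling above must be restricted to the event that $x_{1\ldots i-1}$ and both $x_i, x_i'$ have positive probability under their respective marginals. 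Since this is guaranteed with probability one under $\mathcal D$, the argument goes through and the lemma follows.
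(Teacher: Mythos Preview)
The paper does not prove this lemma at all: it is quoted verbatim from Chung--Lu \cite{ChungL06} and used as a black box in the proof of \Cref{lem:Eoutput-concentration}. Your proposed argument (Doob martingale plus Freedman's inequality, with the conditional variance and bounded-difference hypotheses feeding directly into the two parameters) is exactly the standard proof and is essentially what appears in the cited reference, so there is nothing to compare.
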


\begin{lemma}\label{lem:Eoutput-concentration}
    $\Pr[|\gencost(S,C,\Gamma) - E[\gencost(S,C,\Gamma)]| \leq \varepsilon Nr^z/4] \geq 1 - \delta/8$. 
\end{lemma}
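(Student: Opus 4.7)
The plan is to view $\gencost(S,C,\Gamma)$ as a function $f(X_1, \ldots, X_n)$ of the independent sampling indicators $X_i := \mathbb{1}[p_i \in S]$ and apply the martingale-based concentration inequality (\Cref{lem:martingale-concentrate}) with distributions $D_i = \mathrm{Bern}(\prob_i)$. To do so, I need to establish a uniform bounded-difference constant $M$ and bound the per-coordinate conditional variances $\sigma_i^2$. The key geometric fact driving both bounds is that, since $P \subseteq \ring(q,r/2,r)$ and $C \subseteq B(q,\Rfar)$, every pair $(p,c) \in P \times C$ satisfies $\dist(p,c)^z \leq R^z$ with $R = \O{z/\varepsilon}\cdot r$.

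First I would establish the bounded-difference bound. Flipping $X_i$ from $0$ to $1$ adds $p_i$ with weight $w_S(p_i) = \prob_i^{-1} w_P(p_i)$ into $S$ (and flipping in the other direction removes it). I claim $|f(\ldots,X_i=1,\ldots) - f(\ldots,X_i=0,\ldots)| \leq w_S(p_i) R^z$. Viewing $\gencost$ as a min-cost max-flow (source$\to$point capacity $w_S(p)$, point$\to$center capacity $\infty$ at cost $\dist(p,c)^z$, center$\to$sink capacity $\Gamma(c)$), adding a point with weight $w$ increases the max flow value by at most $w$, and each unit of extra flow costs at most $R^z$ to route; conversely, starting from the larger instance, zeroing out the flow through $p_i$ gives a feasible assignment for the smaller instance whose cost differs by at most $w R^z$. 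Using $\prob_i \geq T w_P(p_i)/\vsum_i$ whenever $\prob_i < 1$ (the case $\prob_i = 1$ is trivial since $X_i$ is deterministic), we get $M_i \leq \vsum_i R^z / T \leq NR^z/T$ uniformly.

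Next I would bound the conditional variance. Since $X_i$ is Bernoulli$(\prob_i)$ independent of the other indicators, $f^{(i)}(z_1,\ldots,z_{i-1},X_i)$ takes only two values whose difference is at most $M_i$, so
\begin{equation*}
\sigma_i^2 \;\leq\; \prob_i(1-\prob_i) M_i^2 \;\leq\; \prob_i \cdot \bigl(\prob_i^{-1} w_P(p_i) R^z\bigr)^2 \;=\; \prob_i^{-1} w_P(p_i)^2 R^{2z} \;\leq\; \frac{N w_P(p_i) R^{2z}}{T},
\end{equation*}
where the last step uses $\prob_i^{-1} w_P(p_i) \leq N/T$. Summing gives $\sum_{i=1}^n \sigma_i^2 \leq N^2 R^{2z}/T$.

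Finally, applying \Cref{lem:martingale-concentrate} with $t = \varepsilon N r^z/4 = \Theta(\varepsilon_0 N R^z)$ yields
\begin{equation*}
\Pr\bigl[|f - E[f]| > t\bigr] \;\leq\; 2\exp\!\left(-\frac{t^2}{2\bigl(N^2 R^{2z}/T + (NR^z/T)\,t/3\bigr)}\right) \;\leq\; 2\exp(-\Omega(\varepsilon_0^2\, T)),
\end{equation*}
which is at most $\delta/8$ given $T = \poly(2^z dk \varepsilon^{-1}\log(w(P)\Delta\varepsilon^{-1}\delta^{-1}))$ chosen by \textsc{RingCoreset}. The main obstacle is the bounded-difference step: one must carefully argue it via the flow interpretation, being mindful that adding $p_i$ can flip between the regimes $w(S)\leq\Gamma(C)$ and $w(S)\geq\Gamma(C)$ in the definition of $\sim'$, and hence change which side of the flow network is saturated. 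Once this is handled, the variance and concentration computations are routine.
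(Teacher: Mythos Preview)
Your proposal is correct and follows essentially the same approach as the paper: the paper also parameterizes $\gencost(S,C,\Gamma)$ as a function of the (two-valued) independent random variables, establishes the bounded-difference constant $M = NR^z/T$ via the flow-network interpretation, derives the per-coordinate variance bound $\sigma_i^2 \leq N w_P(p_i) R^{2z}/T$, and then applies \Cref{lem:martingale-concentrate} with $t = \varepsilon_0 N R^z/4$ to obtain the $2\exp(-\Omega(\varepsilon_0^2 T))$ tail. The only cosmetic difference is that the paper takes the coordinates to be the weights $x_i \in \{0,\vsum_i/T\}$ rather than the indicators $X_i$, and you flag the regime-switch subtlety in the $\sim'$ definition more explicitly than the paper does.
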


\begin{proof}
    Define function $f:\mathbb{R}_{\geq 0}^n \to \mathbb{R}_{\geq 0}$ as $f(x_1, x_2, \ldots , x_n) := \gencost(R,C,\Gamma)$ where $R := \{p_i:i\in [n],x_i>0\}$ and $w_R(p_i) := x_i$. 

    Define random vector $\mathbf{x}=(x_1, x_2, \ldots, x_n)$ be the vector representation of $S$, i.e. 
    $$x_i = \begin{cases}
        w_S(p_i) & p_i \in S \\
        0 & p_i \notin S
    \end{cases}.$$
    Let $\mathcal{D}$ be the distribution of the weights $\mathbf{x}$, where 
    $$x_i = \begin{cases}
        w_S(p_i) & p_i \in S \\
        0 & p_i \notin S
    \end{cases},$$
    and $S$ is the output of $\textsc{RingCoreset}(P)$. 

    By the definition of $\mathbf{x}$, $x_1,x_2,\ldots ,x_n$ are independent random variables. If $\prob_i=1$, then $x_i$ always equals to $w_P(p_i)$. Otherwise, $x_i$ has a two-point distribution
    $$x_i = \begin{cases}
        \frac{\vsum_i}{T} & \text{w.p. $\prob_i$} \\
        0 & \text{w.p. $1-\prob_i$}
    \end{cases}.$$

    We assume $\prob_i<1$ in the following discussion since the conclusions obviously hold when $\prob_i=1$. 

    Suppose every $x_1,x_2,\ldots ,x_n$ is fixed except $x_i$.
    In $f(x_1,x_2,\ldots ,x_n)$, adding $x_i$ by $1$ corresponds to adding $w_S(p_i)$ by $1$, which increases $\cost'(S,C,\Gamma)$ by at most $R^z$ (recall that every partial assignment function can be viewed as a maximum flow of certain flow network). 
    Hence the difference of $f(x_1,\ldots ,x_{i-1},0,\ldots ,x_n)$ and $f(x_1,\ldots ,x_{i-1},\frac{\vsum_i}{T},\ldots ,x_n)$ is at most $R^z \frac{\vsum_i}{T} \leq R^z \frac{N}{T}$. 

    Define $f^{(i)}(z_1,z_2,\ldots ,z_i) = E[f(\mathbf{x}) \mid x_1=z_1,\ldots ,x_i=z_i]$. 
    To give an upper bound of $\mathrm{Var}(f^{(i)}(z_{1\ldots i-1}, x_i))$ ($\mathbf{z} \in \mathbb{R}^n, \Pr_{\mathbf{x}\sim \mathcal{D}}[\mathbf{x}=\mathbf{z}] >0$), we first consider the difference between $F_0 := f^{(i)}(z_{1\ldots i-1},0)$ and $F_1 := f^{(i)}(z_{1\ldots i-1},\vsum_i / T)$. By the arguments above, we have
    \begin{align*}
        |F_0-F_1|&\leq \sum_{z_{i+1},\ldots ,z_n\in \mathbb{R}}\Pr[x_{i+1\ldots n}=z_{i+1\ldots n}]\cdot \left|f(z_{1\ldots i-1},0,z_{i+1\ldots n})-f\left(z_{1\ldots i-1},\frac{\vsum_i}{T},z_{i+1\ldots n}\right)\right| \\
        &\leq \frac{R^z \vsum_i}{T}.
    \end{align*}
    Then we have
    \begin{align*}
        \mathrm{Var}(f^{(i)}(z_{1\ldots i-1}, x_i)) &= \prob_i F_1^2 + (1-\prob_i)F_0^2 - (\prob_i F_1 + (1-\prob_i) F_0)^2 \\
        &= \prob_i (1-\prob_i) (F_0 - F_1)^2  \\
        &\leq \prob_i (R^z \vsum_i / T)^2 \\
        &\leq \frac{R^{2z}N}{T} w_P(p_i).
    \end{align*}  
    
    Taking $M = R^z N/T$ and $\sigma_i^2 = \frac{R^{2z}N}{T} w_P(p_i)$ in \Cref{lem:martingale-concentrate}, we have
\begin{align*}
\Pr\big[\gencost(S,C,\Gamma) &\leq E[\gencost(S,C,\Gamma)] - \frac{\varepsilon_0}{4} NR^z\big] \\
\leq ~~ & 2\exp\left(\frac{\varepsilon_0^2 N^2 R^{2z}}{32(\sum_{i=1}^n (R^{2z} Nw_P(p_i)/T) + \varepsilon_0 MNR^z/3)} \right)    \\
\leq ~~ & 2\exp\left(\frac{\varepsilon_0^2 T}{32} \right)    \\
\leq ~~ & \delta/8.
\end{align*}
\end{proof}

\lemerroutori*

\begin{proof}
    By applying union bound to results of \Cref{lem:error-output-le-ori}, \Cref{lem:error-Eoutput-ge-ori} and \Cref{lem:Eoutput-concentration}, we get
    $\Pr[|\gencost(S, C, \Gamma) - \cost(P,C,\Gamma)| \leq \varepsilon N r^z/2] \geq 1 - \delta/2.$ This concludes the proof. 
\end{proof}

\subsection{Proof of \texorpdfstring{\Cref{lem:error-output-real}}{}}
\label{sec:proof_lem_error_output_real}

\lemerroutreal*

\begin{proof}

Taking $\varepsilon' = \frac{\varepsilon_0}{4}, \delta' = \delta/2, W = 1, \forall 1\leq i\leq n,\alpha_i = w_P(p_i)$ in \Cref{lem:output-ori-supportlem}, we have
\begin{align*}
    \Pr\left[\frac{w(S)}{N} \in 1\pm \frac{\varepsilon_0}{4}\right] &= \Pr\left[\sum_{p_i\in S} w_P(p_i)\prob_i^{-1} X_i \in \left(1 \pm \frac{\varepsilon_0}{4}\right) \sum_{p_i\in P}w_P(p_i)\right] \\
    &\geq 1 - \delta/2.
\end{align*}

Let $\alpha = \frac{w(S)}{N}$. 
Conditioning on $\alpha \in 1\pm \frac{\varepsilon_0}{4}$, 
we show that $|\gencost(S,C,\Gamma) - \cost(S',C,\Gamma)| \leq \varepsilon_0 N R^z/2$ always holds, where $S'$ contains all elements in $S$ and the weights is defined by $w_{S'}(x) = \alpha^{-1}w_S(x)$ for every $x\in S$. 

For every partial assignment function $\sigma$ with respect to $S,C$ and partially consistent with $\Gamma$, and assignment function $\sigma'$ with repsect to $S',C$ and consistent with $\Gamma$, we have $\sum_{x\in P}\sum_{c\in C} \sigma'(x,c) = N$ and $\sum_{x\in P}\sum_{c\in C}\sigma(x,c)=\min(w(P), N)=\min(\alpha, 1)N$. In the following construction, we use $\min(\alpha, 1)$ to normalize the weight of $\sigma$. 

First we prove that $\gencost(S,C,\Gamma) \leq \cost(S',C,\Gamma) + \varepsilon N r^z/2$. Let $\sigma'$ be the optimal assignment function in $\cost(S',C,\Gamma)$. We define the partial assignment function $\sigma$ as $\sigma(p,c) = \min(\alpha,1)\sigma'(p,c)$ for every $p\in P, c\in C$. We can verify that $\sigma$ is partially consistent with $\Gamma$. 
\begin{align*}
    \gencost(S,C,\Gamma) &\leq \sum_{p\in P} \sum_{c\in C} \sigma(p,c) \dist(p,c)^z \\\
    &\leq \min(\alpha,1) \cost(S',C,\Gamma)\\
    &\leq \cost(S',C,\Gamma).
\end{align*}

Next we prove that $\cost(S',C,\Gamma) \leq \gencost(S,C,\Gamma) + \varepsilon N r^z/2$. Let $\sigma$ be the optimal partial assignment function in $\gencost(S,C,\Gamma)$. 
We initialize the assignment function $\sigma'$ as $\sigma'(p,c) = \max(\alpha^{-1},1)\sigma(p,c)$. However, $\sigma'$ may not be consistent with $\Gamma$. In order to obtain an assignment function consistent with $\Gamma$, we apply the following modifications to $\sigma'$. 
\begin{itemize}
    \item Let $S'_+ = \{ x\in S' : \sum_{c\in C} \sigma'(x,c) > w_{S'}(x) \}$, $S'_- = \{ x\in S' : \sum_{c\in C} \sigma'(x,c) < w_{S'}(x) \}$. For every center $c\in C$, we re-assign weights from $\sigma'(x,c)$ ($x\in S'_+$) to $\sigma'(x',c)$ ($x'\in S'_-$). Repeating this process until $\sum_{c\in C} \sigma'(x,c) = w_{S'}(x)$ for every $x\in P$. This step costs at most $R^z\sum_{x\in S'}|w_{S'}(x)-\sum_{c\in C}\max(\alpha^{-1},1)\sigma(x,c)|$. 
    \item Let $C_+ = \{ c\in C : \sum_{x\in P} \sigma'(x,c) > \Gamma(c) \}$, $C_- = \{ c\in C : \sum_{x\in P} \sigma'(x,c) < \Gamma(c) \}$. Notice that the previous step does not affect $\sum_{x\in P} \sigma'(x,c)$. For every point $x\in S'$, we re-assign weights from $\sigma'(x,c)$ ($c\in C_+$) to $\sigma'(x,c')$ ($c'\in C_-$). Repeating this process until $\sum_{x\in S'} \sigma'(x,c) = \Gamma(c)$ for every $c\in C$. This step costs at most $R^z\sum_{c\in C}|\Gamma(c)-\max(\alpha^{-1},1)\sum_{x\in S'}\sigma(x,c)|$. 
\end{itemize}

After these modifications, the cost of $\sigma'$ is bounded by
\begin{align*}
    \mathrm{cost}^{\sigma'}&(P,C,\Gamma)-\gencost(P,C,\Gamma) \\
    \leq~~ & R^z\sum_{x\in S'}\left|w_{S'}(x)-\max(\alpha^{-1},1)\sum_{c\in C}\sigma(x,c)\right| + R^z\sum_{c\in C}\left|\Gamma(c)-\max(\alpha^{-1},1)\sum_{x\in S'}\sigma(x,c)\right|
\end{align*}

If $w(S)\geq N$ (i.e. $\alpha^{-1}\leq 1$), then $\sum_{x\in S'}\sigma(x,c) = \Gamma(c)$ holds for every center $c\in C$.
\begin{align*}
    \mathrm{cost}^{\sigma'}(P,C,\Gamma)&-\gencost(P,C,\Gamma) \\
    \leq~~ & R^z\sum_{x\in S'}\left|w_{S'}(x)-\sum_{c\in C}\sigma(x,c)\right| + R^z\sum_{c\in C}\left|\Gamma(c)-\sum_{x\in S'}\sigma(x,c)\right| \\
    =~~& R^z\sum_{x\in S'}\left|w_{S'}(x)-\sum_{c\in C}\sigma(x,c)\right| \\
    \leq~~& R^z\sum_{x\in S'}(w_S(x)-w_{S'}(x))+\left(w_S(x)-\sum_{c\in C}\sigma(x,c) \right) \\
    \leq~~& 2R^z(w(S)-N)\leq \frac{\varepsilon_0}{2}NR^z
\end{align*}

If $w(S)<N$ (i.e. $\alpha^{-1}>1$), then $\sum_{c\in C}\sigma(x,c)=w_S(x)$ holds for every $x\in P$. 
\begin{align*}
\mathrm{cost}^{\sigma'}(P,C,\Gamma)&-\gencost(P,C,\Gamma) \\
    \leq~~ & R^z\sum_{x\in S'}\left|w_{S'}(x)-\alpha^{-1}\sum_{c\in C}\sigma(x,c)\right|+R^z\sum_{c\in C}\left|\Gamma(c)-\alpha^{-1}\sum_{x\in S'}\sigma(x,c)\right| \\
=~~& R^z \alpha^{-1}\sum_{x\in S'}\left(w_{S}(x)-\sum_{c\in C}\sigma(x,c)\right)+R^z\sum_{c\in C}\left|\Gamma(c)-\alpha^{-1}\sum_{x\in S'}\sigma(x,c)\right| \\
=~~& R^z\sum_{c\in C}\left|\Gamma(c)-\alpha^{-1}\sum_{x\in S'}\sigma(x,c)\right| \\
=~~& R^z\sum_{c\in C}(\alpha^{-1}-1)\Gamma(c)+\alpha^{-1}\left(\Gamma(c)-\sum_{x\in S'}\sigma(x,c)\right) \\
=~~& 2R^z(\alpha^{-1}-1)N \leq \frac{\varepsilon_0}{2} NR^z.
\end{align*}

Summing up together we get $\cost(S',C,\Gamma) \leq \gencost(S,C,\Gamma) + \varepsilon_0 N R^z/2$. This concludes the proof. 
    
\end{proof}

\end{document}